\theoremstyle{plain}
\newtheorem{theorem}{Theorem}
\newtheorem{ass}{Assumption}
\newtheorem{rmk}{Remark}
\newtheorem{prp}{Proposition}
\newtheorem{lem}{Lemma}
\newtheorem{definition}{Definition}
\newcommand{\Qset}{\mathbb{Q}}
\newcommand{\Rset}{\mathbb{R}}
\newcommand{\mbf}[1]{\mathbf{#1}}                  % bold symbol
\newcommand{\subss}[2]{{#1}_{[#2]}}
\newcommand{\matr}[1]{
\begin{bmatrix}
    #1
\end{bmatrix}
}
\begin{document}
	\title{\LARGE \bf A consensus-based secondary control layer for stable current sharing and voltage balancing in DC microgrids}
        % make the title area
          \author[1]{Michele Tucci%
       \thanks{Electronic address:
         \texttt{michele.tucci02@universitadipavia.it}}}
        \author[2]{Lexuan Meng%
       \thanks{Electronic address: \texttt{menglexuan@gmail.com}; Corresponding author}} 
\author[2]{Josep M. Guerrero%
       \thanks{Electronic address: \texttt{joz@et.aau.dk}} }
\author[3]{Giancarlo Ferrari-Trecate%
       \thanks{Electronic address: \texttt{giancarlo.ferraritrecate@epfl.ch}} }
 \affil[1]{Dipartimento di Ingegneria Industriale e
       dell'Informazione\\Universit\`a degli Studi di Pavia, Italy}
     \affil[2]{Institute of Energy Technology, Aalborg University, Denmark}
	\affil[3]{Automatic Control Laboratory, \'Ecole Polytechnique F\'ed\'erale de Lausanne (EPFL), Switzerland}
     \date{\textbf{Technical Report}\\ March, 2016}

     \maketitle

  \begin{abstract}
  	In this paper, we propose a secondary consensus-based control
  	layer for current sharing and voltage balancing in DC microGrids
  	(mGs). To this purpose, we assume that Distributed Generation Units (DGUs) are equipped with decentralized primary controllers guaranteeing voltage stability. This goal can be achieved using, for instance, Plug-and-Play (PnP) regulators.
  	We analyze the behavior of the
  	closed-loop mG by approximating local primary control loops with
  	either unitary gains or first-order transfer functions. Besides
  	proving exponential stability, current sharing, and voltage balancing, we describe how to design secondary controllers in
  	a PnP fashion when DGUs are added or
  	removed. Theoretical results are complemented by simulations, using a 7-DGUs mG implemented in Simulink/PLECS, and experiments on a 3-DGUs mG.
  	
  \end{abstract}
  
\newpage

\section{Introduction}
\label{sec:intro}
Power generation and distribution are rapidly changing due to the
increasing diffusion of renewable energy sources, advances in energy
storage, and active participation of consumers to the energy market
\cite{ipakchi2009grid, hu2016greener,justo2013ac}. This shift of paradigm has motivated the
development of migroGrids (mGs), commonly recognized as small-scale
power systems integrating Distributed Generation Units (DGUs), storage
devices and loads. 
Since AC power generation is the standard for commercial, residential,
and industrial utilization, several studies focused on the control of AC mGs
\cite{guerrero2013advanced,riverso2014plug,bolognani2013distributed, 7097655, 7500071}. However,
nowadays, DC energy systems are gaining interest
\cite{dragicevic2015dc,elsayed2015dc} because of the
increasing number of DC loads, the availability of efficient
converters, and the need of interfacing DC energy sources and
batteries with minimal power losses \cite{justo2013ac,elsayed2015dc, fairley2012dc}. As reviewed in \cite{elsayed2015dc}, DC mGs are
becoming more and more popular in several application domains, such as
avionics, automotive, marine and residential systems \cite{elsayed2015dc}. 

The basic
problems in control of DC mGs are  voltage stabilization
\cite{dragicevic2015dc,tucci2015decentralized,zhao2015distributed,de2016power,cezar2015stability,hamzeh2016power, zonetti2014globally} and
current sharing (or, equivalently, load sharing), the latter meaning
that DGUs must compensate constant load currents proportionally to given parameters (for example, the converter ratings) and independently of the mG topology and line
impedances. Current sharing is crucial for preserving the safety
of the system, as unregulated currents may overload generators and eventually lead to failures or system
blackout \cite{han2016review}. An
additional desirable goal is voltage balancing, i.e. to keep the average output
voltage of DGUs close to a prescribed level. 
Indeed, load devices are designed to be supplied by a nominal
reference voltage: it is therefore
important to ensure that the voltages at the load buses are spread around this value.

These objectives can be realized using hierarchical control structures. 
In particular, current sharing is often realized by coupling secondary-layer consensus algorithms with primary voltage controllers. Secondary regulators, however, can have a detrimental effect and closed-loop stability of the mG must be carefully analyzed. In order to guarantee stability, several existing approaches focus on specific mG topologies or provide centralized control design algorithms, i.e. the synthesis of a local controller for a DGU requires knowledge about all other DGUs and lines. For instance,  in \cite{behjati2014modular,moayedi2015team} only mG with a bus topology are considered and local control design is performed in a centralized fashion. The synthesis procedures proposed in \cite{6816073,8026170} suffer from similar limitations. Indeed, in these works, the computation of local controllers guaranteeing collective stability relies on the knowledge of a global closed-loop transfer function matrix, and the largest eigenvalue of the Laplacian matrix associated with the communication network.
	Consensus-based secondary controllers for mG with general topologies have been presented in \cite{shafiee2014distributed_b}; also in this case, however, stability and the closed-loop mG equipped with primary and secondary control layers is studied through centralized analysis (i.e. the root locus), or via simulations \cite{dragicevic2015dc}. Similar drawbacks affect the approach in \cite{meng2015modeling}, whereas, in \cite{Setiawan2017}, current sharing is achieved by means of a centralized controller which receives informations from all the DGUs in the mG.
All the synthesis algorithms mentioned above become prohibitive for large mGs. Moreover, they are unsuitable for mGs with flexible structure because, to preserve voltage stability, the plugging-in
or -out of DGUs might require to update all local controllers in
the mG. This motivated the development of \emph{scalable} design procedures
for local controllers as in \cite{tucci2015decentralized, tucci2016improved,zhao2015distributed}.
In \cite{tucci2015decentralized} and \cite{tucci2016improved}, the aim is to stabilize the  voltage only via primary decentralized controllers. These regulators, termed Plug-and-Play (PnP) have the following features: (\textit{i}) the existence of a local controller for a DGU can be tested on local hardware, and control design is cast into an optimization problem, (\textit{ii}) each optimization problem exploits information about the DGU only \cite{tucci2016improved} or, at most, the power lines connected to it \cite{tucci2015decentralized}, and (\textit{iii}) when a DGU is plugged-in, no other DGUs \cite{tucci2016improved}, or at most neighboring DGUs \cite{tucci2015decentralized}, must update their local controllers.

\paragraph{Paper contributions.}
In this paper, we present a secondary regulation scheme for achieving stable current sharing and voltage
	balancing in DC mGs. We assume that the proposed higher-level scheme is build on top of a primary stabilizing voltage control layer; moreover, similarly to
	\cite{zhao2015distributed}, at the secondary level we exploit consensus filters requiring DGUs to communicate in real-time over a connected network. 
	There is, however, a key difference between our approach and the one presented in \cite{zhao2015distributed}. In the latter work, the authors assume DGUs to be controllable \textit{current} sources; on the contrary, we consider primary-controlled DGUs behaving as \textit{voltage} generators. In this setting, by properly choosing the reference values for the output voltages of each DGU, one can always regulate the currents that flow through the power lines, even when load conditions change. In this way, unwanted circulating currents (which may open the safety breakers in order to prevent damages on the devices) can be avoided. Instead, if all the DGUs in the mG are ideal current sources, the primary regulation scheme alone cannot guarantee such control on the flowing currents \cite{zhao2015distributed}. 
	
	At the modeling level, we propose two abstractions for the DGUs controlled with primary voltage regulators: unit gain and first-order transfer function approximation. The first one is used only for tutorial purposes and for developing basic mathematical tools that will allow us to extend the key results to the second (more realistic) approximation of the primary loops.
	
	Another contribution of this paper is the study of the eigenstructure of the product of three matrices ($\mathbb{L} D \mathbb{M}$), where (\textit{i}) $\mathbb{L}$ and $\mathbb{M}$ are the graph Laplacians associated with the electrical
	and the communication graphs, respectively, and (\textit{ii}) $D$ is a diagonal positive definite matrix defining the desired ratios between balanced currents. While several studies focused on the properties of the product of stochastic
	matrices (see e.g. \cite{jadbabaie2003coordination}), which are
	central in discrete-time consensus, to our knowledge weighted products of
	Laplacians received much less attention. In particular, we show that, under two different conditions, $\mathbb{L}D\mathbb{M}$ preserves some key features of Laplacian matrices. In this case, the asymptotic achievement of current sharing and voltage balancing in a globally exponentially stable fashion is proved.
	
	Finally, we provide an experimental validation of our approach using a lab-size DC mG. The results show the robustness of the proposed controllers to non-idealities that are unavoidably present in a real mG.

\paragraph{Paper organization.}	
The paper is organized as
follows. Section \ref{sec:PnP_primary} summarizes the electrical model
of DGUs and PnP controllers. The secondary control layer is developed
and analyzed in Sections \ref{sec:consensus} and
\ref{sec:modeling_and_analysis}. In particular, Section \ref{sec:pnp_sec_ctrl} shows that, similarly to the regulators in \cite{tucci2015decentralized} and \cite{tucci2016improved}, secondary controllers can be designed in a PnP fashion. Section \ref{sec:simulations} demonstrates current
sharing and voltage balancing through simulations in Simulink/PLECS
\cite{allmeling2013plecs}, where non-idealities of real converters
and lines have been taken into account. Finally, in Section \ref{sec:experiments} we present experimental tests performed on a real DC mG.

A preliminary version of the paper will be presented at the 20th IFAC World Congress. Different from the conference version, the present paper includes (\textit{i}) the proofs of Propositions 2-5 and Theorem 1, (\textit{ii}) the more realistic case where primary control loops are approximated with first-order transfer functions, and (\textit{iii}) experimental results and more detailed simulations.

\paragraph{Notation and basic definitions.} 
The cardinality of the finite set $S$ will be denoted with $|S|$. A
weighted directed graph (\emph{digraph}) $\mathcal G = (\mathcal V, \mathcal E, W)$ is
defined by the set of nodes $\mathcal{V}=\{1,\dots, n\}$, the set of edges
$\mathcal E \subseteq  \mathcal V \times \mathcal V$ and the diagonal
matrix $W\in\mathbb{R}^{|\mathcal{E}|\times|\mathcal{E}|}$ with $W_{ii}
=w_i$, where $w_i\in\mathbb{R}$ is the weight associated with the edge $e_i\in\mathcal{E}$. The set of neighbors of node $i\in\mathcal{V}$ is $\mathcal{N}_i
=\{j:(i,j)\in\mathcal{E}\text{ or } (j,i) \in
\mathcal{E}\}$. A digraph $\mathcal{G}$ is \textit{weakly connected} if its undirected
version is connected \cite{lns-v.85}. $Q(\mathcal{G})\in\mathbb{R}^{|\mathcal{V}|\times|\mathcal{E}|}$ is the incidence matrix of $\mathcal{G}$
\cite{grone1990laplacian}. The Laplacian matrix of $\mathcal{G}$ is $\mathcal
L(\mathcal{G}) = Q(\mathcal{G})WQ(\mathcal{G})^T$, and it is independent of the orientation of edges. 

The average of a vector $v\in\mathbb{R}^n$ is $\langle v\rangle=\frac{1}{n}\sum_{i=1}^n v_i$. We denote with $H^1$ the subspace composed by all vectors with zero average \cite{bensoussan2005difference,1643380} i.e. $H^1 = \{v\in\mathbb{R}^n:\langle v\rangle = 0\}$. The space orthogonal to $H^1$ is $H_{\perp}^1$. It holds $H_{\perp}^1 =\{\alpha\mathbf{1}_n,\text{ }\alpha\in\mathbb{R}\}$ and dim$(H_{\perp}^1)=1$. Moreover, the decomposition $\mathbb{R}^n = H^1\oplus H_{\perp}^1$ is direct, i.e. each vector $v\in\mathbb{R}^n$ can always be written in a unique way as
\begin{equation}
\label{eq:decomposition}
v = \hat v + \bar v \hspace{3mm}\text{with }\hat v\in H^1\text{ and }\bar v\in H^1_{\perp}.
\end{equation}
Consider the matrix $A\in\mathbb{R}^{n\times n}$. With
$A(H^1|H^1)$ we indicate the linear map $A:H^1\rightarrow H^1$
(i.e. the restriction of the map
$A:\mathbb{R}^n\rightarrow\mathbb{R}^n$ to the subspace $H^1$). For a
subspace $\mathcal{V}\subset\mathbb{R}^n$, we denote with
$P_{\mathcal{V}}(v)$ the projection of $v\in\mathbb{R}^n$ on
$\mathcal{V}$. The subspace $\mathcal{V}\subset\mathbb{R}^n$ is said to be
$A$\textit{-invariant} if $v\in\mathcal{V}\Rightarrow A v\in\mathcal{V}$. Moreover, with $\lambda_i(A)$, $i=1, \dots, n$, we denote the eigenvalues of $A$.

Let $A\in\mathbb{R}^{n\times n}$ be a matrix with real eigenvalues. The \textit{inertia} of $A$ is the triple $i(A) = (i_{+}(A), i_{-}(A), i_{0}(A))$, where $i_{+}(A)$ is the number of positive eigenvalues of $A$,  $i_{-}(A)$ is the number of negative eigenvalues of $A$, and  $i_{0}(A)$ is the number of zero eigenvalues of $A$, all counted with their algebraic multiplicity \cite{hong1991jordan}. We use $A>0$ for indicating that the real symmetric matrix $A$ is positive-definite.

Laplacian matrices have the key properties summarized in the next
Proposition \cite{agaev2005spectra,godsil2001algebraic,bensoussan2005difference}.
\begin{prp}
	\label{pr:laplacian_prop}
	For a weakly connected graph $\mathcal{G}$ with weights $w_i>0$, $A = \mathcal{L}(\mathcal{G})\in\mathbb{R}^{n\times n}$ has the following properties:
	\begin{enumerate}[(i)]
		\item \label{lapl_1}it has non positive off-diagonal elements;
		\item \label{lapl_2} $\lambda_1(A)\geq\dots\geq\lambda_{n-1}(A)\geq 0 = \lambda_n$;
		\item \label{lapl_3} $\mathrm{Ker}(A)= H_{\perp}^1$ and $\mathrm{Range}(A)= H^1$;
		\item \label{lapl_4} $A(H^1|H^1)$ is invertible.
	\end{enumerate}
\end{prp}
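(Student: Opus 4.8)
The plan is to work directly from the factorization $A=\mathcal{L}(\mathcal{G})=QWQ^T$, where $Q=Q(\mathcal{G})$ is the incidence matrix, each of whose columns has exactly one entry equal to $+1$ and one equal to $-1$, located at the endpoints of the corresponding edge, and zeros elsewhere. Property (i) then follows by expanding $A_{ij}=\sum_k Q_{ik}w_kQ_{jk}$ for $i\neq j$: only the edges joining $i$ and $j$ contribute, and for each of them $Q_{ik}Q_{jk}=-1$ irrespective of orientation, so $A_{ij}=-\sum_{k:\,e_k\text{ joins }i,j} w_k\le 0$ since all $w_k>0$.

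For (ii), since $W>0$ is diagonal it has an invertible real square root $W^{1/2}$, so $A=(QW^{1/2})(QW^{1/2})^T$ is symmetric positive semidefinite; hence its eigenvalues are real and nonnegative, which is exactly the chain $\lambda_1(A)\ge\cdots\ge\lambda_n(A)\ge 0$. To pin the smallest one to $0$, observe that every column of $Q$ sums to zero, i.e. $Q^T\mathbf{1}_n=0$, whence $A\mathbf{1}_n=0$ and therefore $\lambda_n(A)=0$. This also yields $\mathrm{span}(\mathbf{1}_n)=H^1_\perp\subseteq\mathrm{Ker}(A)$, one half of (iii). For the reverse inclusion, if $Av=0$ then $0=v^TAv=\|W^{1/2}Q^Tv\|_2^2$, so $Q^Tv=0$ because $W^{1/2}$ is invertible; but $Q^Tv=0$ says $v_i=v_j$ along every edge, and weak connectedness of $\mathcal{G}$ — i.e. connectedness of its undirected version — then forces $v$ to be constant, i.e. $v\in H^1_\perp$. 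Thus $\mathrm{Ker}(A)=H^1_\perp$, and since $A$ is symmetric, $\mathrm{Range}(A)=\mathrm{Ker}(A)^\perp=(H^1_\perp)^\perp=H^1$, which completes (iii).

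Finally, for (iv): by (iii) we have $\mathrm{Range}(A)=H^1$, so $H^1$ is $A$-invariant and the restriction $A(H^1|H^1)$ is well defined; its kernel is $\mathrm{Ker}(A)\cap H^1=H^1_\perp\cap H^1=\{0\}$ because the decomposition $\mathbb{R}^n=H^1\oplus H^1_\perp$ is direct, so $A(H^1|H^1)$ is an injective endomorphism of a finite-dimensional space and hence invertible.

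The argument is essentially routine; the only points that require a little care are the sign bookkeeping in (i) when $i$ and $j$ are joined by several parallel edges (the product $Q_{ik}Q_{jk}$ is still $-1$ for each of them, independently of how the edges are oriented), and invoking \emph{weak} connectedness — not strong connectedness of the digraph — at precisely the right moment in (iii), since $Q^Tv=0$ constrains $v$ only across the unoriented edge set, which is also the reason the Laplacian does not depend on edge orientation. I do not anticipate any genuine obstacle.
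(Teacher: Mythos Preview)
Your proof is correct. The paper itself defers items (i)--(iii) to references and only supplies a proof of (iv); your argument for (iv) is essentially the same as the paper's (kernel of the restriction is $\mathrm{Ker}(A)\cap H^1=\{0\}$), except that the paper separately spells out surjectivity via a rank count whereas you appeal directly to the finite-dimensional rank--nullity principle. Your self-contained treatment of (i)--(iii) from the factorization $A=QWQ^T$ is a genuine addition over what the paper writes out, and is the cleanest way to handle those points.
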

\begin{proof}
	Points (\ref{lapl_1})-(\ref{lapl_3}) are shown, e.g. in \cite{agaev2005spectra, godsil2001algebraic}. Point (\ref{lapl_4}) has been shown in \cite{bensoussan2005difference} with the framework of partial difference equations. Next, we provide a proof based on linear algebra only. We start noticing that the linear map $A(H^1|H^1)$ invertible if it is surjective and injective \cite{lang1987linear}. First, we show the surjectivity of $A$ on $H^1$. By construction, rank$(A) = n-1$ because $$\text{rank}(A) = \mathrm{dim}(\mathrm{Range}(A))=\mathrm{dim}(\Rset^n)-\mathrm{dim}(H_{\perp}^1)=n-1.$$ Moreover, since $A$ is symmetric, each column of $A$ has zero sum and hence is a vector in $H^1$. Since $\mathrm{Range}(A)$ is the column span, then $\mathrm{Range}(A)\subseteq H^1$. Since dim$(H^1)=n-1$, one obtains $\mathrm{Range}(A)=H^1$. This proves that the map $A(H^1|H^1)$ is surjective. \\
	Next, we prove that $A(H^1|H^1)$ is also injective. By definition, this holds if
	\begin{equation*}
	\forall b\in H^1\hspace{3mm}\forall x,y \in H^1\hspace{3mm}(Ax=b \mbox{ and } Ay=b)\Rightarrow x=y.
	\end{equation*}
	Now, $Ax =Ay = b$ implies that $A(x-y) = 0$. It means that $x-y\in \mathrm{Ker}(A)$, therefore $\exists\alpha\in{\Rset}$ such that $x-y = \alpha \mbf{1}_n$. However, since $x-y\in H^1$, $x-y = \alpha \mbf{1}_n$ is verified only for $\alpha = 0$; this leads to $x = y$.  
\end{proof}

\section{Plug-and-play primary voltage control}
\label{sec:PnP_primary}

\subsection{DGU electrical model}
\label{sec:mG_model}
As in \cite{tucci2015decentralized}, we consider a DC mG composed of
$N$ DGUs, whose electrical scheme is shown
in Figure \ref{fig:ctrl_part}. In each DGU, the generic renewable
resource is modeled as a battery and a Buck converter is
used to supply a local load connected to the Point of Common Coupling
(PCC) through an $RLC$ filter. 
Furthermore, we assume
that loads $I_{Li}$ are unknown and treated as current disturbances
\cite{tucci2015decentralized,tucci2016improved}. The controlled variable is the voltage at
each PCC, whereas the control input is the command to the Buck (indicated with $V_{ti}$). From Figure \ref{fig:ctrl_part}, by applying Kirchhoff's
voltage and current laws and exploiting Quasi Stationary Line (QSL)
approximation of power lines \cite{tucci2015decentralized,tucci2016improved}, we obtain
the following model of DGU $i$\footnote{For the detailed model
	derivation, we defer the reader to \cite{tucci2015decentralized}.}

\begin{equation*}
\label{eq:newDGU}
\text{DGU}~i:\hspace{-4.5mm}\quad\left\lbrace
\begin{aligned}
\frac{dV_{i}}{dt} &= \frac{1}{C_{ti}}I_{ti}+\sum\limits_{j\in\mathcal{N}_i}\left(\frac{V_j}{C_{ti} R_{ij}}-\frac{V_i}{C_{ti}R_{ij}}\right)-\frac{1}{C_{ti}}I_{Li}\\
\frac{dI_{ti}}{dt} &= -\frac{1}{L_{ti}}V_{i}-\frac{R_{ti}}{L_{ti}}I_{ti}+\frac{1}{L_{ti}}V_{ti}\\
\end{aligned}
\right.
\end{equation*}
where $(V_{ti},I_{Li})$ are the inputs, $(V_{i},I_{ti})$ the states (i.e. the measured voltage at the $i$-th PCC and $i$-th output current, respectively), $V_{j}$ is the voltage at the PCC of each neighboring DGU $j\in\mathcal{N}_i$, the constants $R_{ti},C_{ti},L_{ti}$ identify the electrical parameter of the $i$-th Buck filter, and $R_{ij}$ is the conductance of the power line connecting DGUs $i$ and $j$ (see Figure \ref{fig:ctrl_part}).

\subsection{Plug-and-play regulators}
\label{sec:pnp_design}
In this Section, we briefly summarize the PnP scalable approach in
\cite{tucci2015decentralized,tucci2016improved} for designing primary decentralized controllers guaranteeing
voltage stability in DC mGs. This will allow us to justify the approximations of primary control loops used in Section \ref{sec:modeling_and_analysis}. Moreover, we will describe local updates that must be performed when DGUs are added or removed. These operations will be mirrored by those described in Section \ref{sec:pnp_sec_ctrl} for updating local secondary controllers, hence showing that both the primary and secondary control layer can be designed in a modular and scalable fashion. 

The local regulator of DGU $i$
exploits measurements of $V_i$ and $I_{ti}$ to compute the command
$V_{ti}$ of the $i$-th Buck converter and make $V_{i}$ track a
reference signal $V_{ref,i}$ (see the scheme in Figure
\ref{fig:ctrl_part}). Each controller is composed of a vector
gain $K_i$ and an integral action is present for offset-free voltage
tracking. The decentralized computation of these vector gains is the
core of PnP controller synthesis: (\textit{i}) the design of $K_{i}$ requires
knowledge of the dynamics of DGU $i$ only \cite{tucci2016improved} or, at most, the parameters of power lines connecting it to its neighbors \cite{tucci2015decentralized}, (\textit{ii}) $K_i$ is automatically
obtained by solving a Linear Matrix Inequality (LMI) problem.

For modeling the \textit{electrical} interactions between multiple DGUs, we represent the mG
with a digraph $\mathcal{G}_{el}=(\mathcal{V}, \mathcal{E}_{el}, W)$ (see the example in Figure \ref{fig:5areasplug}), where (\textit{i}) each node is a DGU with
local PnP controller and local current load, (\textit{ii}) edges $(i,j)$ are
power lines whose orientation define a reference direction for
positive currents, (\textit{iii}) weights are line conductances\footnote{Line inductances $L_{ij}$ are neglected as we assume QSL approximations \cite{tucci2015decentralized, tucci2016improved}, which are reasonable in low-voltage mGs.} $\frac{1}{R_{ij}}$, and
(\textit{iv}) we set $N = |\mathcal{V}| $ and $M = |\mathcal{E}_{el}|$. 

Next, we describe how to handle plugging -in/-out of DGUs while preserving
the stability of the mG. Whenever a DGU (say DGU $i$) wants to join
the network (e.g. DGU 6 in Figure \ref{fig:5areasplug}), it sends a
plug-in request to its future neighbors, i.e. DGUs $j\in\mathcal{N}^{el}_i$
(e.g. DGUs 1 and 5 in Figure \ref{fig:5areasplug}). Then, DGU $i$ \cite{tucci2016improved} or each DGU in
the set $\{i\}\cup \mathcal{N}^{el}_i$ \cite{tucci2015decentralized} solves an LMI problem ((24) in
\cite{tucci2016improved} and (25) in \cite{tucci2015decentralized}) that, if feasible, gives a vector gain
$K_i$ guaranteeing voltage stability in the whole mG after the
addition of DGU $i$. Otherwise, if one of the LMIs is infeasible, the
plug-in of DGU $i$ is denied and no update of matrices $K_j$,
$j\in\mathcal{N}^{el}_i$ is performed. 

The unplugging of a DGU (say DGU $m$) follows a similar procedure. It is always allowed without redesigning any local controller \cite{tucci2016improved} or it requires to successfully update, at most, controllers of DGUs $k$, $k\in\mathcal{N}^{el}_m$ before allowing the disconnection of the DGU \cite{tucci2015decentralized}. 

\begin{figure}[!htb]
	\centering
	\ctikzset{bipoles/length=.55cm}
	\tikzstyle{every node}=[font=\sffamily]
	\begin{circuitikz}[scale=0.7]
		\draw (1,1)  to [battery, o-o](1,4)
		to [short](1.5,4)
		to [short](1.5,4.5)
		to [short](3.5,4.5)
		to [short](3.5,0.5)
		to [short](1.5,0.5)
		to [short](1.5,4)
		to [short](1.5,1)
		to [short](1,1);
		\node at (2.5,2.5){ \textbf{Buck $i$}};
		\draw[-latex] (4,1.25) -- (4,3.75)node[midway,right]{$V_{ti}$};
		\draw (3.5,4) to [short](4,4)
		to [short](4.5,4)
		to [R=$R_{ti}$] (6,4)
		to [L=$L_{ti}$] (7.5,4)
		to [short, i=$\textcolor{blue}{I _{ti}}$, -] (8.5,4)
		to [short](9,4) 
		to [C, l=$C_{ti}$, -] (9,1)
		to [short](4,1)
		to [short](3.5,1);
		\draw (12.3,4)  to [R=$R_{ij}$] (14.5,4)
		to [L=$L_{ij}$] (16,4)
		to [short, -o] (16.5,4) node[anchor=north,above]{$V_j$};
		\draw (8.5,4) to (11,4) 
		to [ I ] (11 ,1)
		to [short] (9,1)
		to [short, -o] (16.5,1); 
		\draw (11,4) to [short](11.5,4);
		\draw (11,4) node[anchor=north, above]{$\textcolor{red}{V_i}$}  to [short, i_=$I_{Li}$](11,2.9);
		\node at (11,4.6)[anchor=north, above]{$PCC_i$} ;
		\draw (11,4) to [short,i<=$I_{\ell k}$] (13,4) -- (13,4); 
		%highlighting the DG's
		\draw[black, dashed] (.5,.25) -- (12.45,.25) -- (12.45,5.5) -- (.5,5.5)node[sloped, midway, above]{{ \textbf{DGU $i$}}}  -- (.5,.25);
		\draw[black, dashed] (12.7,.25) -- (16,.25) -- (16,5.5) -- (12.7,5.5)node[sloped, midway, above]{{ \textbf{Power line $ij$}}}  -- (12.7,.25);
		\draw[red,o-] (10.9,4.15) -- (11.7,3) to (11.7,-1.5);
		\draw[red,latex-](8,-1.5)-- (9,-1.5) --  (10,-0.5)-- (11.7,-0.5);
		%From It to abc-dq0 block
		\draw[blue,o-latex] (8.5,4.15) to (8.5,1.75) -- (8.5,-1) to (8,-1);
		%Defining nodes
		\draw (10,-2) node(a) [black, draw,fill=white!20] {$\normalsize{\int}$};
		%from integrator to controller
		\draw[-latex] (a.west) to (8,-2);
		\draw (11.7,-2) node(b)[ circle, draw=black, minimum size=12pt, fill=lightgray!20]{};
		% From Vi to node sum
		\draw[red, -latex] (11.7,1.75)  -- (b.north) node[pos=0.9, left]{\textcolor{black}{\normalsize{-}}};
		%connecting nodes
		\draw[latex-] (a.east) -- (b.west);
		\draw[-latex] (12.8,-2)  -- (b.east) node[pos=0.7,above]{{+}} node[pos=0.25,right]{$V{ref,i}$};
		%		% abc-dq transformation block 
		%		\draw[fill=lightgray] (4.75,-0.5)    -- (6.25,-1.75) -- (4.75,-1.75) -- (4.75,-0.5);
		%		\draw[fill=lightgray] (6.25,-1.75) -- (6.25,-0.5)    -- (4.75,-0.5) -- (6.25,-1.75);
		%		\node at (5.1,-1.5) {$dq$}; 
		%		\node at (5.8,-.75) {$abc$}; 
		%block 2
		%		\draw[fill=lightgray] (7.75,-0.5)    -- (9.25,-1.75) -- (7.75,-1.75) -- (7.75,-0.5);
		%		\draw[fill=lightgray] (9.25,-1.75) -- (9.25,-0.5)    -- (7.75,-0.5) -- (9.25,-1.75);
		%		\node at (8.1,-1.5) {$dq$};
		%		\node at (8.8,-.75) {$abc$}; 
		%		%block 3
		%		\draw[fill=lightgray] (10.9,-0.5)    -- (12.4,-1.75) -- (10.9,-1.75) -- (10.9,-0.5);
		%		\draw[fill=lightgray] (12.4,-1.75) -- (12.4,-0.5)    -- (10.9,-0.5) -- (12.4,-1.75);
		%		\node at (11.25,-1.5) {$dq$}; 
		%		\node at (11.95,-.75) {$abc$}; 
		%Controller
		\draw[fill=lightgray] (8,-2.25) -- (8,-0.75) -- (6.5,-1.5) -- (8,-2.25);
		\node at (7.5,-1.5) {$K_i$};
		% From controller to the transformation block
		\draw[-latex] (6.5,-1.5) -- (5.5,-1.5) -- (5.5,1.5) -- (5.5,2.5) -- (5,2.5);				   
	\end{circuitikz}
	\caption{Electrical scheme of DGU $i$ and local PnP voltage controller.}
	\label{fig:ctrl_part}
\end{figure}
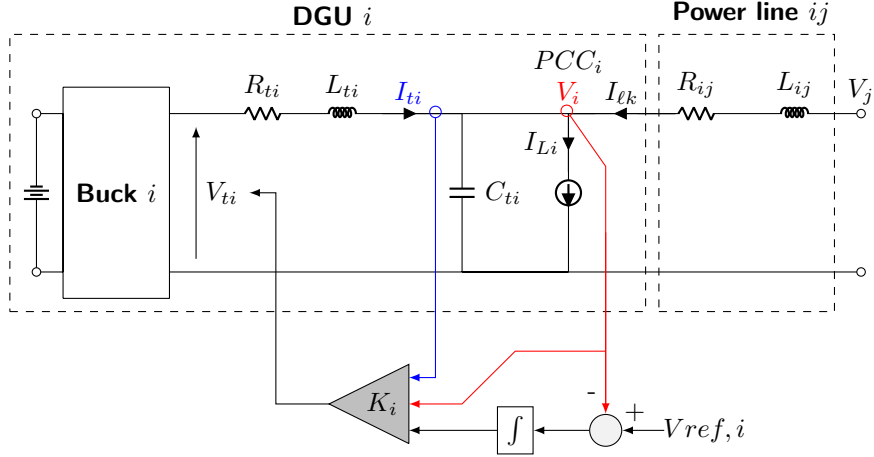

\begin{figure}
	\centering
	\includegraphics[scale=0.25]{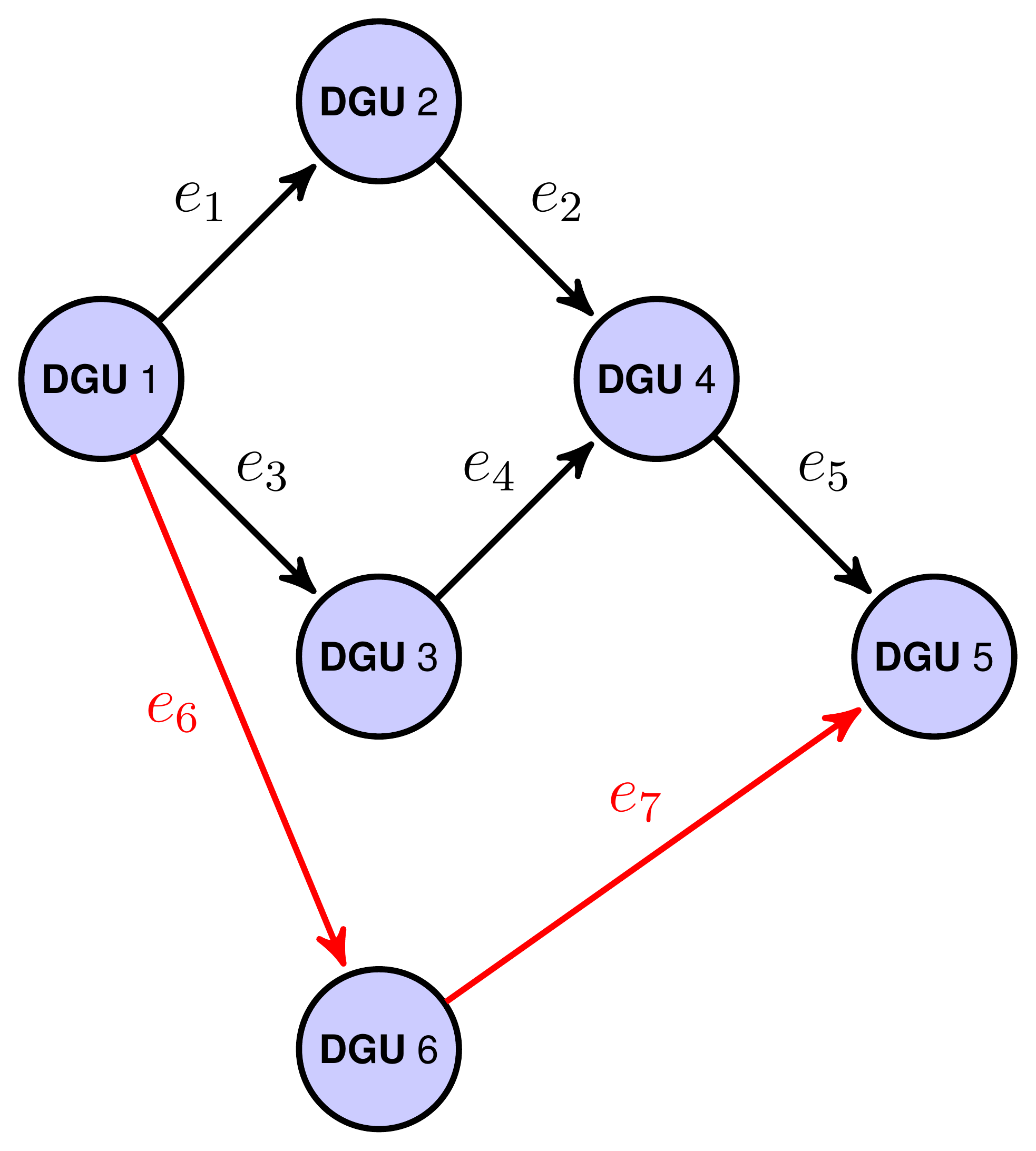}
	\caption{Graph representation of an mG composed of 5 DGUs (in black)
		and plug-in of DGU 6 (in red). Blue nodes identify DGUs whose electrical scheme is shown in Figure \ref{fig:ctrl_part}, i.e. with local loads connected to the corresponding PCCs.}
	\label{fig:5areasplug}
\end{figure}

\begin{rmk}
	\label{rmk:integral}
	Local PnP controllers can be enhanced with pre-filters so as to shape
	in a desired way the transfer function $F_{[i]}(s)$ between voltages
	$V_{ref,i}$ and $V_i$ represented in Figure
	\ref{fig:ctrl_part}. The closed-loop transfer function $F_{[i]}(s)$
	has 3 poles and in the sequel it will be approximated by a unit gain
	or a first-order system. The first approximation will be used mainly
	for tutorial reasons. The second one is very mild at low and medium
	frequencies, as can be noticed from the Bode plots of
	$F_{[i]}(s)$ in
	\cite{tucci2015decentralized}. Moreover, the presence of an
	integrator in Figure \ref{fig:ctrl_part} allows the
	voltage $V_i$ at the PCC to
	track constant references without offset when the
	disturbances (i.e. load currents $I_{Li}$) are constant.
\end{rmk}

\section{Secondary control based on consensus algorithms}
\label{sec:consensus}
\subsection{Control objectives}
Primary stabilizing  controllers, such as the PnP regulators described in Section \ref{sec:pnp_design}, have the goal of turning DGUs into controlled voltage generators, i.e. to approximate, as well as possible, the identity $V_i = V_{ref,i}$. As such, they do not ensure current sharing and voltage balancing, defined in the sequel.
\begin{definition}
	\label{defn:cs}
	For constant load currents $I_{Li},i=1,\dots,N$, \textit{current sharing} is achieved if, at steady state, the overall load current is
	proportionally shared among DGUs, i.e. if
	\begin{equation}
	\label{eq:cs_defn}
	\frac{I_{ti}}{I_{ti}^s} = \frac{I_{tj}}{I_{tj}^s}\hspace{4mm}\text{for all }i,j\in \mathcal{V},
	\end{equation}
	where $I_{ti}^s>0$ are constant scaling factors.
\end{definition}

We recall that current sharing is desirable in order to avoid
situations in which some DGUs are not able to supply local
loads, thus requiring power from other DGUs. A very common goal is to make DGUs share the total load
current proportionally to their generation capacity. This can be obtained by measuring the output currents
in per-unit (p.u.), i.e. setting each scaling factor $I_{ti}^s$ in \eqref{eq:cs_defn} equal to the corresponding DGU rated current (see Section \ref{sec:simulations} for an example).
On the other hand, if the scaling factors are all identical, the current sharing condition becomes
\begin{equation}
\label{eqn:equal_sharing}
I_{ti} = \langle\mathbf{{I}_L}\rangle\hspace{4mm}i = 1, \dots,N,
\end{equation}
where $\mathbf{I_L} = [I_{L1}, I_{L2}, \dots, I_{LN}]^T$ is the vector of the local load currents.

\begin{ass}
	\label{ass:vref}
	Voltage references are identical for all DGUs, i.e. $V_{ref,i}=V_{ref}$, $\forall i\in\mathcal{V}$.
\end{ass}
\begin{definition}
	\label{defn:vb}
	Under Assumption \ref{ass:vref}, \textit{voltage balancing} is achieved if 
	\begin{equation}
	\label{eq:vb_defn}
	\langle\mathbf{V}\rangle = V_{ref}.
	\end{equation}
	where vector $\mathbf{V} =[V_{1},V_{2}, \dots,V_{N}]^T$ collects the PCC voltages.
\end{definition}

In order to guarantee current sharing and voltage balancing, we use a consensus-based secondary
control layer, as described next. 
\subsection{Consensus dynamics}
Consensus filters are commonly employed for achieving
global information sharing or coordination through distributed
computations \cite{olfati2004consensus,lns-v.85}. In our case, as
shown in Figure \ref{fig:ctrl_complete}, we adopt the following
consensus scheme for adjusting the references of each primary voltage regulator
\begin{equation}
\label{eq:basic_consensus}
\dot {\Delta V_i}(t) =- k_{I}\sum\limits_{j=1, j\neq i} ^{N}a_{ij}\left(\frac{I_{ti}(t)}{I_{ti}^s}-\frac{I_{tj}(t)}{I_{tj}^s}\right),
\end{equation}
where $a_{ij}>0$ if DGUs $i$ and $j$ are connected by a communication
link ($a_{ij}=0$, otherwise), and the coefficient $k_I>0$ is
common to all DGUs. The use of consensus protocols has been thoroughly
studied for networks of agents with simple dynamics, e.g. simple
integrators \cite{olfati2004consensus,lns-v.85}, with the goal of
proving convergence of individual states to a common value. In our
case, however, \eqref{eq:basic_consensus} is interfaced with the mG dynamics and convergence
of currents $I_{ti}$ to the same value does not trivially follow from
standard consensus theory. This property will be rigorously analyzed in Section
\ref{sec:modeling_and_analysis}. 

In the sequel, we assume bidirectional communication, i.e. $a_{ij} =
a_{ji}$. The corresponding \textit{communication} digraph is
$\mathcal{G}_c=(\mathcal{V}, \mathcal{E}_c, W_c)$ where
$(i,j)\in\mathcal{E}_{c}\Longleftrightarrow a_{ij}> 0$ and $W_c =\text{diag}\{a_{ij}\}$. 

Considerations on the topologies of $\mathcal{G}_c$ and $\mathcal{G}_{el}$ guaranteeing stable current sharing and voltage balancing are detailed in Section \ref{sec:properties_Q}. In all cases, however, the following standing assumption must hold.
\begin{ass}
	\label{ass:conneced}
The graph $\mathcal{G}_{el}$ is weakly
		connected. The graph $\mathcal{G}_c$ is undirected and connected.
\end{ass}
From a system point of view, the collective dynamics of the group of DGUs
following \eqref{eq:basic_consensus} can be expressed as
\begin{equation}
\label{eq:ss_simpl_1}
\mathbf{\dot{\Delta V}} = -\underbrace{k_IL}_{\mathbb{L}}D\mathbf{I_{t}},
\end{equation}
where $\mathbf{\Delta V} = [\Delta V_1, \dots, \Delta V_N]^T =
\mathbf{V}-\mathbf{V_{ref}}$, $\mathbf{V_{ref}} = [V_{ref,1}, V_{ref,2}, \dots,
V_{ref,N}]^T$, $\mathbf{I_t} =[I_{t1},I_{t2}, \dots,I_{tN}]^T$, $D=\mathrm{diag}\left(\frac{1}{I_{t1}^s}, \dots, \frac{1}{I_{tN}^s}\right) = \mathrm{diag}\left(d_1, \dots, d_N\right)$ and $L  = \mathcal{L}(\mathcal{G}_c)$. Note that $\mathbb{L}$ is the Laplacian matrix of $\mathcal{G}_c$ with $W_c$ replaced by $k_IW_c$.

\section{Modeling and analysis of the complete system}
\label{sec:modeling_and_analysis}
The hierarchical control scheme of a DGU equipped with primary and
secondary regulators is shown in Figure \ref{fig:ctrl_complete}. For
studying the behavior of the closed-loop mG, we first approximate DGUs under the effect of primary controllers by unit
gains (Section \ref{sec:simpl_model}) and prove that current sharing is achieved in a stable way. We also provide conditions for voltage balancing. Results derived in this simple setting will be 
instrumental for studying the more complex scheme where primary
control loops are abstracted into first-order transfer functions
(Section \ref{sec:fo_approx}).  

\begin{figure}
	\centering
	\includegraphics[scale=0.4]{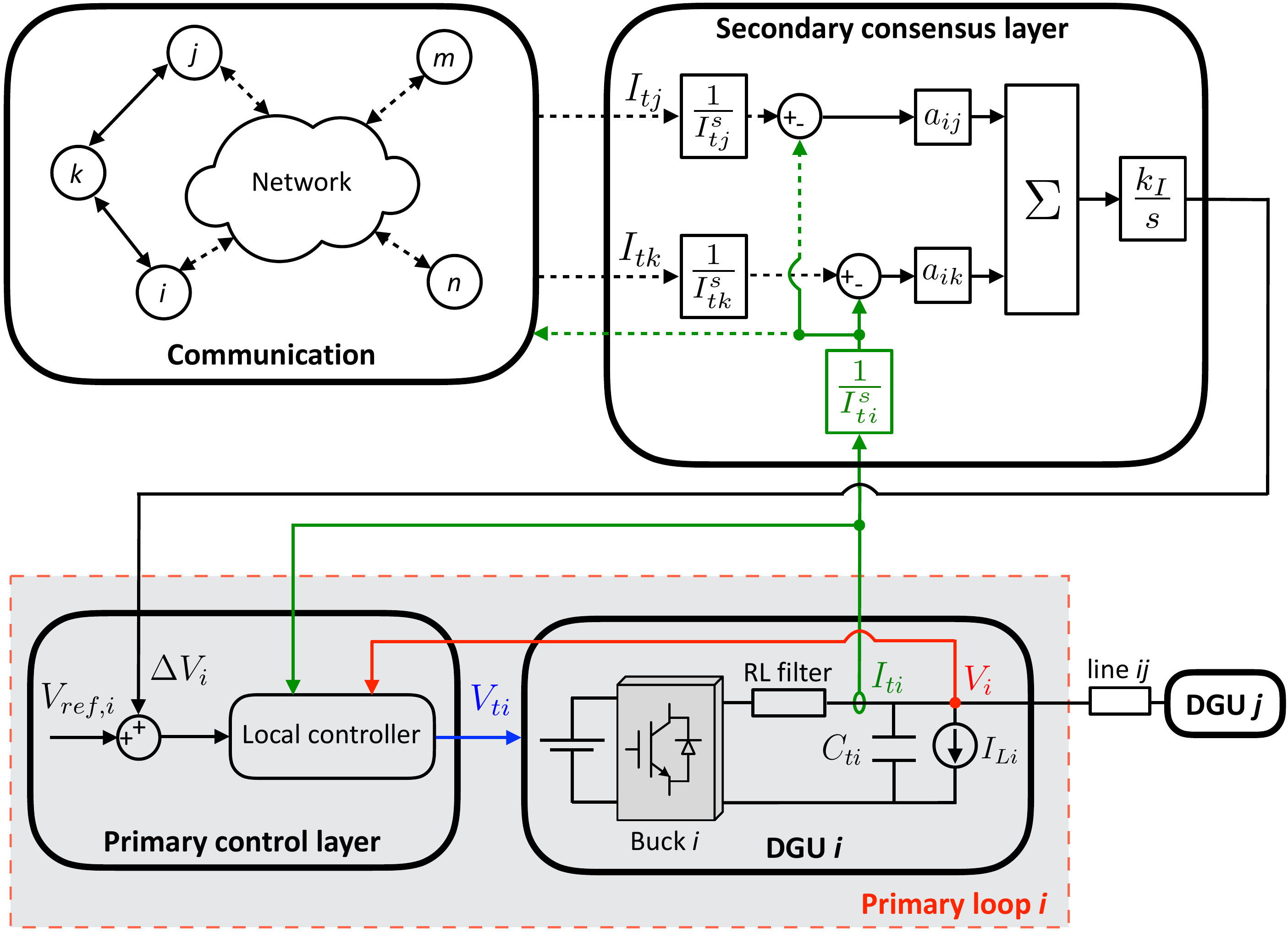}
	\caption{Complete hierarchical control scheme of DGU $i$. Subsystems $j$ and $k$ are the neighbors of DGU $i$ according to the communication graph.}
	\label{fig:ctrl_complete}
\end{figure}
\subsection{Unit-gain approximation of primary control loops}
\label{sec:simpl_model}
By approximating primary
loops with ideal unit gains, we have the relations $V_{i} =V_{ref,i}+\Delta V_{i}$,
$\forall i\in\mathcal{V}$. \\
Figures \ref{fig:hierarc}-\ref{fig:ith_ideal} show the resulting control scheme, used for deriving the dynamics of the overall mG as a function of the inputs $\mathbf{I_{L}} $ and
$\mathbf{V_{ref}}$. Starting from
the left-hand side of Figure \ref{fig:hierarc}, we have, in order, \eqref{eq:ss_simpl_1} and 
\begin{equation}
\label{eq:ss_simpl_2}
\mathbf{V} = \mathbf{\Delta V} + \mathbf{V_{ref}}.
\end{equation}
Then, from basic circuit theory, we derive the relation between the vector of voltages $\mathbf{V}$ and the vector of line currents $\mathbf{I_{\ell}} =  [I_{\ell1}, \dots,I_{\ell M}]^{T}$ as
\begin{equation}
\label{eq:ss_simpl_3}
\mathbf{I_{\ell}} = -WB_{el}^{T}\mathbf{V},
\end{equation}  
where $W$ and $B_{el}=Q(\mathcal{G}_{el})$ are the weight and the incidence matrix of $\mathcal{G}_{el}$, respectively. Next, we get
\begin{equation}
\label{eq:ss_simpl_4}
\mathbf{ I_{t}} = \mathbf{I_{L}} - B_{el}\mathbf{I_{\ell}}
\end{equation} 
and, merging equations \eqref{eq:ss_simpl_1}-\eqref{eq:ss_simpl_4}, we finally obtain
\begin{equation}
\label{eq:ss_simpl_dyn}
\begin{aligned}
\Sigma:\text{ } \mathbf{ \dot{\Delta V}} &= -\mathbb{L}D\underbrace{B_{el}W B_{el}^{T}}_{\mathbb M}\mathbf{\Delta V}-\mathbb{L}D\mathbf{I_{L}}-\mathbb{L} D\underbrace{B_{el}WB_{el}^{T}}_{\mathbb M}\mathbf{V_{ref}}\\
&= -\mathbb{Q}\mathbf{\Delta V}-\mathbb{L} D\mathbf{I_{L}}-\mathbb{Q}\mathbf{V_{ref}}
\end{aligned}
\end{equation} 
where $\mathbb M=\mathcal L(\mathcal{G}_{el}) = B_{el} W B_{el}^{T}$ is the
Laplacian matrix of the electrical network and $\mathbb{Q} =
\mathbb{L}D\mathbb{M}$. 
\subsection{Properties of the matrix $\mathbb{Q}$}
\label{sec:properties_Q}
The matrix $\mathbb{Q}$ in \eqref{eq:ss_simpl_dyn} captures the interaction of electric couplings and communication. From \eqref{eq:ss_simpl_dyn}, it governs the voltage dynamics and hence the achievement of current sharing and voltage balancing.
Notice that $\mathbb{Q}$ is obtained  pre- and post- multiplying a diagonal matrix by a Laplacian ($\mathbb{L}$ and $\mathbb{M}$, respectively). It follows that $\mathbb{Q}$ is not a Laplacian matrix itself because it might fail to be symmetric and have
positive off-diagonal entries, even if weights of $\mathcal{G}_{el}$ and $\mathcal{G}_{c}$
are positive. Nevertheless, in the sequel we provide two distinct conditions under which $\mathbb{Q}$ preserves some key features of Laplacian matrices. Before proceeding, we introduce the following preliminary result.
\begin{figure}[!htb]
	\centering
	\begin{subfigure}[!htb]{0.8\textwidth}
		\centering
		\includegraphics[scale=0.7,width = \columnwidth]{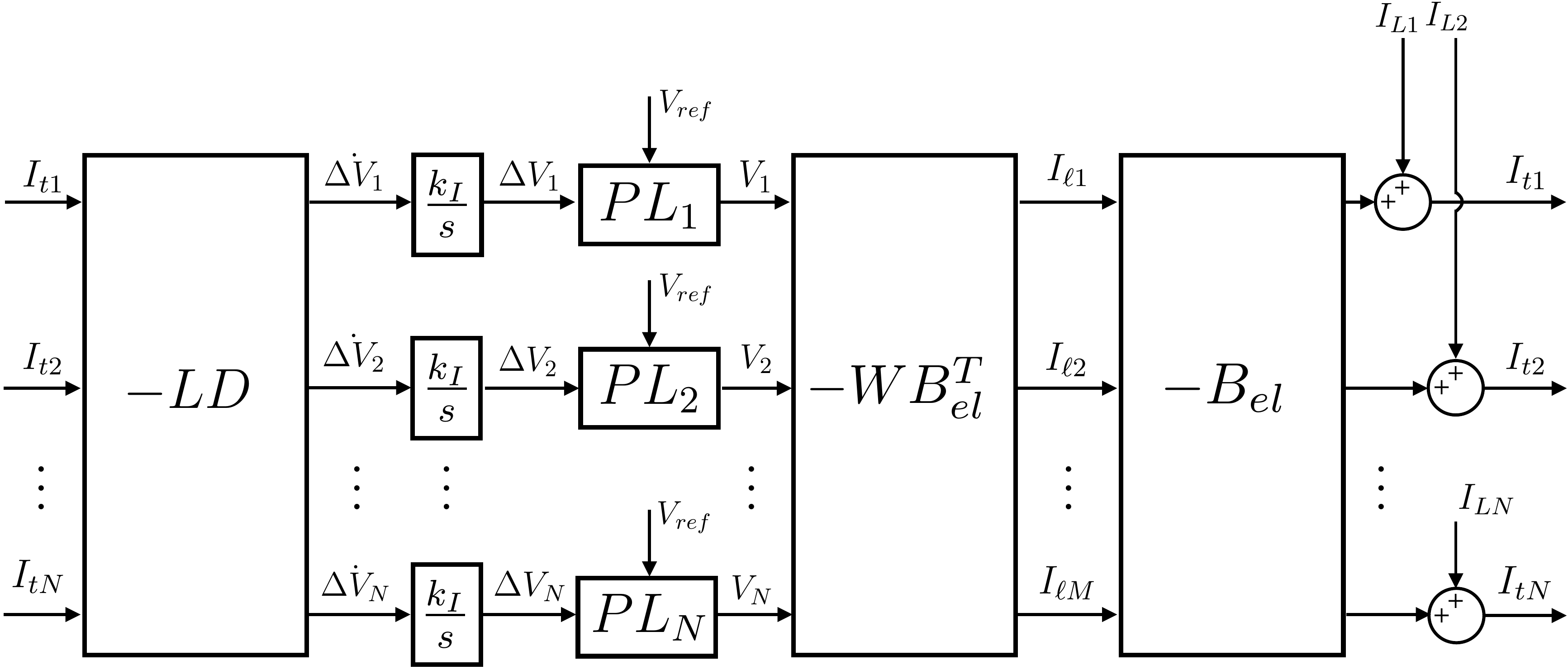}
		\caption{Hierarchical control scheme.}
		\label{fig:hierarc}
	\end{subfigure}
	\begin{subfigure}[!htb]{0.35\columnwidth}
		\centering
		\includegraphics[scale=0.35, width = \columnwidth]{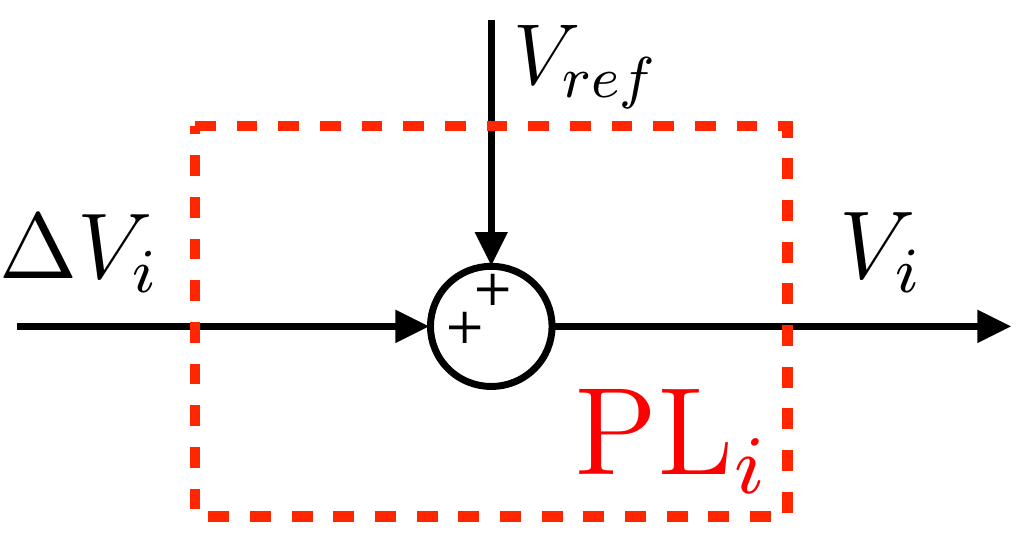}
		\caption{$i$-th ideal primary loop.}
		\label{fig:ith_ideal}
	\end{subfigure}
	\begin{subfigure}[!htb]{0.45\columnwidth}
		\centering
		\includegraphics[scale=0.4, width = \columnwidth]{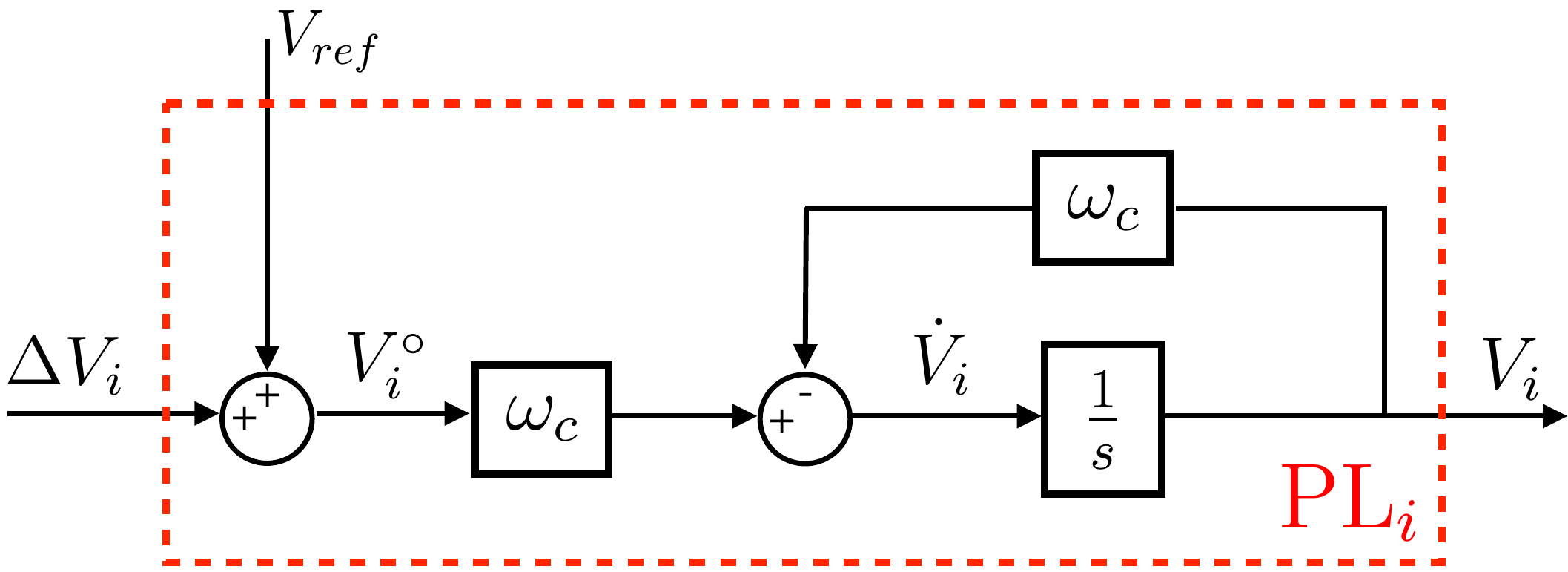}
		\caption{First-order approximation of the $i$-th primary loop.}
		\label{ith_first_order}
	\end{subfigure}
	\caption{Hierarchical control scheme and
		Primary Loop (PL) approximations.}
	\label{fig:hierarc_scheme}
\end{figure}
\begin{prp}
	\label{prp:projection}
	It holds $P_{H^1}(D\mathbb{M} H^1) = H^1$.
\end{prp}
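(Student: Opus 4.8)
The plan is to reduce the statement to an elementary dimension count. First I would note that $\mathbb{M}=\mathcal{L}(\mathcal{G}_{el})$ is the Laplacian of the weakly connected graph $\mathcal{G}_{el}$ (Assumption~\ref{ass:conneced}), so Proposition~\ref{pr:laplacian_prop} applies; in particular $\mathrm{Range}(\mathbb{M})=H^1$ and $\mathbb{M}(H^1|H^1)$ is invertible, whence $\mathbb{M}H^1=H^1$ and therefore $D\mathbb{M}H^1=DH^1$. Since $D=\mathrm{diag}(d_1,\dots,d_N)$ with $d_i=1/I_{ti}^s>0$ is invertible, $DH^1$ is an $(N-1)$-dimensional subspace of $\mathbb{R}^N$. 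Hence proving $P_{H^1}(D\mathbb{M}H^1)=H^1$ is the same as proving $P_{H^1}(DH^1)=H^1$; and since $P_{H^1}(DH^1)$ is a subspace of $H^1$ with $\dim H^1=N-1$, it suffices to show that $P_{H^1}$ restricted to $DH^1$ is injective, so that its image attains the full dimension $N-1$.

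The crucial step is then the transversality $DH^1\cap H^1_{\perp}=\{\mathbf{0}_N\}$, which is exactly the injectivity of $P_{H^1}|_{DH^1}$ because $\mathrm{Ker}(P_{H^1})=H^1_{\perp}=\{\alpha\mathbf{1}_N:\alpha\in\mathbb{R}\}$. To verify it, suppose $D\hat v=\alpha\mathbf{1}_N$ for some $\hat v\in H^1$ and $\alpha\in\mathbb{R}$; then $\hat v=\alpha D^{-1}\mathbf{1}_N=\alpha\,(I_{t1}^s,\dots,I_{tN}^s)^T$. Since all scaling factors $I_{ti}^s$ are strictly positive, $\langle D^{-1}\mathbf{1}_N\rangle>0$, so the requirement $\hat v\in H^1$, i.e. $\langle\hat v\rangle=\alpha\langle D^{-1}\mathbf{1}_N\rangle=0$, forces $\alpha=0$ and hence $\hat v=\mathbf{0}_N$.

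Finally I would assemble the pieces: injectivity of $P_{H^1}|_{DH^1}$ gives $\dim P_{H^1}(DH^1)=\dim DH^1=N-1$, which together with $P_{H^1}(DH^1)\subseteq H^1$ and $\dim H^1=N-1$ yields $P_{H^1}(DH^1)=H^1$, as claimed. I expect the only genuinely delicate point — and the real content of the proof — to be the transversality $DH^1\cap H^1_{\perp}=\{\mathbf{0}_N\}$: it fails for a generic diagonal matrix and relies essentially on $D>0$, i.e. on $D^{-1}\mathbf{1}_N$ having strictly positive (hence nonzero-average) entries. The remaining arguments are routine linear algebra combined with a direct application of Proposition~\ref{pr:laplacian_prop}.
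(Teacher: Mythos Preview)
Your proof is correct and follows essentially the same route as the paper: reduce $P_{H^1}(D\mathbb{M}H^1)$ to $P_{H^1}(DH^1)$ via Proposition~\ref{pr:laplacian_prop}, then argue by dimension count that it suffices to show $\mathrm{Ker}(P_{H^1}|_{DH^1})=\{0\}$. The only cosmetic difference is in how that kernel is shown to be trivial: the paper computes $\|u\|_{D^{-1}}^2=\alpha\mathbf{1}^T v=0$ to force $u=0$, whereas you invert to $\hat v=\alpha D^{-1}\mathbf{1}_N$ and use $\langle D^{-1}\mathbf{1}_N\rangle>0$ --- both arguments hinge on $D>0$ in the same way.
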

\begin{proof}
	From Proposition \ref{pr:laplacian_prop}-(\ref{lapl_4}), we know that $\mathbb{M}(H^1|H^1)$ is invertible, hence surjective. Then,
	\begin{equation*}
	D\mathbb{M} H^1 = D H^1.
	\end{equation*}
	We now study the projection map ${P}_{H^1}:DH^1\rightarrow H^1$. Since $D$ is invertible, we have that $\mathrm{dim}(D H^1) = \mathrm{dim}(H^1) = N-1$. Therefore, one has
	\begin{equation}
	\label{eq:rank_nullity_projection}
	\mathrm{dim}(\mathrm{Range}({{P}_{H^1}})) + 	\mathrm{dim}(\mathrm{Ker}({{P}_{H^1}})) = N-1.
	\end{equation}
	The next step is to show that 
	\begin{equation}
	\label{eq:ker_projection}
	\mathrm{Ker}({P}_{H^1}) = \{0\}, 
	\end{equation}
	so that, from \eqref{eq:rank_nullity_projection}, ${P}_{H^1}$ is surjective. Let $u\in DH^1$ and set $u = \bar{u} + \hat{u}$, where $\bar{u}\in H^1_{\perp}$ and $\hat{u} \in H^1$. Hence, ${P}_{H^1}u = {P}_{H^1}\hat{u} = \hat{u}$ and, if $u\in\mathrm{Ker}({P}_{H^1})$, then $\hat{u}=0$. In other words, $u\in\mathrm{Ker}({P}_{H^1})$ verifies
	\begin{equation*}
	\label{eq:ler_projection}
	\quad\left\lbrace
	\begin{aligned}
	u &= \alpha_u\mathbf{1}\hspace{4mm}\text{for some }\alpha_u\in\mathbb{R}\\
	u &= Dv\hspace{5.5mm}\text{for some }v\in H^1\\
	\end{aligned}
	\right.
	\end{equation*}
	One has
	\begin{equation*}
	||u||^{2}_{D^{-1}} = u^T D^{-1}u = \alpha_{u}\mathbf{1}^T D^{-1}D v = \alpha_{u}\mathbf{1}^T v = 0
	\end{equation*}
	where the last identity follows from the fact that $v\in H^1$ has zero average. But, since $||\cdot||_{D^{-1}}$ is a norm, $||u||^{2}_{D^{-1}}$ implies $u=0$. This shows \eqref{eq:ker_projection}.
\end{proof}

	At this point, we can introduce two assumptions which allow us to characterize the eigenstructure of matrix $\mathbb{Q}$; then, we discuss their impact on the choice of the communication graph topology as well as on the value of coefficients $a_{ij}$ in \eqref{eq:basic_consensus}.
	\begin{ass}
		\label{ass:D_identity}
		The diagonal matrix containing the scaling factors for the output currents coincides with the identity, i.e. $D=I$.  
	\end{ass}
	\begin{ass}
		\label{ass:same_topology}
		%Matrix $D > 0$ is different from the identity and the product $\mathbb{L}D\mathbb{M}$ commutes (i.e. $\mathbb{L}D\mathbb{M} =  \mathbb{M}D\mathbb{L}$).
		It holds $D\neq I$, $D > 0$ and the product $\mathbb{L}D\mathbb{M}$ commutes (i.e. $\mathbb{L}D\mathbb{M} =  \mathbb{M}D\mathbb{L}$).
	\end{ass}
	\begin{rmk}
		Under Assumption \ref{ass:D_identity}, the desired goal is that all converters in the mG produce the same current (measured in Ampere and not in p.u.), i.e. that \eqref{eqn:equal_sharing} holds. This can be desirable, for instance, when all the converters in the network have the same generation capacity. We notice that Assumption \ref{ass:D_identity} does not enforce constraints on the graphs $\mathcal{G}_{el}$ and $\mathcal{G}_{c}$, which must fulfill Assumption \ref{ass:conneced} only. In particular, the topology of $\mathcal{G}_{c}$ can be chosen based on considerations on the network technology and for optimizing the performance of the secondary control layer, independently of the electrical interactions.
	\end{rmk}
	\begin{rmk}
		\label{rmk:same_topology}
		Assumption \ref{ass:same_topology} is suited to the case of converters with different ratings. However, it requires matrix $\mathbb{Q}$ to be symmetric. An interesting case where this condition is verified is given by 
		\begin{equation}
		\label{eq:prop_laplacians}
		\mathbb{L} =  \mu\mathbb{M},
		\end{equation}
		where $\mu>0$ is a global parameter, common to all the DGUs\footnote{From \eqref{eq:prop_laplacians}, $\mathbb{Q} = \mathbb{L}D\mathbb{M} = \mu\mathbb{M}D\mathbb{M}$, which commutes since $\mathbb{M}$ and $D$ are symmetric and $\mu$ is a scalar.}. Relation \eqref{eq:prop_laplacians} holds if the following conditions are simultaneously guaranteed:
		\begin{enumerate}[(a)]
			\item $\mathcal{G}_{el}$ and $\mathcal{G}_{c}$ have the same topology;
			\item \label{rmk:cond_b}coefficients in \eqref{eq:basic_consensus} are chosen as $a_{ij} = \mu\frac{1}{R_{ij}}$ if DGUs $i$ and $j$ are connected by a communication link.
		\end{enumerate} 
		Condition \eqref{rmk:cond_b} has an impact on the design of local Laplacian control laws \eqref{eq:basic_consensus}. Notably, each agent (say DGU $i$) needs to know the global parameter $\mu$ and the value of the conductances $\frac{1}{R_{ij}}$ connecting it to its electrical neighbors $j$ belonging to set $\mathcal{N}^{el}_i$ (which, in this particular case, coincides with $\mathcal{N}^{c}_i$).
	\end{rmk}
	\begin{rmk}
		Assumptions \ref{ass:D_identity} or \ref{ass:same_topology} are instrumental in guaranteeing asymptotic stability of the hierarchical control architecture in Figure \ref{fig:ctrl_complete}. Indeed, as will be shown in the following Proposition, the above conditions imply that the eigenvalues of $\mathbb{Q}$ are nonnegative.
		Note that, if neither Assumption \ref{ass:D_identity} nor \ref{ass:same_topology} are verified, $\mathbb{Q}$ can have negative eigenvalues (see the example in Appendix \ref{app:eig_Q}).
	\end{rmk}

\begin{prp}
	\label{prop_2}
	The matrix $\mathbb{Q} = \mathbb{L}D\mathbb{M}$ has the following properties:
	\begin{enumerate}[(i)]
		\item \label{prop:zero_row_sum} $\mathrm{Ker}(\mathbb{Q})=H^1_{\perp}$;
		\item \label{prop:range} $\mathrm{Range}(\mathbb{Q}) = H^1$;
		\item \label{prop:Q_invertible} the linear transformation $\mathbb{Q}(H^1|H^1)$ is invertible;
		\item \label{prop:diag_and_psd} under Assumption \ref{ass:D_identity} or \ref{ass:same_topology}, $\mathbb{Q}$ is diagonalizable and has real nonnegative eigenvalues.
		\item \label{prop:zero_eig} under Assumption \ref{ass:D_identity} or \ref{ass:same_topology}, the zero eigenvalue of $\mathbb{Q}$ has algebraic multiplicity equal to one.
	\end{enumerate}
\end{prp}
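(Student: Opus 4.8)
The strategy is to dispatch parts~(\ref{prop:zero_row_sum})--(\ref{prop:Q_invertible}) by elementary linear algebra built on Proposition~\ref{pr:laplacian_prop} and Proposition~\ref{prp:projection}, then to settle~(\ref{prop:diag_and_psd}) by restricting $\mathbb{Q}$ to the invariant subspace $H^1$, where it becomes a product of positive-definite operators, and finally to read off~(\ref{prop:zero_eig}) from the resulting block structure. For~(\ref{prop:zero_row_sum}), $H^1_\perp\subseteq\ker\mathbb{Q}$ is clear since $\mathbf 1\in\ker\mathbb{M}$; conversely, if $\mathbb{Q}v=0$ then $D\mathbb{M}v\in\ker\mathbb{L}=H^1_\perp$, say $D\mathbb{M}v=\alpha\mathbf 1$, and since $\mathbb{M}v\in\mathrm{Range}(\mathbb{M})=H^1$ has zero average, left-multiplying by $\mathbf 1^TD^{-1}$ gives $0=\alpha\sum_i d_i^{-1}$, so $\alpha=0$, $\mathbb{M}v=0$, and $v\in H^1_\perp$ (the same device as in the proof of Proposition~\ref{prp:projection}). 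Part~(\ref{prop:range}) is then a dimension count: $\mathrm{Range}(\mathbb{Q})\subseteq\mathrm{Range}(\mathbb{L})=H^1$, and by rank--nullity together with~(\ref{prop:zero_row_sum}), $\dim\mathrm{Range}(\mathbb{Q})=N-1=\dim H^1$. Part~(\ref{prop:Q_invertible}) follows immediately: by~(\ref{prop:range}), $\mathbb{Q}(H^1)\subseteq H^1$ so the restriction is well defined, and it is injective because $\ker\mathbb{Q}\cap H^1=H^1_\perp\cap H^1=\{0\}$, hence invertible in finite dimension.

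For~(\ref{prop:diag_and_psd}), the reduction is that $H^1$ and $H^1_\perp$ are both $\mathbb{Q}$-invariant (by~(\ref{prop:range})--(\ref{prop:Q_invertible}) and because $\mathbb{Q}$ annihilates $H^1_\perp$), so in an adapted basis $\mathbb{Q}$ is block diagonal, $\mathbb{Q}\sim\mathrm{diag}(\mathbb{Q}_1,0)$ with $\mathbb{Q}_1:=\mathbb{Q}(H^1|H^1)$; it therefore suffices to diagonalize $\mathbb{Q}_1$. On $H^1$ one has $\mathbb{Q}_1=\mathbb{L}_1\tilde D\mathbb{M}_1$, where $\mathbb{L}_1:=\mathbb{L}(H^1|H^1)$, $\mathbb{M}_1:=\mathbb{M}(H^1|H^1)$ and $\tilde D:=P_{H^1}\circ D|_{H^1}$ is the compression of $D$ to $H^1$; a short check shows all three are symmetric and positive definite on $H^1$ ($\mathbb{L}_1,\mathbb{M}_1$ because a Laplacian is positive definite once the kernel direction $\mathbf 1$ is removed, $\tilde D$ because $\langle\tilde Dx,x\rangle=x^TDx>0$). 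Under Assumption~\ref{ass:D_identity} this is simply $\mathbb{Q}_1=\mathbb{L}_1\mathbb{M}_1$, a product of two symmetric positive-definite matrices; conjugating by $\mathbb{L}_1^{1/2}$ shows $\mathbb{Q}_1$ is similar to $\mathbb{L}_1^{1/2}\mathbb{M}_1\mathbb{L}_1^{1/2}\succ 0$, hence diagonalizable with strictly positive eigenvalues.

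Under Assumption~\ref{ass:same_topology} the hypothesis $\mathbb{L}D\mathbb{M}=\mathbb{M}D\mathbb{L}$ makes $\mathbb{Q}$, hence $\mathbb{Q}_1$, symmetric, so $\mathbb{Q}_1$ is already orthogonally diagonalizable with real eigenvalues; it remains to rule out negative ones. Setting $S:=\mathbb{L}_1^{-1}\mathbb{M}_1\tilde D$, the symmetry relation $\mathbb{L}_1\tilde D\mathbb{M}_1=\mathbb{M}_1\tilde D\mathbb{L}_1$ yields $S=S^{T}$ and $\mathbb{Q}_1=\mathbb{L}_1 S\mathbb{L}_1$, a congruence, so by Sylvester's law of inertia $\mathbb{Q}_1$ and $S$ have the same inertia; since $\mathbb{L}_1 S=\mathbb{M}_1\tilde D$ is a product of two symmetric positive-definite matrices it has positive spectrum, which forces $S\succ 0$ and hence $\mathbb{Q}_1\succ 0$. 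In both cases $\mathbb{Q}_1$ is diagonalizable with $N-1$ positive eigenvalues, and adjoining the $0$-block gives~(\ref{prop:diag_and_psd}). (In the paradigmatic case $\mathbb{L}=\mu\mathbb{M}$ of Remark~\ref{rmk:same_topology} this is transparent: $\mathbb{Q}=\mu\mathbb{M}D\mathbb{M}$ and $x^T\mathbb{Q}x=\mu(\mathbb{M}x)^TD(\mathbb{M}x)\ge 0$.)

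Finally, for~(\ref{prop:zero_eig}), the block form gives $p_{\mathbb{Q}}(t)=p_{\mathbb{Q}_1}(t)\cdot(-t)$ with $p_{\mathbb{Q}_1}(0)\neq 0$ by~(\ref{prop:Q_invertible}), so $0$ is a simple root; equivalently, under Assumption~\ref{ass:D_identity} or~\ref{ass:same_topology} the diagonalizability from~(\ref{prop:diag_and_psd}) equates the algebraic and geometric multiplicities of $0$, the latter being $\dim\ker\mathbb{Q}=\dim H^1_\perp=1$ by~(\ref{prop:zero_row_sum}). The routine parts are~(\ref{prop:zero_row_sum})--(\ref{prop:Q_invertible}) and~(\ref{prop:zero_eig}); the substantive step is the spectral analysis in~(\ref{prop:diag_and_psd}), and within it the crux is \emph{nonnegativity} under Assumption~\ref{ass:same_topology}: a product of three positive-definite matrices is in general neither diagonalizable nor has real eigenvalues, so one must genuinely exploit the commutation hypothesis to obtain symmetry of $\mathbb{Q}$ and then the inertia argument for $S$ — this is the place I expect to need the most care to write cleanly for arbitrary $D>0$ rather than only for $\mathbb{L}=\mu\mathbb{M}$.
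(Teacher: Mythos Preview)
Your proof is correct, but it takes a genuinely different route from the paper's in a couple of places worth noting.

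For (\ref{prop:zero_row_sum})--(\ref{prop:range}) you reverse the paper's order: you prove the kernel statement first, directly, by the $\mathbf{1}^TD^{-1}$ trick (the same device underlying Proposition~\ref{prp:projection}), and then get the range by rank--nullity. The paper does the opposite: it first establishes $\mathrm{Range}(\mathbb{Q})=H^1$ by invoking Proposition~\ref{prp:projection} together with the invertibility of $\mathbb{L}(H^1|H^1)$, and then reads off the kernel by dimension count. Both orderings are fine; yours is slightly more direct and avoids calling on Proposition~\ref{prp:projection} as a black box.

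The more interesting divergence is in (\ref{prop:diag_and_psd}). The paper stays in $\mathbb{R}^N$: under Assumption~\ref{ass:D_identity} it writes $\mathbb{Q}=\mathbb{L}\mathbb{M}$ and cites a result of Hong--Horn that the product of two complex positive-semidefinite matrices is diagonalizable with nonnegative real eigenvalues; under Assumption~\ref{ass:same_topology} it factors $\mathbb{Q}=D^{-1/2}(D^{1/2}\mathbb{L}D^{1/2})(D^{1/2}\mathbb{M}D^{1/2})D^{-1/2}$, applies Hong--Horn to the middle product, and then uses Sylvester's law (the middle product being symmetric because $\mathbb{L}D\mathbb{M}$ commutes) to transfer the inertia to $\mathbb{Q}$. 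Your approach instead restricts everything to $H^1$, where the three factors $\mathbb{L}_1,\tilde D,\mathbb{M}_1$ become strictly positive definite, and then runs elementary similarity/congruence arguments without any external citation: the $\mathbb{L}_1^{1/2}$-conjugation under Assumption~\ref{ass:D_identity}, and the $S=\mathbb{L}_1^{-1}\mathbb{M}_1\tilde D$ inertia argument under Assumption~\ref{ass:same_topology}. Your route is more self-contained and yields the sharper conclusion that $\mathbb{Q}_1$ has \emph{strictly} positive spectrum; the paper's is shorter once one accepts the Hong--Horn result. The one step you should write out more carefully is ``$\mathbb{L}_1S$ has positive spectrum forces $S\succ0$'': this uses that for $A\succ0$ symmetric and $B$ symmetric, $AB$ is similar to $A^{1/2}BA^{1/2}$, which is congruent to $B$ --- a standard move, but worth one line since it is the crux of your Assumption~\ref{ass:same_topology} case.

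Part (\ref{prop:zero_eig}) is handled identically in both proofs.
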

\begin{proof}
	We start by proving point (\ref{prop:range}). Since, from Proposition \ref{pr:laplacian_prop}-(\ref{lapl_3}), $\mathrm{Ker}(\mathbb{M}) = H_{\perp}^1$, one has $\mathbb{M}\mathbb{R}^{N}=\mathbb{M}(H_{\perp}^1\oplus H^1) = \mathbb{M} H^1$. Furthermore, from Proposition \ref{prp:projection}, ${P}_{H^1}(D\mathbb{M} H^1) = H^1$. Proposition \ref{pr:laplacian_prop}-(\ref{lapl_4}) applied to the Laplacian $\mathbb{L}$ shows that $\mathbb{L}(D\mathbb{M}\mathbb{R}^N) = H^1$, which is (\ref{prop:range}).
	
	For proving point (\ref{prop:zero_row_sum}), we recall that $\mathrm{Ker}(\mathbb{M}) = H_{\perp}^1$ and then $\mathrm{Ker}(\mathbb{L} D \mathbb{M})\supseteq H_{\perp}^1$. From (\ref{prop:range}) we have that $\mathrm{dim}(\mathrm{Range}(\mathbb{Q})) = N-1$. The equation $\mathrm{dim}(\mathrm{Range}(\mathbb{Q}))+\mathrm{dim}(\mathrm{Ker}(\mathbb{Q})) = N$ implies that $\mathrm{dim}(\mathrm{Ker}(\mathbb{Q})) = 1$. Since $\mathrm{dim}(H_{\perp}^1)=1$, we have $\mathrm{Ker}(\mathbb{Q}) = H_{\perp}^1$.
	
	In order to prove point (\ref{prop:Q_invertible}), we  show that $\mathbb{Q}(H^1|H^1)$ is both surjective and injective \cite[p.~50]{lang1987linear}. The surjectivity of $\mathbb{Q}$ on $H^1$ has been shown above when proving point (\ref{prop:range}). For proving the injectivity, we need to check if it holds
	\begin{equation*}
	\label{eq:injectivity_test}
	\forall b\in H^1\hspace{3mm}\forall x,y \in H^1\hspace{3mm}(\mathbb{Q}x=b \mbox{ and } \mathbb{Q}y=b)\Rightarrow x=y.
	\end{equation*}
	Now, $\mathbb{Q} x =\mathbb{Q} y = b$ implies that $\mathbb{Q}(x-y) = 0$. It means that $ x-y\in \mathrm{Ker}(\mathbb{Q})= H^1_{\perp}$, therefore $\exists\alpha\in{\mathbb{R}}$ such that $x-y = \alpha \mathbf{1}_n$. However, since $x-y\in H^1$, $x-y = \alpha \mathbf{1}_n$ is verified only for $\alpha = 0$; this leads to $x = y$. 
	
	As regards statement (\ref{prop:diag_and_psd}), we first consider the case in which Assumption \ref{ass:D_identity} holds. Since $D = I$, we have that $\mathbb{Q} = \mathbb{L}\mathbb{M}$. Hence, $\mathbb{Q}$ is the product of two matrices, both positive semidefinite in the real sense. Moreover, since $\mathbb{L}$ and $\mathbb{M}$ are symmetric, they are positive semidefinite also in the complex sense
		\cite{pease1965methods}. The proof concludes by applying Corollary 2.3 in \cite{hong1991jordan}, which states that the product of two complex positive
		semidefinite matrices is diagonalizable and has nonnegative real
		eigenvalues. 
	
	%In addition, since $\mathbb{L}\mathbb{M}$ is diagonalizable, for every eigenvalue of $\mathbb{Q}$, the geometric multiplicity equals the algebraic multiplicity; consequently, recalling \eqref{prop:zero_row_sum}, $\mathrm{dim}(\mathrm{Ker}(\mathbb{Q}))= \mathrm{dim}(H_{\perp}^1)=1$ implies that the algebraic multiplicity of its zero eigenvalue is one. }\\
	We now prove statement \eqref{prop:diag_and_psd} when Assumption \ref{ass:same_topology} holds. Since $D$ is diagonal with positive elements, the matrix $D^{\frac{1}{2}}$ verifying $D = D^{\frac{1}{2}}D^{\frac{1}{2}}$ exists and is invertible. Then, $\mathbb{Q}$ can be written as follows
		\begin{equation}
		\label{eq:rewrite_Q}
		\mathbb{Q} = D^{-\frac{1}{2}}\underbrace{D^{\frac{1}{2}}\mathbb{L} D^{\frac{1}{2}}}_{\mathcal{L}}\underbrace{D^{\frac{1}{2}}\mathbb{M} D^{\frac{1}{2}}}_{\mathcal{M}}D^{-\frac{1}{2}}.
		\end{equation}
		Matrices $\mathcal{L}$ and $\mathcal{M}$ in \eqref{eq:rewrite_Q} are positive semidefinite in the real sense and symmetric; hence, they are positive semidefinite also in the complex sense. Therefore, also in this case, we can use Corollary 2.3 in \cite{hong1991jordan} to state that  $\mathcal{L}\mathcal{M}$ is has nonnegative real eigenvalues. Now, since $D^{-\frac{1}{2}}$ in \eqref{eq:rewrite_Q} is symmetric, matrix $\mathcal{L}\mathcal{M}$ is congruent to $\mathbb{Q}$. Thus, since under Assumption \ref{ass:same_topology} $\mathcal{L}\mathcal{M}$ is symmetric\footnote{Indeed, $(\mathcal{L}\mathcal{M})^T = \mathcal{M}\mathcal{L} = D^{\frac{1}{2}}\underbrace{\mathbb{M} D^{\frac{1}{2}}D^{\frac{1}{2}}\mathbb{L}}_{\mathbb{M}D\mathbb{L} = \mathbb{L}D\mathbb{M}} D^{\frac{1}{2}} = \mathcal{L}\mathcal{M}$.}, by Sylvester's law of inertia \cite{horn2012matrix}, the inertia of $\mathbb{Q}$ and $\mathcal{L}\mathcal{M}$ coincide, i.e.
		\begin{equation*}
		\label{eq:same_inertia}
		(i_{+}(\mathcal{L}\mathcal{M}), 0, i_{0}(\mathcal{L}\mathcal{M})) = (i_{+}(\mathbb{Q}), 0, i_{0}(\mathbb{Q})).
		\end{equation*} 
		This concludes the proof of statement \eqref{prop:diag_and_psd} under Assumption \ref{ass:same_topology}.
		
		Finally, point \eqref{prop:zero_eig} follows from points \eqref{prop:zero_row_sum} and \eqref{prop:diag_and_psd}. Indeed, since $\Qset$ is diagonalizable, the algebraic and geometric multiplicity of null eigenvalues coincide. From point (\ref{prop:zero_row_sum}), since $\mathrm{dim}(H_{\perp}^1)=1$, we conclude that the zero eigenvalue of $\Qset$ is unique.
\end{proof}

\subsubsection{Analysis of equilibria}
\label{sec:eq_an_1}
In order to evaluate the steady-state behavior of the electrical
signals appearing in Figure \ref{fig:hierarc}-\ref{fig:ith_ideal},
we study the equilibria of system \eqref{eq:ss_simpl_dyn}. Hence, for given constant inputs $(\mathbf{I_L^*},\mathbf{{V}_{ref}^*})$, we characterize the solutions $\mathbf{{\Delta V}^{*}}$ of equation
\begin{equation}
\label{eq:eq_simpl}
\mathbb{Q}\mathbf{{\Delta V}^{*}}= -\mathbb{L}D\mathbf{{I_{L}^*}}-\mathbb{Q}\mathbf{{V}_{ref}^*}
\end{equation}
through the following Proposition.
\begin{prp}
	\label{prop:equilibrium}
	For equation \eqref{eq:eq_simpl},
	\begin{enumerate}[(i)]
		\item \label{st_2} there is only one solution $\mathbf{\widetilde{ \Delta V}^{*}}\in H^1$;
		\item \label{st_3} all solutions $\mathbf{{\Delta V}^{*}}\in\mathbb{R}^{N}$ can be written as
		\begin{equation}
		\label{eq:infinite_equilibria}
		\mathbf{{\Delta V}^{*}} = \mathbf{\widetilde{\Delta V}^{*}} + \alpha\mathbf{1}_N\hspace{5mm}\alpha\in\mathbb{R}.
		\end{equation}
	\end{enumerate}
\end{prp}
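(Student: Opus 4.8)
The plan is to read \eqref{eq:eq_simpl} as a single linear system $\mathbb{Q}\mathbf{x} = \mathbf{c}$ in the unknown $\mathbf{x}=\mathbf{\Delta V}^{*}$, with forcing term $\mathbf{c} = -\mathbb{L}D\mathbf{I_L^*}-\mathbb{Q}\mathbf{V_{ref}^*}$, and to resolve it purely through the structural properties of $\mathbb{Q}$ established in Proposition \ref{prop_2}. The first step is to check that $\mathbf{c}$ lies in $\mathrm{Range}(\mathbb{Q})=H^1$. On the one hand, $\mathbb{Q}\mathbf{V_{ref}^*}\in\mathrm{Range}(\mathbb{Q})=H^1$ by Proposition \ref{prop_2}-(\ref{prop:range}). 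On the other hand, recalling that $\mathbb{L}=k_I L$ is the Laplacian of the weakly connected graph $\mathcal{G}_c$ (with positive weights $k_I a_{ij}$), Proposition \ref{pr:laplacian_prop}-(\ref{lapl_3}) applied to $\mathbb{L}$ gives $\mathrm{Range}(\mathbb{L})=H^1$, hence $\mathbb{L}(D\mathbf{I_L^*})\in H^1$ \emph{irrespective} of the value of $D\mathbf{I_L^*}$. Since $H^1$ is a linear subspace, $\mathbf{c}\in H^1$.

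The second step proves point (\ref{st_2}). By Proposition \ref{prop_2}-(\ref{prop:Q_invertible}), the restriction $\mathbb{Q}(H^1|H^1)$ is invertible; since $\mathbf{c}\in H^1$, there exists one and only one $\mathbf{\widetilde{\Delta V}^{*}}\in H^1$ with $\mathbb{Q}\mathbf{\widetilde{\Delta V}^{*}}=\mathbf{c}$, which is exactly statement (\ref{st_2}). The third step proves point (\ref{st_3}) by the standard ``particular solution plus homogeneous solution'' argument: if $\mathbf{\Delta V}^{*}$ is any solution of \eqref{eq:eq_simpl}, then $\mathbb{Q}(\mathbf{\Delta V}^{*}-\mathbf{\widetilde{\Delta V}^{*}})=0$, so $\mathbf{\Delta V}^{*}-\mathbf{\widetilde{\Delta V}^{*}}\in\mathrm{Ker}(\mathbb{Q})=H^1_{\perp}=\{\alpha\mathbf{1}_N:\alpha\in\mathbb{R}\}$ by Proposition \ref{prop_2}-(\ref{prop:zero_row_sum}); conversely, for every $\alpha\in\mathbb{R}$ the vector $\mathbf{\widetilde{\Delta V}^{*}}+\alpha\mathbf{1}_N$ solves \eqref{eq:eq_simpl} since $\mathbb{Q}\mathbf{1}_N=0$. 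This yields \eqref{eq:infinite_equilibria}.

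I do not expect a genuine obstacle here: the argument is a direct consequence of Proposition \ref{prop_2}. The only point requiring a little care — and worth spelling out — is the membership $\mathbf{c}\in\mathrm{Range}(\mathbb{Q})$: although $D\mathbf{I_L^*}$ is an arbitrary vector with no special structure, the term $\mathbb{L}D\mathbf{I_L^*}$ is automatically in $H^1$ because $\mathrm{Range}(\mathbb{L})=H^1$, and this is what makes \eqref{eq:eq_simpl} solvable for every admissible input pair $(\mathbf{I_L^*},\mathbf{V_{ref}^*})$.
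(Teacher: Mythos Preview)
Your proof is correct and follows essentially the same approach as the paper's: both first verify that the right-hand side lies in $H^1$ via Proposition~\ref{pr:laplacian_prop}-(\ref{lapl_3}) and Proposition~\ref{prop_2}-(\ref{prop:range}), then invoke the invertibility of $\mathbb{Q}(H^1|H^1)$ for uniqueness in $H^1$, and finally use $\mathrm{Ker}(\mathbb{Q})=H^1_{\perp}$ to parametrize all solutions. Your ``particular plus homogeneous'' phrasing for (\ref{st_3}) is equivalent to the paper's direct decomposition $\mathbf{\Delta V}^{*}=\mathbf{\widehat{\Delta V}^{*}}+\mathbf{\overline{\Delta V}^{*}}$.
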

\begin{proof}
	Proposition \ref{prop_2}-(\ref{prop:range}) shows that
	\eqref{eq:eq_simpl} has solutions only if $-\mathbb{L}D
	\mathbf{{I_{L}^*}}-\mathbb{Q}\mathbf{{V}_{ref}^*}\in H^1$. From
	Propositions \ref{pr:laplacian_prop}-(\ref{lapl_3}) and
	\ref{prop_2}-(\ref{prop:range}) this is always true. Statement (\ref{st_2}) directly follows from Proposition \ref{prop_2}-(\ref{prop:Q_invertible}).\\
	For the proof of statement (\ref{st_3}), we split $\mathbf{\Delta
		V}^{*}\in\mathbb{R}^N$ as in \eqref{eq:decomposition}, i.e. $\mathbf{\Delta
		V}^{*} = \mathbf{\widehat{\Delta V}^{*}} + \mathbf{\overline{\Delta
			V}^{*}}$. From \eqref{eq:eq_simpl} and Proposition
	\ref{prop_2}-(\ref{prop:zero_row_sum}), one has that, irrespectively
	of $\mathbf{\overline{\Delta V}^{*}}=\alpha \mathbf{1}_N\in H_{\perp}^1$, $\mathbb{Q} \mathbf{\widehat{\Delta V}^{*}}  = -\mathbb{L}D
	\mathbf{{I_{L}^*}}-\mathbb{Q}\mathbf{{V}_{ref}^*}$. Moreover, from the first part of the proof, it holds $\mathbf{\widehat{\Delta V}^{*}}=\mathbf{\widetilde{\Delta V}^{*}}$.
\end{proof}
Next, we relate properties of
the equilibria of \eqref{eq:ss_simpl_dyn} to current sharing and voltage balancing.
\begin{prp}
	\label{prop:volt_bal}
	Consider system \eqref{eq:ss_simpl_dyn} with constant inputs
	$(\mathbf{I_{L}^*},\mathbf{ V_{ref}^*})$. Then, current sharing is achieved at steady state. Moreover, if $\mathbf{V_{ref}^*} =
	V_{ref}\mathbf{1}_N$ (i.e. if Assumption \ref{ass:vref} holds) and $\alpha$
	in \eqref{eq:infinite_equilibria} is equal to zero, then the
	equilibrium $\mathbf{V^{*}}$ verifies the voltage balancing condition \eqref{eq:vb_defn}.
\end{prp}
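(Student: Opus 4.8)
The plan is to reduce the equilibrium condition \eqref{eq:eq_simpl} to a statement about $\mathrm{Ker}(\mathbb{L})$ and then read off both claims. First I would substitute $\mathbf{\Delta V^*} = \mathbf{V^*} - \mathbf{V_{ref}^*}$ into \eqref{eq:eq_simpl}; since $\mathbb{Q} = \mathbb{L}D\mathbb{M}$, the two reference terms combine and one obtains $\mathbb{L}D\mathbb{M}\mathbf{V^*} = -\mathbb{L}D\mathbf{I_L^*}$, i.e. $\mathbb{L}D\big(\mathbb{M}\mathbf{V^*} + \mathbf{I_L^*}\big) = 0$. Combining \eqref{eq:ss_simpl_3} and \eqref{eq:ss_simpl_4} gives $\mathbf{I_t} = \mathbf{I_L} + B_{el}WB_{el}^{T}\mathbf{V} = \mathbf{I_L} + \mathbb{M}\mathbf{V}$, so the equilibrium relation collapses to simply $\mathbb{L}D\mathbf{I_t^*} = 0$.

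Next I would exploit the structure of $\mathbb{L}$. Since $\mathbb{L} = k_I\mathcal{L}(\mathcal{G}_c)$ with $k_I>0$ and $\mathcal{G}_c$ connected (Assumption \ref{ass:conneced}), Proposition \ref{pr:laplacian_prop}-(\ref{lapl_3}) gives $\mathrm{Ker}(\mathbb{L}) = H^1_{\perp} = \{\alpha\mathbf{1}_N:\alpha\in\mathbb{R}\}$. Hence $\mathbb{L}D\mathbf{I_t^*} = 0$ forces $D\mathbf{I_t^*} = \beta\mathbf{1}_N$ for some $\beta\in\mathbb{R}$, i.e. $d_i I_{ti}^* = \beta$ for all $i$. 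Recalling $d_i = 1/I_{ti}^s$, this reads $I_{ti}^*/I_{ti}^s = \beta$ for every $i$, which is exactly the current sharing condition \eqref{eq:cs_defn}. This argument does not depend on which equilibrium is picked: by Proposition \ref{prop:equilibrium}-(\ref{st_3}) equilibria differ by a multiple of $\mathbf{1}_N$, and since $\mathbb{M}\mathbf{1}_N = 0$ the vector $\mathbf{I_t^*} = \mathbf{I_L^*} + \mathbb{M}\mathbf{V^*}$ is the same for all of them.

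For voltage balancing I would set $\alpha = 0$, so that by Proposition \ref{prop:equilibrium} $\mathbf{\Delta V^*} = \mathbf{\widetilde{\Delta V}^{*}}\in H^1$, hence $\langle\mathbf{\Delta V^*}\rangle = 0$. Under Assumption \ref{ass:vref}, $\mathbf{V_{ref}^*} = V_{ref}\mathbf{1}_N$, so by \eqref{eq:ss_simpl_2} $\langle\mathbf{V^*}\rangle = \langle\mathbf{\Delta V^*}\rangle + \langle V_{ref}\mathbf{1}_N\rangle = 0 + V_{ref}$, which is \eqref{eq:vb_defn}. I do not expect a real obstacle in this proof; the only points needing a little care are the cancellation of the reference term in the first step (which is where $\mathbb{Q} = \mathbb{L}D\mathbb{M}$ is used crucially) and the observation that the whole argument relies only on Propositions \ref{pr:laplacian_prop} and \ref{prop:equilibrium}, so it holds regardless of whether Assumption \ref{ass:D_identity} or \ref{ass:same_topology} is in force.
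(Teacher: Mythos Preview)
Your proof is correct and follows essentially the same approach as the paper. The only difference is cosmetic: the paper obtains $\mathbb{L}D\mathbf{I_t^*}=0$ in one line by reading off the equilibrium of \eqref{eq:ss_simpl_1} directly, whereas you reach the same identity by expanding \eqref{eq:eq_simpl} and reassembling $\mathbf{I_t^*}=\mathbf{I_L^*}+\mathbb{M}\mathbf{V^*}$ from \eqref{eq:ss_simpl_3}--\eqref{eq:ss_simpl_4}; the voltage-balancing argument is identical in both.
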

\begin{proof}
	At the equilibrium, since $\mathrm{Ker}(\mathbb{L}) = H_{\perp}^1$ (see Proposition \ref{pr:laplacian_prop}-\eqref{lapl_3}), from \eqref{eq:ss_simpl_1} we have that 
	\begin{equation}
	\label{eq:I_curr_sharing}
	\begin{aligned}
	-\mathbb{L} D \mathbf{I_t^{*}}=\mathbf{0}_N &\Leftrightarrow D\mathbf{{I}_t^*} = \bar{I_{t}}\mathbf{1}_N\\
	&\Leftrightarrow  \left[
	\frac{I_{t1}^*}{I_{t1}^s},\dots,
	\frac{I_{tN}^*}{I_{tN}^s}\right]^T =\bar{I_{t}}\mathbf{1}_N,
	\end{aligned}
	\end{equation}
	which is \eqref{eq:cs_defn}. 
	Let now $\mathbf{{\Delta V}^{*}}$ be an equilibrium for system \eqref{eq:ss_simpl_dyn}. Replacing \eqref{eq:infinite_equilibria} in \eqref{eq:ss_simpl_2} and averaging the resulting vector, we get
	$ \langle \mathbf{V^*}\rangle = \underbrace{\langle \mathbf{\widetilde{\Delta V}^*}\rangle}_{=0} +\underbrace{N\alpha}_{=0}+V_{ref},$ which is \eqref{eq:vb_defn}.
\end{proof}
\subsubsection{Stability analysis}
\label{sec:st_an_1}
The similarities established in Proposition \ref{prop_2} between the spectral properties of graph Laplacians and the matrix $\mathbb{Q}$ will allow us to study the stability properties of \eqref{eq:ss_simpl_dyn} using methods similar to the ones adopted for analysis of classical consensus dynamics. Results in this section follow the approach in \cite{1643380}, where stability of consensus is analyzed through the use of invariant subspaces. An advantage of this rationale is that it carries over almost invariably to the case of more complex models of primary loops (Section \ref{sec:fo_approx}).

In the sequel, we prove exponentially stable convergence of
$\mathbf{\Delta V}$ in \eqref{eq:ss_simpl_dyn} to an equilibrium
ensuring both current sharing and voltage balancing for constant
$\mathbf{I_L}^*$ and $\mathbf{V^*_{ref}} = V_{ref}\mathbf{1}_N$. We first show that projections $P_{H^1_{\perp}}(\mathbf{\Delta V}) = \mathbf{\overline{\Delta V}}$ and $P_{H^1}(\mathbf{\Delta V}) = \mathbf{\widehat{\mathbf{\Delta V}}}$ have non-interacting dynamics (or, equivalently, that subspaces $H^1$ and $H^1_{\perp}$ are invariant for \eqref{eq:ss_simpl_dyn}).
\begin{prp}
	\label{pr:invariant}
	If $\mathbf{\Delta V}$ is given by system $\Sigma$ in
	\eqref{eq:ss_simpl_dyn} for $\mathbf{\Delta V}(0) = \mathbf{ \Delta
		V}_0$, then $\mathbf{\Delta V} = \mathbf{\overline{\Delta V}} + \mathbf{\widehat{\Delta V}}$, where $\mathbf{\overline{\Delta V}}\in H_{\perp}^1$ and $\mathbf{\widehat{\Delta V}}\in H^1$ fulfill
	\begin{equation}
	\label{eq:subsysDGUi}
	\text{ $\overline\Sigma:$}\left\lbrace
	\begin{aligned}
	\mathbf{ {\dot{\overline{\Delta V}}}} &= \mathbf{0}_N\\
	\mathbf{\overline{ \Delta V}}(0) &=\mathbf{\langle{\Delta V_0}\rangle}\mathbf{1}_N
	\end{aligned}
	\right.
	\end{equation} and
	\begin{equation}
	\label{eq:simpl_initial_state_H1}            
	\text{ $\widehat\Sigma:$}\left\lbrace
	\begin{aligned}
	\mathbf{\dot{\widehat{ \Delta V}}} &= -{ \mathbb{Q}}\mathbf{\widehat{\Delta V}}-{\mathbb{L}}D\mathbf{{I_{L}}}-{\mathbb{Q}}\mathbf{{V_{ref}}}\\
	\mathbf{\widehat{ \Delta V}}(0) &=\mathbf{{ \Delta V}}_0 - \mathbf{\overline{ \Delta V}}_0.
	\end{aligned}
	\right.
	\end{equation}
\end{prp}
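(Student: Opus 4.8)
The plan is to exploit the spectral facts already established, namely $\mathrm{Range}(\mathbb{Q}) = \mathrm{Range}(\mathbb{L}) = H^1$ and $\mathrm{Ker}(\mathbb{Q}) = H^1_\perp$, in order to show that the vector field of $\Sigma$ always lies in $H^1$, and then to obtain the two decomposed subsystems by projecting the dynamics onto the fixed complementary subspaces $H^1$ and $H^1_\perp$. Concretely, the vector field of \eqref{eq:ss_simpl_dyn}, $f(\mathbf{\Delta V}) = -\mathbb{Q}\mathbf{\Delta V} - \mathbb{L}D\mathbf{I_L} - \mathbb{Q}\mathbf{V_{ref}}$, takes values in $H^1$ at every time $t$: the terms $-\mathbb{Q}\mathbf{\Delta V}$ and $-\mathbb{Q}\mathbf{V_{ref}}$ lie in $\mathrm{Range}(\mathbb{Q}) = H^1$ by Proposition \ref{prop_2}-(\ref{prop:range}), while $-\mathbb{L}D\mathbf{I_L}$ lies in $\mathrm{Range}(\mathbb{L}) = H^1$ by Proposition \ref{pr:laplacian_prop}-(\ref{lapl_3}) applied to the Laplacian $\mathbb{L}$. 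Hence $\mathbf{\dot{\Delta V}}(t)\in H^1$ for all $t\ge 0$.

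Next I would use that the decomposition \eqref{eq:decomposition} corresponds to the constant (time-independent) linear projectors $P_{H^1}$ and $P_{H^1_\perp}$, so that differentiation commutes with projection. Writing $\mathbf{\Delta V} = \mathbf{\widehat{\Delta V}} + \mathbf{\overline{\Delta V}}$ with $\mathbf{\widehat{\Delta V}} = P_{H^1}(\mathbf{\Delta V})\in H^1$ and $\mathbf{\overline{\Delta V}} = P_{H^1_\perp}(\mathbf{\Delta V})\in H^1_\perp$, one gets $\mathbf{\dot{\overline{\Delta V}}} = P_{H^1_\perp}(\mathbf{\dot{\Delta V}}) = \mathbf{0}_N$ because $\mathbf{\dot{\Delta V}}\in H^1$ and $H^1\perp H^1_\perp$; integrating and recalling that $P_{H^1_\perp}(v) = \langle v\rangle\mathbf{1}_N$ gives $\mathbf{\overline{\Delta V}}(t)\equiv\langle\mathbf{\Delta V}_0\rangle\mathbf{1}_N$, which is $\overline\Sigma$ in \eqref{eq:subsysDGUi}. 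For the other component, $\mathbf{\dot{\widehat{\Delta V}}} = \mathbf{\dot{\Delta V}} - \mathbf{\dot{\overline{\Delta V}}} = \mathbf{\dot{\Delta V}}$; substituting $\mathbf{\Delta V} = \mathbf{\widehat{\Delta V}} + \mathbf{\overline{\Delta V}}$ into $f$ and using $\mathbf{\overline{\Delta V}}\in H^1_\perp = \mathrm{Ker}(\mathbb{Q})$ (Proposition \ref{prop_2}-(\ref{prop:zero_row_sum})), so that $\mathbb{Q}\mathbf{\overline{\Delta V}} = \mathbf{0}_N$, yields $\mathbf{\dot{\widehat{\Delta V}}} = -\mathbb{Q}\mathbf{\widehat{\Delta V}} - \mathbb{L}D\mathbf{I_L} - \mathbb{Q}\mathbf{V_{ref}}$ with $\mathbf{\widehat{\Delta V}}(0) = \mathbf{\Delta V}_0 - \mathbf{\overline{\Delta V}}_0$, which is $\widehat\Sigma$ in \eqref{eq:simpl_initial_state_H1}.

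I do not expect a real obstacle here: the only point needing a word of justification is the interchange of time differentiation with the projectors $P_{H^1}$ and $P_{H^1_\perp}$ (legitimate since they are constant linear maps), together with the observation that $\mathbf{\widehat{\Delta V}}$ stays in $H^1$ and $\mathbf{\overline{\Delta V}}$ in $H^1_\perp$ for all $t$ — which is exactly the invariance of these two subspaces under $\Sigma$ that the statement encodes. Everything else is a direct application of Propositions \ref{pr:laplacian_prop} and \ref{prop_2}. It is worth noting in passing that nothing in the argument requires $\mathbf{I_L}$ or $\mathbf{V_{ref}}$ to be constant: it only uses that $\mathbb{L}D\mathbf{I_L}(t)$ and $\mathbb{Q}\mathbf{V_{ref}}(t)$ belong to $H^1$ for each $t$.
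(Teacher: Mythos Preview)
Your proposal is correct and follows essentially the same approach as the paper. The only cosmetic difference is that the paper obtains $\overline\Sigma$ by ``averaging both sides'' of \eqref{eq:ss_simpl_dyn} (i.e., computing $\langle\cdot\rangle$), whereas you phrase the same operation as applying the constant projector $P_{H^1_\perp}$; the remaining steps---using Propositions \ref{pr:laplacian_prop}-(\ref{lapl_3}) and \ref{prop_2}-(\ref{prop:zero_row_sum}),(\ref{prop:range}) to place each term in $H^1$, and then invoking $\mathrm{Ker}(\mathbb{Q})=H^1_\perp$ to replace $\mathbb{Q}\mathbf{\Delta V}$ by $\mathbb{Q}\mathbf{\widehat{\Delta V}}$---are identical.
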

\begin{proof}
	We write vectors $\mathbf{ \Delta V}(0)$, $\mathbf{I_{L}}$ and
	$\mathbf{V_{ref}}$ according to the decomposition
	\eqref{eq:decomposition}, i.e. using
	``\hspace{1mm}$\widehat{\textcolor{white}{v}}$\hspace{1mm}" and
	``\hspace{1mm}$\overline{\textcolor{white}{v}}$\hspace{1mm}" for
	denoting their $H^1$ and $H^1_{\perp}$ components, respectively. As described in \cite{1643380}, we analyze the dynamics of
	$\mathbf{\overline{\Delta V}}$ by averaging both sides of
	\eqref{eq:ss_simpl_dyn} and $\mathbf{\Delta V}(0) = \mathbf{ \Delta
		V}_0$, respectively. From points (\ref{prop:zero_row_sum})-(\ref{prop:range}) of Proposition
	\ref{prop_2}, we have
	$\langle-\mathbb{Q}\mathbf{\Delta V}\rangle = 0$ and $\langle-\mathbb{Q}\mathbf{
		V_{ref}}\rangle=0$. Since $\mathrm{Range}(\mathbb{L}) = H^1$ (see Proposition \ref{pr:laplacian_prop}-(\ref{lapl_3})), we also have $\langle-\mathbb{L} D\mathbf{I_L}\rangle=0$, hence obtaining $\frac{d}{dt}\langle\mathbf{\Delta V}\rangle = 0$. Recalling that $\mathbf{\overline{\Delta V}_0} =\mathbf{\langle{\Delta V_0}\rangle}\mathbf{1}_N$, we obtain \eqref{eq:subsysDGUi}.
	
	Next, we analyze $\mathbf{\widehat{\Delta V}}=\mathbf{\Delta V}-\mathbf{\overline{\Delta V}}$. We have
	\begin{equation*}
	\label{eq:splitting_H1}
	\mathbf{\dot{\widehat{\Delta V}}} = \mathbf{\dot{\Delta V}}-\underbrace{\mathbf{\dot{\overline{{\Delta V}}}}}_{=\mathbf{0}_N} = -{ \mathbb{Q}}\mathbf{{\Delta V}}-{\mathbb{L}}D\mathbf{{I_{L}}}-{\mathbb{Q}}\mathbf{{V_{ref}}}
	\end{equation*}
	and $\mathbf{\widehat{ \Delta V}}(0) =\mathbf{{ \Delta V}}_0 - \mathbf{\overline{ \Delta V}}_0$. From Proposition \ref{prop_2}-(\ref{prop:zero_row_sum}), $\mathbb{Q}\mathbf{{\Delta V}}=\mathbb{Q}\mathbf{\widehat{\Delta V}}$ and then we have \eqref{eq:simpl_initial_state_H1}.  
\end{proof}
\begin{rmk}
	\label{rmk:splitting}
	The splitting of $\Sigma$ into systems $\overline\Sigma$ and
	$\widehat\Sigma$ implies that, if $\mathbf{\Delta V_0}$ has zero
	average, then $\mathbf{\Delta V}(t)$ has the same property,
	$\forall t\geq 0$ and irrespectively of inputs
	$(\mathbf{I_L},\mathbf{V_{ref}})$. This behavior can be realized by
	suitable initialization of the integrators appearing in Figure \ref{fig:hierarc}.
\end{rmk}
According to system $\overline\Sigma$, the value of $P_{H^1_{\perp}}(\mathbf{\Delta V})=\mathbf{\overline{\Delta V}}$ remains constant over time and equal to $\mathbf{\overline{\Delta V_0}}$. Hence, in order to characterize the stability of equilibria \eqref{eq:infinite_equilibria}, it is sufficient to study the dynamics \eqref{eq:simpl_initial_state_H1}. In an equivalent way, one can consider system \eqref{eq:ss_simpl_dyn} and the following definition of stability on a subspace.
\begin{definition}
	Let $\mathcal{V}$ be a subspace of $\mathbb{R}^n$. The origin of $\dot x = \mathcal{A} x$, $x(t)\in\mathbb{R}^n$ is Globally Exponentially Stable (GES) on $\mathcal{V}$ if $\exists\kappa, \eta>0$ : $\lVert P_{\mathcal{V}}x(t) \rVert\leq\kappa e^{-\eta t}\lVert P_{\mathcal{V}}x(0) \rVert$. The parameter $\eta$ is termed \textit{rate of convergence}.
\end{definition}

Note that $\Sigma$ is a linear system and, for stability analysis, we can neglect inputs, hence obtaining 
\begin{equation}
\label{eq:no_inputs}            
\left\lbrace
\begin{aligned}
\mathbf{\dot{{ \Delta V}}} &= -{ \mathbb{Q}}\mathbf{{\Delta V}}\\
\mathbf{{ \Delta V}}(0) &=\mathbf{{ \Delta V}}_0.
\end{aligned}
\right.
\end{equation}
\begin{theorem}
	\label{thm:GES_H1}
	Under Assumption \ref{ass:D_identity} or \ref{ass:same_topology}, the origin of \eqref{eq:no_inputs} is GES on $H^1$. Moreover, the rate of convergence is the smallest strictly positive eigenvalue of $\mathbb{Q}$.
\end{theorem}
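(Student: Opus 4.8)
The plan is to reuse the invariant-subspace splitting established in Proposition~\ref{pr:invariant} together with the spectral description of $\mathbb{Q}$ obtained in Proposition~\ref{prop_2}. For stability we work with the input-free system \eqref{eq:no_inputs}. Applying Proposition~\ref{pr:invariant} with $\mathbf{I_L}=\mathbf{0}_N$ and $\mathbf{V_{ref}}=\mathbf{0}_N$ (equivalently, observing that $H^1=\mathrm{Range}(\mathbb{Q})$ by Proposition~\ref{prop_2}-(\ref{prop:range}) and $H^1_{\perp}=\mathrm{Ker}(\mathbb{Q})$ by Proposition~\ref{prop_2}-(\ref{prop:zero_row_sum}) are both $\mathbb{Q}$-invariant), the component $P_{H^1}(\mathbf{\Delta V})(t)=\mathbf{\widehat{\Delta V}}(t)$ stays in $H^1$ and obeys $\mathbf{\dot{\widehat{\Delta V}}}=-\mathbb{Q}\,\mathbf{\widehat{\Delta V}}$ with $\mathbf{\widehat{\Delta V}}(0)=P_{H^1}(\mathbf{\Delta V}_0)$. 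Hence it suffices to prove that the restricted linear flow $\dot y=-\mathbb{Q}(H^1|H^1)\,y$ on $H^1$ is GES with rate equal to the smallest strictly positive eigenvalue of $\mathbb{Q}$.

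The next step is to describe the spectrum of $\mathbb{Q}(H^1|H^1)$. Under Assumption~\ref{ass:D_identity} or \ref{ass:same_topology}, Proposition~\ref{prop_2}-(\ref{prop:diag_and_psd})--(\ref{prop:zero_eig}) tells us that $\mathbb{Q}$ is diagonalizable over $\mathbb{R}$, has nonnegative eigenvalues, and that its zero eigenvalue is simple with eigenspace $\mathrm{Ker}(\mathbb{Q})=H^1_{\perp}=\mathrm{span}(\mathbf{1}_N)$. Since $\mathbb{Q}$ is diagonalizable, $\mathbb{R}^N$ is the direct sum of its eigenspaces, and the sum of those associated with the nonzero eigenvalues equals $\mathrm{Range}(\mathbb{Q})=H^1$. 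Therefore $H^1$ has a basis $\{v_1,\dots,v_{N-1}\}$ of eigenvectors of $\mathbb{Q}$ with eigenvalues $\lambda_1\geq\cdots\geq\lambda_{N-1}>0$, so $\mathbb{Q}(H^1|H^1)$ is diagonalizable with strictly positive spectrum (consistent with its invertibility, Proposition~\ref{prop_2}-(\ref{prop:Q_invertible})), whose minimum $\lambda_{N-1}$ is exactly the smallest strictly positive eigenvalue of $\mathbb{Q}$.

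I would then conclude with the standard estimate for a diagonalizable linear ODE. Regarding $T=[\,v_1\ \cdots\ v_{N-1}\,]$ as an isomorphism $\mathbb{R}^{N-1}\to H^1$ and setting $\Lambda=\mathrm{diag}(\lambda_1,\dots,\lambda_{N-1})$, we have $\mathbb{Q}(H^1|H^1)=T\Lambda T^{-1}$, hence $e^{-\mathbb{Q}t}\big|_{H^1}=T e^{-\Lambda t}T^{-1}$ and
\begin{equation*}
\lVert \mathbf{\widehat{\Delta V}}(t)\rVert=\lVert e^{-\mathbb{Q}t}\mathbf{\widehat{\Delta V}}(0)\rVert\le \lVert T\rVert\,\lVert e^{-\Lambda t}\rVert\,\lVert T^{-1}\rVert\,\lVert \mathbf{\widehat{\Delta V}}(0)\rVert\le \kappa\,e^{-\eta t}\,\lVert \mathbf{\widehat{\Delta V}}(0)\rVert,
\end{equation*}
with $\kappa=\lVert T\rVert\,\lVert T^{-1}\rVert$ and $\eta=\lambda_{N-1}$, using that $\lVert e^{-\Lambda t}\rVert=e^{-\lambda_{N-1}t}$ for $t\ge 0$. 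Since $\mathbf{\widehat{\Delta V}}(t)=P_{H^1}(\mathbf{\Delta V})(t)$, this is precisely GES on $H^1$ with rate $\lambda_{N-1}$; and choosing $\mathbf{\Delta V}_0=v_{N-1}$ gives $\lVert P_{H^1}(\mathbf{\Delta V})(t)\rVert=e^{-\lambda_{N-1}t}\lVert v_{N-1}\rVert$, showing that no larger rate is admissible, so the claimed rate is exact.

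I expect the only genuine subtlety to be the lack of symmetry of $\mathbb{Q}$ under Assumption~\ref{ass:D_identity} (it is symmetric only under Assumption~\ref{ass:same_topology}), which prevents choosing the $v_i$ orthonormal and in general forces $\kappa>1$; this is precisely why the argument must invoke diagonalizability (Proposition~\ref{prop_2}-(\ref{prop:diag_and_psd})) rather than a quadratic Lyapunov function in the Euclidean norm. If a Lyapunov-style derivation were preferred one could instead note that, restricted to $H^1$, $\mathbb{Q}$ is similar to a symmetric positive definite matrix and use the associated weighted norm, but the eigenexpansion above is the most direct route and mirrors the invariant-subspace approach of \cite{1643380} already adopted in this section.
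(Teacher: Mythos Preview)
Your proof is correct and follows essentially the same strategy as the paper: both exploit the $\mathbb{Q}$-invariance of $H^1$ and $H^1_\perp$ (Proposition~\ref{prop_2}-(\ref{prop:zero_row_sum}),(\ref{prop:range})) together with the spectral information from Proposition~\ref{prop_2}-(\ref{prop:diag_and_psd}),(\ref{prop:zero_eig}) to conclude that the restriction $\mathbb{Q}(H^1|H^1)$ has strictly positive real spectrum, and then apply the standard exponential bound for a diagonalizable linear ODE. The only presentational difference is that the paper first states a general block-diagonalization lemma (Lemma~\ref{lem:stability}) using an arbitrary basis of $H^1$ and then invokes similarity of $Q$ and $\mathbb{Q}$ to read off the eigenvalues of $Q_{11}$, whereas you go one step further and choose the basis of $H^1$ to consist of eigenvectors of $\mathbb{Q}$ from the outset, which makes the exponential estimate $\lVert e^{-\Lambda t}\rVert=e^{-\lambda_{N-1}t}$ immediate; your additional observation that the initial condition $v_{N-1}$ attains the rate is a nice sharpness remark the paper omits.
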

\begin{proof}
	The proof of Theorem \ref{thm:GES_H1} is presented in Appendix \ref{appendix_1}.
\end{proof}
The above results reveal that, given an initial
condition $\mathbf{{ \Delta V}}(0)=\mathbf{\Delta V_0}$ for system \eqref{eq:ss_simpl_dyn}
and constant inputs $\mathbf{I^*_L}$ and
$\mathbf{V_{ref}^*}=V_{ref}\mathbf{1}_N$, the state $\mathbf{\Delta V}$
converges to the equilibrium \eqref{eq:infinite_equilibria} with $\alpha = \mathbf{\langle{\Delta V_0}\rangle}$.\\
Summarizing the main results of this Section, we have that the consensus scheme described by \eqref{eq:basic_consensus}, Assumption \ref{ass:vref} and 
\begin{equation}
\label{eq:cs_vb}
\mathbf{\langle{\Delta V_0}\rangle}=0
\end{equation}
guarantee the asymptotic achievement of current sharing and voltage balancing in a GES fashion. 
%This happens independently of the weights $k_I>0$ and $a_{ij}>0$ in \eqref{eq:basic_consensus}. Nevertheless, their precise value, as well as the topology of the communication graph, affect the rate of convergence.

\subsection{First-order approximation of primary control loops}
\label{sec:fo_approx}
Figure \ref{fig:hierarc} and Figure \ref{ith_first_order} show the overall
closed-loop scheme of an mG equipped with (\textit{i}) consensus current loops
and (\textit{ii}) primary control loops modeled as first-order transfer
functions. Different from the case analyzed in Section
\ref{sec:simpl_model}, each local dynamics is now described by means
of two states which are the state of the consensus current loop
($\Delta V_i$) and the state of the controlled DGU ($V_{i}$ in Figure
\ref{ith_first_order}). We highlight that relations
\eqref{eq:ss_simpl_1} and \eqref{eq:ss_simpl_2} still hold, while the additional state equation is
\begin{equation}
\label{eq:state2_1st_order}
\mathbf{\dot V} = \Omega\mathbf{V^{\circ}}-\Omega\mathbf{V},
\end{equation}
where vectors
$\mathbf{V^{\circ}}=[V^{\circ}_1,V^{\circ}_2,\dots,V^{\circ}_N]^T$ and $\mathbf{V}$ belong to $\mathbb{R}^N$, and the diagonal matrix $\Omega = \omega_c I\in\mathbb{R} ^{N\times N}$, $\omega_c>0$, collects on its diagonal the approximate bandwidth of each controlled DGU.
In view of Remark \ref{rmk:integral}, assuming equal approximate bandwidths for all the controlled DGUs is a mild constraint.\\
As in Section \ref{sec:simpl_model}, in order to find the dynamics of the closed-loop scheme, we write relations among mG variables. From Figure \ref{fig:hierarc}, we notice that \eqref{eq:ss_simpl_1} holds, and
\begin{equation}
\label{eq:v_circ}
\mathbf{V^{\circ}}= \mathbf{\Delta V} +\mathbf{V_{ref}}.
\end{equation}
Always from Figure \ref{fig:hierarc}, we have that, for line and output currents, equations \eqref{eq:ss_simpl_3} and \eqref{eq:ss_simpl_4} are still valid. By merging relations \eqref{eq:ss_simpl_1}, \eqref{eq:v_circ}, \eqref{eq:state2_1st_order}, \eqref{eq:ss_simpl_3} and \eqref{eq:ss_simpl_4}, we can write the dynamics of the overall mG as
\begin{subequations}
	\label{eq:1st_order_dyn}
	\begin{empheq}[left = \empheqlbrace]{align}
	\label{eq:1st_order_dyn_A}\mathbf{\dot{\Delta V}}  &= -\mathbb{Q} \mathbf{V} - \mathbb{L} D\mathbf{I_{L}}\\
	\label{eq:1st_order_dyn_B}\mathbf{\dot V} &= \Omega\mathbf{\Delta V}-\Omega \mathbf{V} + \Omega \mathbf{V_{ref}},
	\end{empheq}
\end{subequations}
or, equivalently, in compact form,
\begin{equation*}
\label{eq:compact_1st_order}
\quad
\begin{bmatrix}
\mathbf{\dot { {\Delta V}}}\\
\mathbf{\dot{V}}
\end{bmatrix} =\underbrace{ \begin{bmatrix}
	\mathbf{0}_{N\times N} & -\mathbb{Q}\\
	\Omega & -\Omega 
	\end{bmatrix} }_{\mathcal{Q}} \begin{bmatrix}
\mathbf{\Delta V}\\
\mathbf{V}
\end{bmatrix}+ \begin{bmatrix}
\mathbf{0}_{N\times N}& -\mathbb{L}D\\
\Omega & -\Omega 
\end{bmatrix}  \begin{bmatrix}
\mathbf{{V_{ref}}}\\
\mathbf{I_{L}}
\end{bmatrix},
\end{equation*}
with $\mathcal{Q}\in\mathbb{R}^{2N\times 2N}$ and $[\mathbf{\Delta V}^T \mathbf{V}^T]^T\in \mathbb{R}^N\times\mathbb{R}^N$.

\subsubsection{Analysis of equilibria}
The equilibria of
system \eqref{eq:1st_order_dyn} for constant
inputs $(\mathbf{I^*_L},\mathbf{V^*_{ref}})$, are obtained by
computing the solutions
$[\mathbf{\Delta V^*}^T,\mathbf{V^*}^T]^T$ to the system 
\begin{subequations}
	\label{eq:1st_order_eq}
	\begin{empheq}[left =  \empheqlbrace]{align}
	\label{eq:1st_order_eq_A}
	\mathbb{Q} \mathbf{V}^* &= - \mathbb{L}D \mathbf{I_{L}^*}\\
	\label{eq:1st_order_eq_B}
	\mathbf{0}_N &= \Omega\mathbf{\Delta V^*}-\Omega \mathbf{V^*} + \Omega \mathbf{V_{ref}^*}.
	\end{empheq}
\end{subequations}
Since matrix $\Omega$ is invertible, equation \eqref{eq:1st_order_eq_B} becomes
\begin{equation}
\label{eq:eq_1st_order_1}
\mathbf{V^*} = \mathbf{\Delta V^*} +\mathbf{V_{ref}^*}.
\end{equation}
By substituting \eqref{eq:eq_1st_order_1} in \eqref{eq:1st_order_eq_A}, we get
\begin{equation*}
\mathbb{Q} \mathbf{\Delta V^*} = - \mathbb{L}D \mathbf{I_{L}^*}-\mathbb{Q}\mathbf{V_{ref}^*}.
\end{equation*}
that is exactly \eqref{eq:eq_simpl}. We can then exploit Proposition \ref{prop:equilibrium} for concluding that there are infinitely many solutions $\mathbf{\Delta V^*\in\mathbb{R}}^N$ in the form \eqref{eq:infinite_equilibria}.
Replacing \eqref{eq:infinite_equilibria} in \eqref{eq:eq_1st_order_1}, we can write equilibria of system \eqref{eq:1st_order_dyn} as
\begin{equation}
\label{eq:equilibria_1st_order_1}
\left[ \begin{array}{c}
\mathbf{\Delta V^*}  \\
\mathbf{ V^*} 
\end{array}\right] = \left[ \begin{array}{c}
\mathbf{\widehat{\Delta V}^{*}} + \alpha\mathbf{1}_N\\
\mathbf{\widehat{\Delta V}^{*}} + \alpha\mathbf{1}_N+\mathbf{V_{ref}^*}
\end{array}\right].
\end{equation}
Relations between the equilibria and current sharing/voltage balancing are given in the next Proposition.
\begin{prp}
	\label{prop:volt_bal_2}
	Consider system \eqref{eq:1st_order_dyn} with constant inputs
	$\mathbf{I_{L}^*} $ and $\mathbf{ V_{ref}^*}$. At the equilibrium,
	current sharing is achieved. Moreover, if $\mathbf{V_{ref}^*} =
	V_{ref}\mathbf{1}_N$ (i.e. if Assumption \ref{ass:vref} holds) and
	$\alpha$ in \eqref{eq:equilibria_1st_order_1} is equal to zero, then
	the equilibrium $[\mathbf{\Delta V^*}^T,\mathbf{V^*}^T]^T$ verifies the voltage balancing.
\end{prp}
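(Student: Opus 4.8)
The plan is to replay, almost verbatim, the argument used for Proposition~\ref{prop:volt_bal}, since the ingredients needed in the first-order setting are already in place: relation \eqref{eq:ss_simpl_1} still governs the consensus loop, and the equilibria of \eqref{eq:1st_order_dyn} have already been characterized in \eqref{eq:equilibria_1st_order_1} by means of Proposition~\ref{prop:equilibrium}.

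First I would establish current sharing. At any equilibrium of \eqref{eq:1st_order_dyn} one has $\mathbf{\dot{\Delta V}} = \mathbf{0}_N$; because \eqref{eq:ss_simpl_1} remains valid in the first-order model, this forces $\mathbb{L}D\mathbf{I_t^*} = \mathbf{0}_N$. Since $\mathrm{Ker}(\mathbb{L}) = H^1_{\perp}$ by Proposition~\ref{pr:laplacian_prop}-\eqref{lapl_3}, this is equivalent to $D\mathbf{I_t^*}$ being a multiple of $\mathbf{1}_N$, i.e. to $\left[\frac{I_{t1}^*}{I_{t1}^s},\dots,\frac{I_{tN}^*}{I_{tN}^s}\right]^T = \bar{I_{t}}\mathbf{1}_N$ for some scalar $\bar{I_{t}}$, which is precisely condition \eqref{eq:cs_defn}. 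This step is identical to the current-sharing part of Proposition~\ref{prop:volt_bal}. Then I would prove voltage balancing: from the second block of \eqref{eq:equilibria_1st_order_1}, $\mathbf{V^*} = \mathbf{\widehat{\Delta V}^{*}} + \alpha\mathbf{1}_N + \mathbf{V_{ref}^*}$ with $\mathbf{\widehat{\Delta V}^{*}}\in H^1$; imposing Assumption~\ref{ass:vref} so that $\mathbf{V_{ref}^*} = V_{ref}\mathbf{1}_N$, setting $\alpha = 0$, and averaging both sides, the term $\langle\mathbf{\widehat{\Delta V}^{*}}\rangle$ vanishes by definition of $H^1$ and the term $\langle\alpha\mathbf{1}_N\rangle$ vanishes by the choice $\alpha=0$, leaving $\langle\mathbf{V^*}\rangle = V_{ref}$, which is \eqref{eq:vb_defn}.

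I do not anticipate a genuine obstacle: both claims collapse to facts already proved — the kernel of $\mathbb{L}$ from Proposition~\ref{pr:laplacian_prop} and the explicit form \eqref{eq:equilibria_1st_order_1} of the equilibria. The only point deserving a line of care is to make explicit that the consensus dynamics \eqref{eq:ss_simpl_1} is left unchanged by the first-order abstraction of the primary loops, so that the current-sharing argument transfers unchanged; the voltage-balancing conclusion is then the same one-line averaging computation as in the unit-gain case of Proposition~\ref{prop:volt_bal}.
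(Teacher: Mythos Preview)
Your proposal is correct and matches the paper's own proof essentially verbatim: the paper simply notes that \eqref{eq:ss_simpl_1} still holds, so relation \eqref{eq:I_curr_sharing} follows at equilibrium, and then states that the remainder is identical to the proof of Proposition~\ref{prop:volt_bal}. Your write-up is in fact more explicit than the paper's, spelling out both the kernel argument for current sharing and the averaging computation for voltage balancing.
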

\begin{proof}
	Since equation \eqref{eq:ss_simpl_1} holds, one has that, at the
	equilibrium, relation \eqref{eq:I_curr_sharing} is verified. Then, the
	proof is  identical to the one of Proposition \ref{prop:volt_bal}.
\end{proof}
Next, similarly to the simplified case described in Section \ref{sec:st_an_1}, we evaluate the stability properties of the closed-loop system \eqref{eq:1st_order_dyn} so as to show the convergence of state $[\mathbf{\Delta V}^T \mathbf{V}^T]^T$ to an equilibrium which guarantees current sharing and voltage balancing.
\subsubsection{Stability analysis}
\begin{prp}
	\label{prop:7}
	If $[\mathbf{\Delta V}^T \mathbf{V}^T]^T$ verifies \eqref{eq:1st_order_dyn}, then $$
	\underbrace{\left[ \begin{array}{c}
		\mathbf{\Delta V}  \\
		\mathbf{ V} 
		\end{array}\right]}_{\mathbf{v}} = \underbrace{\left[ \begin{array}{c}
		\mathbf{\overline{\Delta V}}  \\
		\mathbf{\overline V} 
		\end{array}\right]}_{\mathbf{\bar v}} + \underbrace{\left[ \begin{array}{c}
		\mathbf{\widehat{\Delta V}}  \\
		\mathbf{\widehat V} 
		\end{array}\right]}_{\mathbf{\hat v}},
	$$ where $\mathbf{\bar v}\in H^1_{\perp}\times H^1_{\perp}$ and $\mathbf{\hat v}\in H^1\times H^1$ fulfill
	\begin{subequations}
		\label{eq:subsys_perp}            
		\begin{empheq}[left =  {\widetilde\Sigma}_{\perp}^1:\empheqlbrace]{align}
		\label{eq:sysdist_perp_A}\mathbf{\dot{\overline{\Delta V}}} &= \mathbf{0}_N\\
		\label{eq:sysdist_perp_B} \mathbf{\dot{\overline{V}}} &= \Omega\mathbf{\overline {\Delta V}}-\Omega \mathbf{\overline{V}} + \Omega\mathbf{\overline{V_{ref}}}                          
		\end{empheq}
	\end{subequations}
	and
	\begin{subequations}
		\label{eq:subsys_H1}            
		\begin{empheq}[left = {\widetilde\Sigma}^1:\empheqlbrace]{align}
		\label{eq:subsys_H1_A}\mathbf{\dot{\widehat{\Delta V}}} &= -\mathbb{Q}\mathbf{\widehat V} -\mathbb{L}D\mathbf{{I_L}}\\
		\label{eq:subsys_H1_B}\mathbf{\dot {\widehat V}} &= \Omega\mathbf{\widehat {\Delta V}}-\Omega \mathbf{\widehat V} + \Omega \mathbf{\widehat{V_{ref}}},
		\end{empheq}
	\end{subequations}
	respectively.
\end{prp}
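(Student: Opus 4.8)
The plan is to mirror the proof of Proposition~\ref{pr:invariant}, now working with the coordinate-wise direct sum decomposition: since $\mathbb{R}^N = H^1\oplus H^1_\perp$, one has $\mathbb{R}^{2N} = (H^1\times H^1)\oplus(H^1_\perp\times H^1_\perp)$. First I would write the state $\mathbf{v} = [\mathbf{\Delta V}^T\ \mathbf{V}^T]^T$ and the inputs $\mathbf{I_L}$, $\mathbf{V_{ref}}$ according to \eqref{eq:decomposition}, with ``$\widehat{\phantom{v}}$'' and ``$\overline{\phantom{v}}$'' denoting the $H^1$ and $H^1_\perp$ components. By uniqueness of this splitting, it then suffices to show that $\mathbf{\bar v}$ solves \eqref{eq:subsys_perp}, that $\mathbf{\hat v}$ solves \eqref{eq:subsys_H1}, and that each of the two subspaces is invariant for the flow of \eqref{eq:1st_order_dyn}.

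For the $H^1_\perp$ component I would average both sides of \eqref{eq:1st_order_dyn_A}--\eqref{eq:1st_order_dyn_B} (equivalently, apply $P_{H^1_\perp}$). By Proposition~\ref{prop_2}-(\ref{prop:range}) we have $\mathbb{Q}\mathbf{V}\in H^1$, and by Proposition~\ref{pr:laplacian_prop}-(\ref{lapl_3}) applied to $\mathbb{L}$ we have $\mathbb{L}D\mathbf{I_L}\in H^1$; hence the right-hand side of \eqref{eq:1st_order_dyn_A} has zero average and $\frac{d}{dt}\langle\mathbf{\Delta V}\rangle = 0$, which is \eqref{eq:sysdist_perp_A}. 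For \eqref{eq:1st_order_dyn_B}, since $\Omega = \omega_c I$ is a scalar multiple of the identity it commutes with the projection, so averaging gives $\mathbf{\dot{\overline V}} = \Omega\mathbf{\overline{\Delta V}} - \Omega\mathbf{\overline V} + \Omega\mathbf{\overline{V_{ref}}}$, i.e. \eqref{eq:sysdist_perp_B}. Every term on the right-hand side lies in $H^1_\perp$, so $H^1_\perp\times H^1_\perp$ is invariant.

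For the $H^1$ component I would subtract. Since $\mathbf{\dot{\overline{\Delta V}}} = \mathbf{0}_N$, we get $\mathbf{\dot{\widehat{\Delta V}}} = \mathbf{\dot{\Delta V}} = -\mathbb{Q}\mathbf{V} - \mathbb{L}D\mathbf{I_L}$; moreover $\mathbf{\overline V}\in H^1_\perp = \mathrm{Ker}(\mathbb{Q})$ by Proposition~\ref{prop_2}-(\ref{prop:zero_row_sum}), so $\mathbb{Q}\mathbf{V} = \mathbb{Q}\mathbf{\widehat V}$ and \eqref{eq:subsys_H1_A} follows. Subtracting \eqref{eq:sysdist_perp_B} from \eqref{eq:1st_order_dyn_B} and using once more that $\Omega$ is scalar yields $\mathbf{\dot{\widehat V}} = \Omega\mathbf{\widehat{\Delta V}} - \Omega\mathbf{\widehat V} + \Omega\mathbf{\widehat{V_{ref}}}$, i.e. \eqref{eq:subsys_H1_B}; these right-hand sides lie in $H^1$ by Proposition~\ref{prop_2}-(\ref{prop:range}) and Proposition~\ref{pr:laplacian_prop}-(\ref{lapl_3}), so $H^1\times H^1$ is invariant as well, which closes the argument.

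There is no genuine obstacle here — it is essentially the bookkeeping of Proposition~\ref{pr:invariant} repeated in the $2N$-dimensional setting. The only point that requires attention is tracking which terms carry the full input signal and which carry only a projection: $\mathbb{L}D\mathbf{I_L}$ must be kept intact (because $D$ does not preserve $H^1_\perp$, so $\mathbb{L}D\mathbf{\overline{I_L}}$ need not vanish, even though $\mathbb{L}D\mathbf{I_L}\in H^1$ always), whereas $\Omega\mathbf{V_{ref}}$ splits cleanly as $\Omega\mathbf{\widehat{V_{ref}}} + \Omega\mathbf{\overline{V_{ref}}}$ precisely because $\Omega = \omega_c I$ commutes with the projections onto $H^1$ and $H^1_\perp$.
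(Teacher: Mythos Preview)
Your proposal is correct and follows essentially the same approach as the paper's proof, which simply says to proceed as in the proof of Proposition~\ref{pr:invariant} for the $\mathbf{\Delta V}$ component and to average both sides of \eqref{eq:1st_order_dyn_B} for the $\mathbf{V}$ component. Your write-up is in fact more detailed than the paper's, and your closing remark about why $\mathbb{L}D\mathbf{I_L}$ must be kept intact (rather than split into $H^1$ and $H^1_\perp$ parts of $\mathbf{I_L}$) is a useful clarification that the paper leaves implicit.
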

\begin{proof}
	The dynamics of $\mathbf{\overline{\Delta V}}$ and $\mathbf{\widehat{\Delta
			V}}$ can be derived proceeding as in the proof of Proposition \ref{pr:invariant}. 
	In a similar way, by averaging both sides of
	\eqref{eq:1st_order_dyn_B}, one derives the (independent) dynamics of $\mathbf{\overline{V}}$ and $\mathbf{\widehat{V}}$. 
\end{proof}
The above decomposition allows us to evaluate the evolution of state $\mathbf{v}$ on $\mathbb{R}^{N\times N}$ by separately analyzing dynamics \eqref{eq:subsys_perp} and \eqref{eq:subsys_H1}, i.e. studying the behavior of projections $\mathbf{\bar v} = P_{\mathcal{V}}(\mathbf{v})$ and $\mathbf{\hat v} = P_{\mathcal{W}}(\mathbf{v})$, with $\mathcal{V} = H^1_{\perp}\times H^1_{\perp}$ and $\mathcal{W}= H^1\times H^1$. \\
First we focus on $\tilde{\Sigma}_{\perp}^1$. Without loss of
generality, for stability analysis we can neglect the input vector $\mathbf{\overline{V_{ref}}}$ in \eqref{eq:sysdist_perp_B}, thus having:
\begin{subequations}
	\label{eq:subsys_perp_noinputs}            
	\begin{empheq}[left =  \empheqlbrace]{align}
	\label{eq:sysdist_perp_A_free}\mathbf{\dot{\overline{\Delta V}}} &= \mathbf{0}_{N}\\
	\label{eq:sysdist_perp_B_free} \mathbf{\dot{\overline{V}}} &= \Omega\mathbf{\overline {\Delta V}}-\Omega \mathbf{\overline{V}}     .                    
	\end{empheq}
\end{subequations}
By construction, \eqref{eq:sysdist_perp_B_free} collects the decoupled equations
\begin{equation}
\label{eq:single_Vtilde}
\dot {\overline V}_i = \omega_c \overline{\Delta V_{i}}-\omega_c \overline V_{i}\hspace{5mm}\forall i = 1,\dots, N,
\end{equation}
where, according to \eqref{eq:sysdist_perp_A}, each term $\overline{\Delta V_{i}}$ in can be treated as an exogenous input (thus not affecting stability properties). It follows that dynamics \eqref{eq:single_Vtilde} is asymptotically stable, since $\omega_c>0$. In summary, system \eqref{eq:subsys_perp} tells us that the average $\mathbf{{\overline{\Delta V}}}$ will remain constant in time (and equal to $\langle\mathbf{{\Delta V_0}}\rangle$), while $\mathbf{\overline V}$ will converge to the origin. For studying stability properties of system $\widetilde{\Sigma}^1$, we consider \eqref{eq:1st_order_dyn} without inputs, i.e. 
\begin{equation}
\label{eq:subsys_H1_free}            
\left\lbrace
\begin{aligned}
\mathbf{\dot{{ \Delta V}}} &= -{ \mathbb{Q}}\mathbf{{ V}}\\
\mathbf{{ \dot V}} &=\Omega \mathbf{\Delta V}-\Omega\mathbf{V}
\end{aligned}
\right.
\end{equation}
and analyze stability on $H^1\times H^1$. We have the following result.
\begin{theorem}
	\label{thm:stability_1st_order}
	Under Assumption \ref{ass:D_identity} or \ref{ass:same_topology}, the origin of \eqref{eq:subsys_H1_free} is GES on $H^1\times H^1$. Furthermore, matrix $\mathcal{Q}$ has a simple zero eigenvalue and the rate of convergence is the maximum among real parts of all other eigenvalues.
\end{theorem}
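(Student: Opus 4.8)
The plan is to mirror the strategy used for Theorem~\ref{thm:GES_H1}: first decompose $\mathbb{R}^{2N}$ into two $\mathcal{Q}$-invariant subspaces, then show that the block of $\mathcal{Q}$ acting on $H^1\times H^1$ is Hurwitz and read off its spectrum, while the block acting on the complement is harmless.

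First I would record the invariant-subspace decomposition already displayed in Proposition~\ref{prop:7}. Since $\mathrm{Range}(\mathbb{Q})=H^1$ (Proposition~\ref{prop_2}-(\ref{prop:range})) and $\Omega=\omega_cI$ preserves every subspace, $\mathcal{W}=H^1\times H^1$ is $\mathcal{Q}$-invariant; likewise, because $\mathbb{Q}\mathbf{1}_N=\mathbf{0}_N$ (as $\mathbf{1}_N\in\mathrm{Ker}(\mathbb{Q})=H^1_{\perp}$ by Proposition~\ref{prop_2}-(\ref{prop:zero_row_sum})) and $\Omega$ is scalar, $\mathcal{V}=H^1_{\perp}\times H^1_{\perp}$ is $\mathcal{Q}$-invariant as well, with $\mathcal{Q}|_{\mathcal{V}}$ represented by the $2\times2$ matrix $\bigl[\begin{smallmatrix}0&0\\\omega_c&-\omega_c\end{smallmatrix}\bigr]$, whose eigenvalues are $0$ and $-\omega_c$. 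Because $\mathbb{R}^{2N}=\mathcal{W}\oplus\mathcal{V}$, picking a basis adapted to this splitting block-diagonalizes $\mathcal{Q}$, so $\det(\lambda I-\mathcal{Q})=\det\bigl(\lambda I-\mathcal{Q}|_{\mathcal{W}}\bigr)\,\lambda(\lambda+\omega_c)$.

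The core computation is the spectrum of $\mathcal{Q}|_{\mathcal{W}}$. Under Assumption~\ref{ass:D_identity} or \ref{ass:same_topology}, Proposition~\ref{prop_2}-(\ref{prop:Q_invertible}),(\ref{prop:diag_and_psd}),(\ref{prop:zero_eig}) guarantee that $\mathbb{Q}$ restricted to the invariant subspace $H^1$ is diagonalizable with eigenvalues $\nu_1\ge\cdots\ge\nu_{N-1}>0$. Writing $\mathcal{Q}|_{\mathcal{W}}=\bigl[\begin{smallmatrix}0&-\mathbb{Q}|_{H^1}\\\omega_cI&-\omega_cI\end{smallmatrix}\bigr]$ and evaluating the determinant of $\lambda I-\mathcal{Q}|_{\mathcal{W}}$ by the Schur complement of its upper-left block (valid for $\lambda\neq0$, then extended by polynomial identity) yields $\det(\lambda I-\mathcal{Q}|_{\mathcal{W}})=\det\bigl(\lambda(\lambda+\omega_c)I+\omega_c\mathbb{Q}|_{H^1}\bigr)=\prod_{k=1}^{N-1}\bigl(\lambda^2+\omega_c\lambda+\omega_c\nu_k\bigr)$. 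Every quadratic factor has strictly positive coefficients ($\omega_c>0$, $\omega_c\nu_k>0$), so by the Routh--Hurwitz criterion in degree two both of its roots lie in the open left half-plane; hence $\mathcal{Q}|_{\mathcal{W}}$ is Hurwitz, and the origin of \eqref{eq:subsys_H1_free} is GES on $H^1\times H^1$, with $\kappa,\eta$ obtained in the standard way from $\lVert e^{\mathcal{Q}|_{\mathcal{W}}t}\rVert$. Combining with the factor $\lambda(\lambda+\omega_c)$ from $\mathcal{Q}|_{\mathcal{V}}$, one gets $\mathrm{spec}(\mathcal{Q})=\{0,-\omega_c\}\cup\{\lambda:\lambda^2+\omega_c\lambda+\omega_c\nu_k=0,\ k=1,\dots,N-1\}$: the eigenvalue $0$ arises only from $\lambda(\lambda+\omega_c)$, since $\nu_k>0$ makes $\lambda=0$ give $\omega_c\nu_k\neq0$ in each quadratic, so $0$ is simple, and all remaining eigenvalues have negative real part. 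Moreover $-\omega_c$ is strictly more negative than the real part of every root of the quadratics (those real parts equal $-\omega_c/2$ in the complex case and lie in $(-\omega_c,0)$ in the real case), so the decay of $P_{H^1\times H^1}(\mathbf v)$ is governed by $\max\{\mathrm{Re}\,\lambda:\lambda\in\mathrm{spec}(\mathcal{Q}),\ \lambda\neq0\}$, which is exactly the claimed rate.

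I expect the main obstacle to be the bookkeeping of the restricted spectrum: one must be certain that the zero and the $-\omega_c$ eigenvalues genuinely live in the $\mathcal{V}$ block and do not reappear, possibly with generalized eigenvectors, inside $\mathcal{W}$. The invariant-subspace factorization of the characteristic polynomial settles this, but it hinges on invoking Proposition~\ref{prop_2}-(\ref{prop:zero_eig}) to ensure $\nu_k>0$ for all $k\le N-1$ (without it, a quadratic factor could contribute a second zero eigenvalue and destroy simplicity). A secondary nuisance is the sharpness of the rate when some quadratic has a double root and $\mathcal{Q}|_{\mathcal{W}}$ is not diagonalizable: then the exponential bound holds for every rate strictly below the stated one, and with a polynomial prefactor at the stated one — a subtlety the statement implicitly absorbs.
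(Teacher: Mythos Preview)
Your proposal is correct and follows essentially the same route as the paper: both arguments block-diagonalize $\mathcal{Q}$ along the invariant splitting $(H^1\times H^1)\oplus(H^1_{\perp}\times H^1_{\perp})$, identify the $\{0,-\omega_c\}$ pair on the $H^1_{\perp}$ block, and reduce the $H^1$ block to the quadratics $\lambda^2+\omega_c\lambda+\omega_c\nu_k=0$ with $\nu_k>0$ (the paper obtains these by eliminating $\mathbf{\Delta V}$ in the eigenvalue system, you via a Schur complement---the same computation). Your extra check that $-\omega_c$ lies strictly to the left of every quadratic root is a welcome refinement that the paper leaves implicit when asserting that the rate equals the maximal real part over \emph{all} nonzero eigenvalues of $\mathcal{Q}$.
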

\begin{proof}
	The proof is given in Appendix \ref{appendix_2}.
\end{proof}
By means of Theorem \ref{thm:stability_1st_order}, we have that, given
an initial condition $[\mathbf{\Delta V_0}^T \mathbf{V_0}^T]^T$ for system
\eqref{eq:1st_order_dyn}, the state $[\mathbf{\Delta V}^T \mathbf{V}^T]^T$
will converge to the equilibrium in \eqref{eq:equilibria_1st_order_1},
with $\alpha = \mathbf{\langle\Delta V_0\rangle}$.

The results above show that, for system \eqref{eq:1st_order_dyn}, current sharing is achieved in a GES fashion. In a similar way,  asymptotic voltage
balancing is ensured if Assumption
\ref{ass:vref} and \eqref{eq:cs_vb} are fulfilled.

\subsection{PnP design of secondary control}
\label{sec:pnp_sec_ctrl}
We now describe the procedure for designing secondary controllers in a PnP fashion. We will show that, as for the PnP design of primary regulators, when a DGU is added or removed, the secondary control layer can be updated only locally for preserving current sharing and voltage balancing.

\paragraph{Plug-in operation.} Under Assumption \ref{ass:D_identity}, when a DGU (say DGU $i$) sends a plug-in request at a time $\bar t$, it choses a
set $\mathcal{N}_{i}^c$ of communication neighbors (which must not necessarily coincide with $\mathcal{N}_{i}^{el}$) and fixes parameters $a_{ij}>0$, $\forall
j\in \mathcal{N}_{i}^c$, in order to design the local consensus filter
\eqref{eq:basic_consensus}. At the same time, each DGU $j$, $j\in\mathcal{N}_i^c$, updates its consensus filter by setting $a_{ji} = a_{ij}$ in \eqref{eq:basic_consensus}. If, instead, $D\neq I$, one can fulfill Assumption \ref{ass:same_topology} by allowing the entering DGU to receive the value of conductances $\frac{1}{R_{ij}}$ from neighboring DGUs $j\in \mathcal{N}_i^{el}$ and by setting $\mathcal{N}_i^{c} =\mathcal{N}_i^{el}$  (see Remark \ref{rmk:same_topology}). By doing so, DGU $i$ can choose parameters $a_{ij} = \mu\frac{1}{R_{ij}}$, and each DGU $j\in\mathcal{N}_i^{c}$ sets $a_{ji} =a_{ij}$, thus updating its consensus filter \eqref{eq:basic_consensus}.

Overall, Theorems \ref{thm:GES_H1} and
\ref{thm:stability_1st_order} ensure that the disagreement dynamics of
the mG states is GES.
Let Assumption \ref{ass:vref}
hold for all the interconnected DGUs in the mG before $\bar t$ and let
us denote the common reference voltage by $V_{ref}$. If DGU $i$ sets
$V_{ref,i} = V_{ref}$ and if we choose $\Delta V_i(\bar t) = 0$ (thus having $\langle\left[ \begin{array}{c}
\mathbf{\Delta V'}(\bar t) \\
\Delta V_i (\bar t)
\end{array}\right]\rangle= 0$, where $\mathbf{\Delta V'}(\bar t)$ is the
vector $\mathbf{\Delta V}$ prior the plugging-in of DGU $i$), both current sharing and
voltage balancing are preserved in the asymptotic r\'{e}gime (see Propositions
\ref{prop:volt_bal} and \ref{prop:volt_bal_2}). 

\paragraph{Unplugging operation.} Under Assumption \ref{ass:D_identity} or \ref{ass:same_topology}, when a DGU (say DGU $j$) is
unplugged at time $\bar t$, provided that the new graphs $\mathcal{G}_{el}$ and $\mathcal{G}_c$ fulfill Assumption \ref{ass:conneced}, the key condition that must
be guaranteed is that the vector $\mathbf{\Delta V}_{-j}$
(i.e. $\mathbf{\Delta V}$ without element $j$) verifies
$\mathbf{\langle\Delta V}_{-j}(\bar t)\rangle=0$. If $\mathbf{\langle\Delta
	V}(\bar t^-)\rangle=0$, this can be achieved by re-setting $\Delta
V_i(\bar t)=\Delta V_i(\bar t^-)+\frac{\Delta V_j(\bar
	t^-)}{|\mathcal{N}_j^c|}$ for all $i\in\mathcal{N}_j^c$, and keeping $\Delta V_i(\bar t)=\Delta V_i(\bar
t^-)$ for all $i \notin\mathcal{N}_{j}^c\cup \{j\}$. Indeed, this yields

\begin{equation*}
\begin{aligned}
&\mathbf{\langle\Delta V_{-j}}(\bar t)\rangle
=\frac{1}{N-1}\sum\limits_{i\in\mathcal{N}_j^c}\Delta V_i(\bar
t)+\frac{1}{N-1}\sum\limits_{i\notin\mathcal{N}_j^c\cup\{j\}}\Delta V_i(\bar
t)=\\
&=\frac{1}{N-1}\hspace{-1mm}\left(\sum\limits_{i\in\mathcal{N}_{j}^c}\hspace{-1mm}\Delta V_i(\bar
t^-) +\frac{1}{|\mathcal{N}_{j}^c|}\sum\limits_{i\in\mathcal{N}_{j}^c}\hspace{-1mm}\Delta V_j(\bar
t^-)+ \hspace{-2mm}\sum\limits_{i\notin\mathcal{N}_j^c\cup\{j\}}\hspace{-4mm}\Delta V_i(\bar t^-) \right)=\\
&=\frac{1}{N-1}\left(N \mathbf{\langle\Delta V}(\bar t^-)\rangle\right)=0.
\end{aligned}
\end{equation*}

\section{Validation of secondary controllers}
\subsection{Simulation results}
\label{sec:simulations}
In this Section, we demonstrate the capability of the proposed control
scheme to guarantee current sharing and voltage balancing when DGUs are plugged in/out and load changes occur. Simulations have been performed in Simulink/PLECS \cite{allmeling2013plecs}.
We consider an mG composed of 7 DGUs, interconnected as in Figure
	\ref{fig:sim_stages}, with non-identical electrical parameters and power lines. Notice the presence of loops in the electrical network, that complicates voltage regulation. Primary PnP voltage regulators are designed according to the method in \cite{tucci2015decentralized}, whereas, for the secondary control layer, we choose $k_I$ in \eqref{eq:ss_simpl_1} equal to 1. DGUs have rated currents ${I}_{ti}^r = 10$ A, $i = 1,2,3$, ${I}_{t4}^r = {I}_{t5}^r = 5$ A and ${I}_{t6}^r = {I}_{t7}^r = 3.33$ A. Since, in this scenario, we aim to achieve the current sharing condition \eqref{eq:cs_defn}, we set $I_{ti}^s = I_{ti}^r$, $i= 1,\dots,7$, thus having, in \eqref{eq:ss_simpl_1}, $D\neq I$. Then, in order to guarantee the asymptotic stability of the hierarchical control scheme, we fulfill Assumption \ref{ass:same_topology} by (\textit{i}) letting $\mathcal{G}_{c}$ have the same topology of $\mathcal{G}_{el}$, and (\textit{ii}) picking $a_{ij} = \frac{1}{R_{ij}}$, if DGUs $i$ and $j$ are connected by a communication link (see Remark \ref{rmk:same_topology}). Furthermore, the voltage reference in Assumption \ref{ass:vref} is $V_{ref} = 48$ V, and the electrical parameters are given in Appendix \ref{sec:AppElectrPar}.
 
In the following, we describe Figure
\ref{fig:sim_complete_1}, which illustrates the evolution of the main
electrical quantities (i.e. measured DGU output currents in Amperes, DGU output currents in p.u., PCC voltages and
average PCCs voltage) during the consecutive simulation
stages shown in Figure~\ref{fig:sim_stages}.

\paragraph{Stage 1:}
At time $t_0 = 0$, all the DGUs are isolated and only
the primary PnP voltage regulators, designed as in \cite{tucci2015decentralized}, are active. Therefore, as shown in
Figure \ref{fig:sim_complete_1}, (\textit{i}) each DGU supplies its local load while keeping the
corresponding PCC voltage at 48 V, and (\textit{ii}) the DGU output currents in p.u.
are different. We further highlight that primary controllers have been designed assuming that all the switches in Figure \ref{fig:sim_stages} connecting DGUs 1-6 are closed. From \cite{tucci2015decentralized}, however, they also stabilize the mG when all switches are open.

\paragraph{Stage 2:}
Subsystems 1-6 are
	connected together at time $t_1 = 2$ s and, as described before, no update of primary controllers is needed. As shown in the plot of $V_{PCC}$ in Figure \ref{fig:sim_complete_1}, voltage stability and fast transients after the plug-in operations are ensured by PnP primary regulators. The secondary control layer is still disabled at this stage.

\paragraph{Stage 3:}
At time $t_2 = 5$ s, we activate the secondary control layer for DGUs 1-6, thus
ensuring asymptotic current sharing among them (see the plot of the currents in p.u.). This is achieved by automatically
adjusting the voltages at PCC (as shown in the plot of $V_{PCC}$).Moreover, the top plot of Figure \ref{fig:sim_complete_1} reveals that, as expected, DGUs 4 and 5 share half of the current of DGUs 1-3, while the output current of DGUs 6 is one third of the currents of DGUs 1-3. We further highlight that, by setting $\Delta V_i(t_{2}) = 0$, $i =
	1,\dots,6$, as described in Section \ref{sec:pnp_sec_ctrl}, condition
	\eqref{eq:cs_vb} is fulfilled and asymptotic voltage balancing is guaranteed (see the plot of the average PCCs voltage, indicated with $V_{av}$).
	
\paragraph{Stage 4:} For evaluating the PnP capabilities of our control scheme, at $t_3
=15$ s, DGU 7 sends a plug-in request to DGUs
4 and 5. Previous primary controllers of DGUs 4 and 5
still fulfill the plug-in conditions in \cite{tucci2015decentralized}: they are therefore maintained and the plug-in of DGU 7 is performed. In the light of Assumption \ref{rmk:same_topology}, at $t_3$ also the secondary controller of DGU 7 is activated, thus enabling the DGU to 
	contribute to current sharing. This can be noticed in Figure
\ref{fig:sim_complete_1}, as all PCC voltages change in order
to let all the output currents in p.u. converge to a common
value. We also notice that the measured output currents (top plot of Figure \ref{fig:sim_complete_1}) are still shared accordingly (i.e. $I_{t1} = I_{t2} = I_{t3} = 2I_{t4} = 2I_{t5} = 3I_{t6}= 3I_{t7}$). Furthermore, choosing $\Delta V_7(t_3) = 0$ (as described in
Section \ref{sec:pnp_sec_ctrl}), we maintain the average PCCs voltage at 48 V (see $V_{av}$, stage 4). 

\paragraph{Stage 5:} At $t_4 = 25$ s, we halve the load of DGU 1,
thus increasing the corresponding load current $I_{L1}$ and causing a
peak in the corresponding output current. However, after few seconds,
all the DGUs share again the total load current, while the averaged PCC voltage converges to the reference value. 

\paragraph{Stage 6:}  Finally, we assess the performance of the proposed hierarchical scheme when the sudden disconnection of a DGU occurs. To this aim, at time $t_5 = 35$ s, we disconnect DGU 3. Figure \ref{fig:sim_complete_1} (stage 6) shows that voltage stability, current sharing and voltage balancing are preserved in the mG composed of DGU 1-2 and 4-7.

\begin{figure}
	\centering
	\includegraphics[scale=.3]{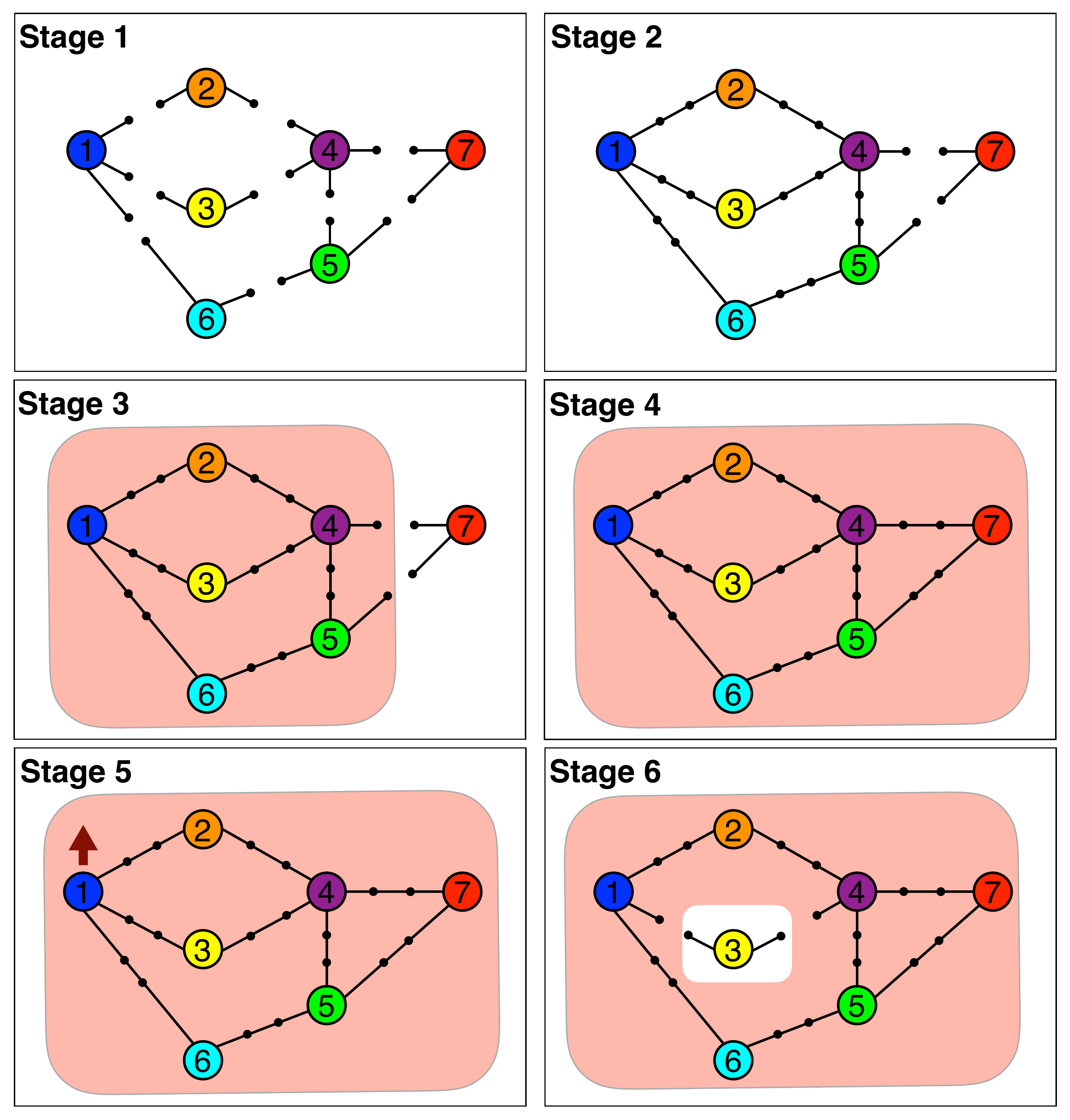}
	\caption{Simulation stages: numbered nodes represent DGUs, while black lines denote power lines. The secondary control layer is activated for the DGUs contained in the red area. Open switches in stages 1, 2, 3 and 6 denote disconnected DGUs. The arrow in stage 5 represents a step up in the load current of DGU 1.}
	\label{fig:sim_stages}
\end{figure}
\begin{figure}
	\centering
	\includegraphics[scale=0.4]{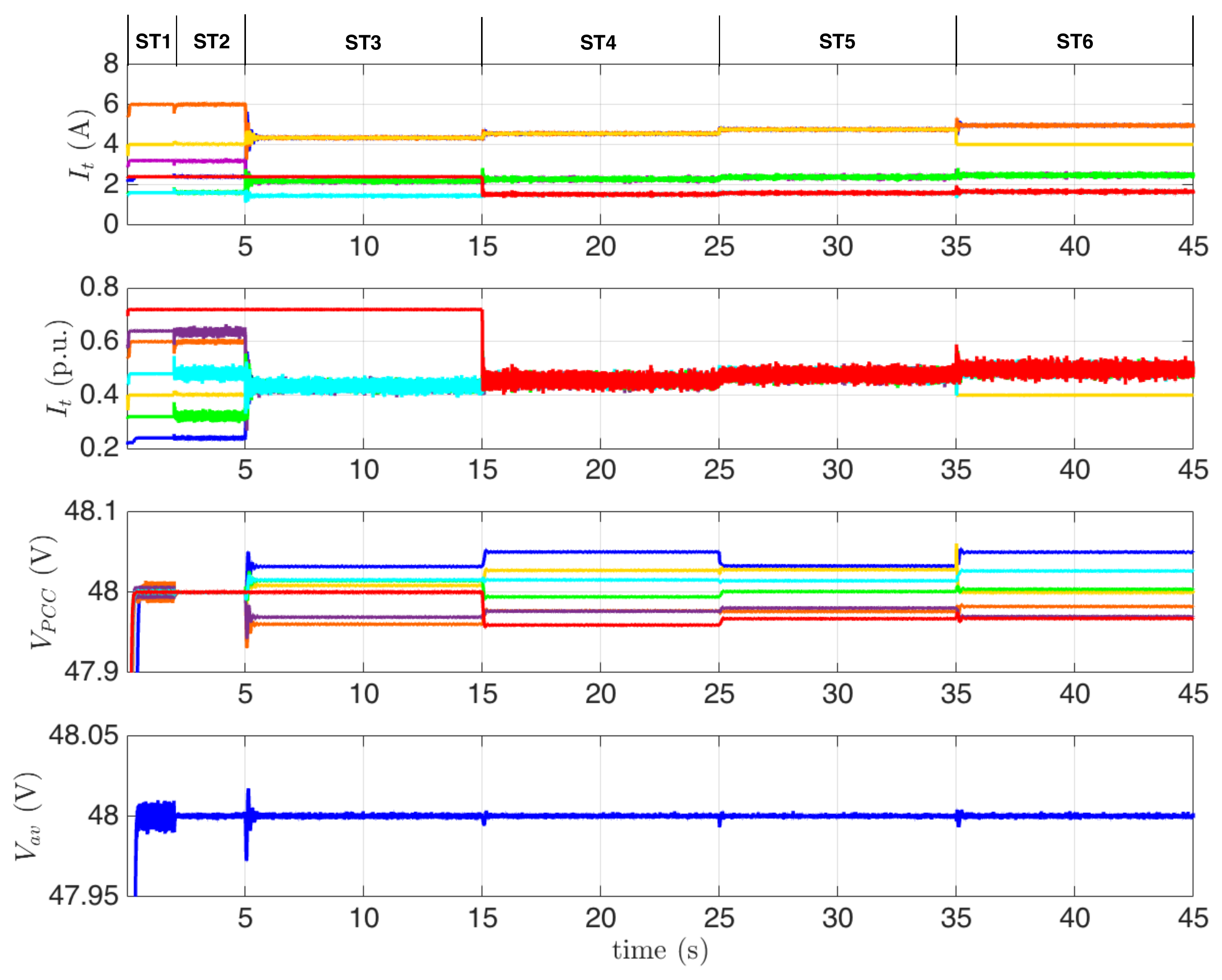}
	\caption{Simulation results: evolution of measured output currents, output currents in p.u., 
		voltages at PCCs and average PCCs voltage. Lines in the
		plots are associated with different DGUs and they are color-coded as in Figure \ref{fig:sim_stages}. Simulation stages are those shown in Figure \ref{fig:sim_stages}.}
	\label{fig:sim_complete_1}
\end{figure}
\subsection{Experimental results}
\label{sec:experiments}
Performance brought about by the presented hierarchical scheme been also validated via experimental tests based on the mG platform in the top panel of Figure \ref{fig:experimental_setup}, which consists of three \textit{Danfoss} inverters, a dSPACE1103 control board and LEM sensors. In order to properly emulate DC/DC converters (i.e. Buck converters), only the first phase of each inverter has been used. Buck converters operate in parallel to emulate DGUs while different local load conditions have been obtained by connecting each PCC to a resistive load. All the converters are supplied by DC source generators.

For this scenario, primary PnP voltage controllers are designed using the approach in \cite{tucci2016improved}, and Assumption \ref{ass:D_identity} holds. Hence, since $D = I$, $I_{t1}^s = I_{t2}^s = I_{t3}^s = \bar{I}_t$, i.e. we aim to achieve the asymptotic current sharing condition \eqref{eqn:equal_sharing}. Moreover, we set $k_I$ in \eqref{eq:ss_simpl_1} equal to 0.5, while coefficients $a_{ij}$ in \eqref{eq:basic_consensus} are equal to 1 if DGUs $i$ and $j$ are connected by a communication link, 0 otherwise. We also recall that, under Assumption \ref{ass:D_identity}, the stability of the mG equipped with our hierarchical scheme is preserved even if the topologies of $\mathcal{G}_c$ and $\mathcal{G}_{el}$ differ. Notably, we consider the mG in Figure \ref{fig:experimental_setup}, where $\mathcal{G}_{el}$ and $\mathcal{G}_{c}$ are highlighted in blue and red, respectively, and the edges of $\mathcal{G}_{el}$ are $RL$ lines.
	
The controllers have been implemented in Simulink and compiled to the dSPACE system in order to command the Buck switches at a frequency of 10 kHz.

%Although the dSPACE platform is unique, separate local PnP voltage regulators have been implemented for each converter, so as to preserve the decentralized nature of the primary control layer.\\
The evolution of the main electrical quantities is shown in Figure \ref{fig:experiments}. At time $t_0 = 0$ s, all the DGUs are isolated and do not communicate. At times $t_1\approx 2.5$ s, $t_2\approx 5$ s and $t_3\approx 10$ s, we connect DGU 1 to 2, 2 to 3 and 1 to 3, respectively, thus obtaining a loop in the electrical topology. As described in \cite{tucci2016improved}, no update of primary PnP controllers is required when units are connected. As shown in the plot of the PCC voltages in Figure \ref{fig:experiments}, PnP primary voltage regulators ensure smooth transitions and stability. Since for $t < t_3$ the secondary layer is not active, the output currents are not equally shared and the PCC voltages coincide with the reference (notably, Assumption \ref{ass:vref} holds, with $V_{ref} = 48$ V). Next, at time $t_4\approx 15$ s, we set $\Delta V_i(t_{4}) = 0$, $i =
1,2,3$ and enable the secondary control layer. As expected, the three output currents converge to the same value (see $I_{t}$ in Figure \ref{fig:experiments}). Furthermore, the fulfillment of condition \eqref{eq:cs_vb} guarantees asymptotic voltage balancing (see $V_{av}$ in Figure \ref{fig:experiments}). Finally, at time $t_5\approx 25$ s we decrease the load of DGU 3, causing an increment in the corresponding load current. As a consequence, the value of $\bar {I}_{t} = \langle\mathbf{{I}_L^*}\rangle$ increases as well. Also in this case, the total load current is equally shared among DGUs while $V_{av}$ does not deviate from 48 V.

As for the voltage discrepancies induced by secondary controllers, they are within 5$\%$ of the nominal voltage $V_{ref} = 48$ V. This fulfills, for instance, the standards for DC mGs used as uninterruptible power supply systems for telecommunication applications \cite{schulz2007etsi}.

\begin{figure}
	\centering
	\includegraphics[scale=0.65]{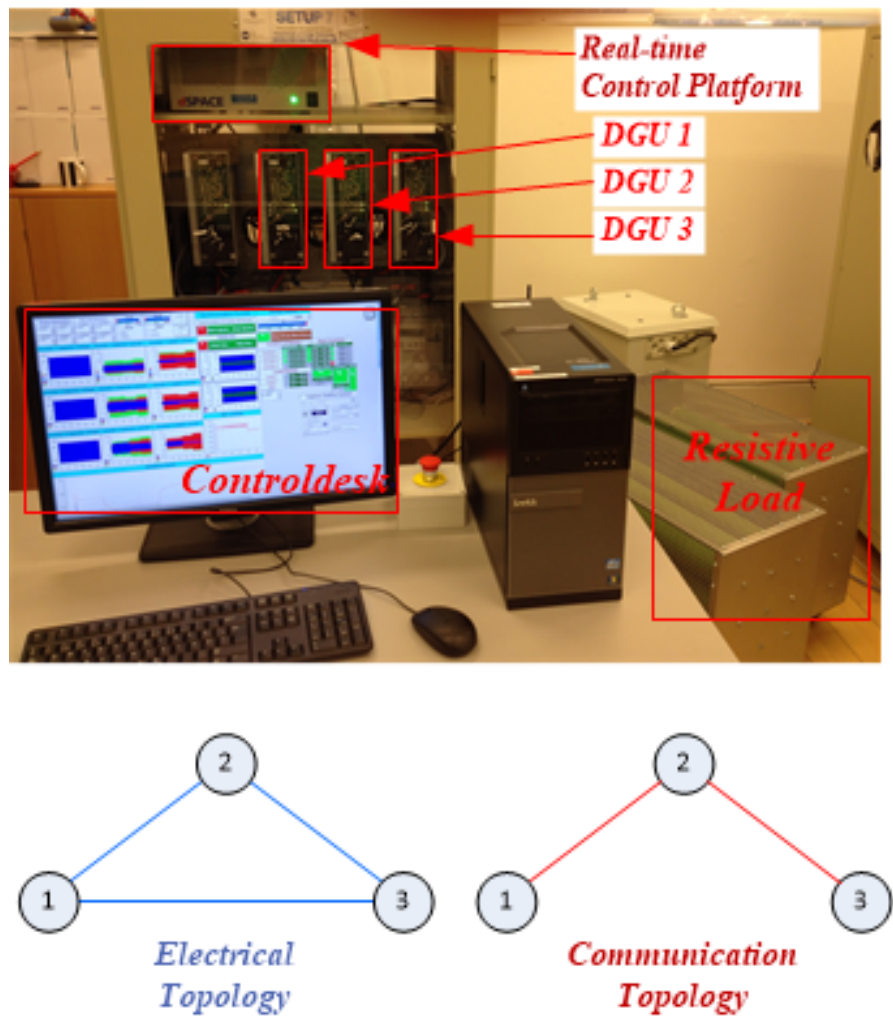}
	\caption{Experimental validation: mG setup (top) and topologies of the electrical and communication graphs (bottom).}
	\label{fig:experimental_setup}
\end{figure}
\begin{figure}
	\centering
	\hspace{-4mm}
	\includegraphics[scale=0.4]{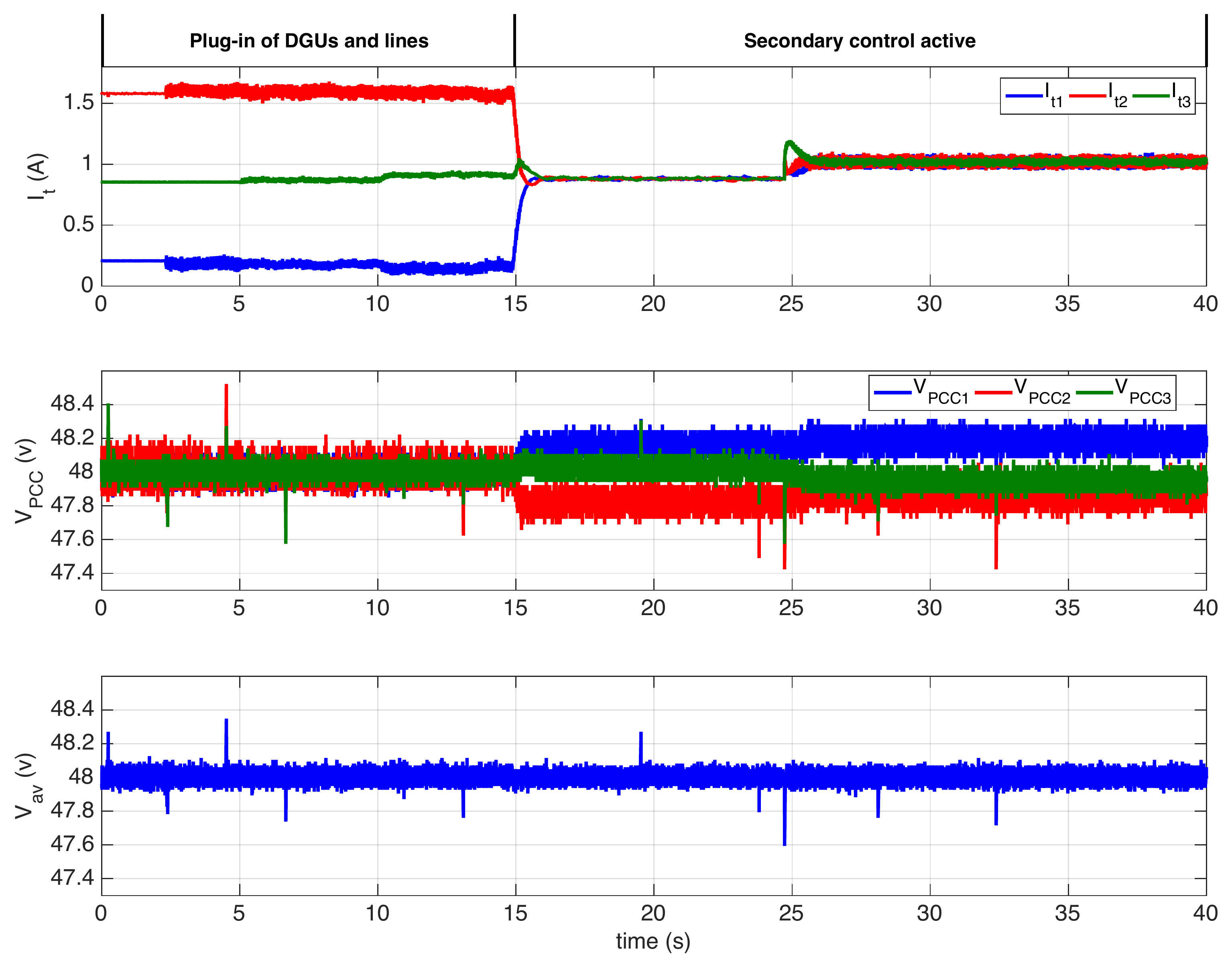}
	\caption{Experimental results for the mG in Figure \ref{fig:experimental_setup}. In the time interval from $2.5$ s to $15$ s, DGUs are connected together and primary PnP voltage regulators are enabled. From time $15$ s onwards, also the secondary control layer is active. At time $25$ s, the local load of DGU 3 is halved.}
	\label{fig:experiments}
\end{figure}

\section{Conclusions}
In this paper, a secondary consensus-based control layer for
current sharing and voltage balancing in DC mGs has been
presented. Under the assumption that DGUs are equipped with decentralized primary controllers that guarantee voltage stability, we proved stability of the hierarchical control scheme. Moreover, we presented a method for designing secondary
controllers in a PnP fashion for handling plugging -in/-out of DGUs. 

Future developments will study the impact of non-idealities (such as transmission delays, data quantization and packet drops) on the performance of closed-loop mGs. The analysis of networked control systems has received considerable attention in the recent past \cite{Hespanha2007} and our goal will be to reappraise methods and tools developed within this area in the context of microgrids. Another fundamental challenge that must be addressed when higher-level networked schemes are included in the mG control architecture is security against cyberattacks \cite{Pasqualetti2013}.

     \clearpage

     \appendix
    \section{Proof of Theorem \ref{thm:GES_H1}}
\label{appendix_1}
We introduce a preliminary Lemma, partly taken from Theorem 19 in \cite{callier2012linear}.
\begin{lem}
	\label{lem:stability}
	For $\mathcal{A}\in\mathbb{R}^{n\times n}$, let $\mathcal{V}$ and $\mathcal{W}$ be $\mathcal{A}$-invariant subspaces of $\mathbb{R}^n$ such that dim$(\mathcal{V}) = k$, dim$(\mathcal{W}) = n-k$ and $\mathbb{R}^n = \mathcal{V}\oplus\mathcal{W}$. Then:
	\begin{enumerate}[I)]
		\item \label{thm:stability_1} there is a matrix $T\in\mathbb{R}^{n\times n}$ such that $A  = T^{-1}\mathcal{A} T$ has the block-diagonal structure 
		\begin{equation}
		\label{eq:repres_matrix}
		A = \left[ 
		\renewcommand\arraystretch{1.4}
		\begin{array}{c|cc}
		A_{11} &  \mathbf{0}  \\
		\hline
		\mathbf{0}  & A_{22}\\
		\end{array}\right],
		\end{equation}
		with $A_{11}\in\mathbb{R}^{k\times k}$ and $A_{22}\in\mathbb{R}^{(n-k)\times (n-k)}$.
		In particular, if $\{b_1, \dots, b_k\}$ and $\{b_{k+1},\dots,b_{n}\}$ are basis for $\mathcal{V}$ and $\mathcal{W}$, respectively, the transformation matrix $T$ has the block structure
		\begin{equation}
		\label{eq:matrix_T}
		T = [b_1 | \cdots |b_k|b_{k+1}|\cdots|b_{n}].
		\end{equation}
		Therefore, if $x\in\mathcal{V}$, then $T^{-1}x=[\tilde{x}_1^T \text{ } |	\text{ } 0]^T$, with $\tilde{x}_1 \in\Rset^k$. Similarly, if $x\in\mathcal{W}$, then $T^{-1}x=[ 0 \text{ } |	\text{ } \tilde{x}_2^T]^T$, with $\tilde{x}_2\in\Rset^{n-k}$.
		\item\label{thm:stability_2} The origin of $\dot x = \mathcal{A} x$ is GES on $\mathcal{V}$ if and only if the origin of  $\dot{\tilde{x}}_1 = A_{11}\tilde{x}_1$ is GES. Moreover, parameters $\kappa, \eta>0$ verifying $\lVert\tilde{x}_1(t) \rVert\leq\kappa e^{-\eta t}\lVert\tilde{x}_1(0) \rVert$, also guarantee $\lVert P_{\mathcal{V}}x(t) \rVert\leq\kappa e^{-\eta t}\lVert P_{\mathcal{V}}x(0) \rVert$.
	\end{enumerate}
\end{lem}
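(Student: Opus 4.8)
The plan is to prove the two items in the order stated; the first is a textbook consequence of invariance and the second is a coordinate-change argument, so the write-up is mostly a matter of bookkeeping.

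For item \ref{thm:stability_1}, I would first note that, because the sum $\mathcal{V}\oplus\mathcal{W}$ is direct and equals $\mathbb{R}^n$, the concatenation of a basis $\{b_1,\dots,b_k\}$ of $\mathcal{V}$ with a basis $\{b_{k+1},\dots,b_n\}$ of $\mathcal{W}$ is a basis of $\mathbb{R}^n$, so the matrix $T$ in \eqref{eq:matrix_T} is invertible. Then, for $j\le k$, $\mathcal{A}$-invariance of $\mathcal{V}$ gives $\mathcal{A}b_j\in\mathcal{V}=\mathrm{span}\{b_1,\dots,b_k\}$, hence $T^{-1}\mathcal{A}b_j$ — which is precisely the $j$-th column of $A=T^{-1}\mathcal{A}T$ — has zero entries in the positions $k+1,\dots,n$; symmetrically, for $j>k$, $\mathcal{A}$-invariance of $\mathcal{W}$ forces the $j$-th column of $A$ to vanish in the positions $1,\dots,k$. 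This gives the block-diagonal structure \eqref{eq:repres_matrix} with $A_{11}\in\mathbb{R}^{k\times k}$, $A_{22}\in\mathbb{R}^{(n-k)\times(n-k)}$. The remaining claim is immediate: if $x\in\mathcal{V}$ then $x=\sum_{j=1}^{k}c_jb_j$, so $T^{-1}x=[\,\tilde x_1^T\ \vert\ 0\,]^T$ with $\tilde x_1=[c_1,\dots,c_k]^T$, and symmetrically for $x\in\mathcal{W}$.

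For item \ref{thm:stability_2}, since item \ref{thm:stability_1} holds for \emph{any} bases, I would further take $\{b_1,\dots,b_k\}$ to be an orthonormal basis of $\mathcal{V}$ (Gram--Schmidt). In the coordinates $\tilde x=T^{-1}x$ the system $\dot x=\mathcal{A}x$ becomes $\dot{\tilde x}=A\tilde x$, which by \eqref{eq:repres_matrix} splits into the decoupled subsystems $\dot{\tilde x}_1=A_{11}\tilde x_1$ and $\dot{\tilde x}_2=A_{22}\tilde x_2$. Writing $x(t)=\sum_{j=1}^{n}\tilde x_j(t)b_j$, the component of $x(t)$ lying in $\mathcal{V}$ along $\mathcal{W}$ is $\sum_{j=1}^{k}\tilde x_j(t)b_j$, i.e. $P_{\mathcal{V}}x(t)=\sum_{j=1}^{k}\tilde x_j(t)b_j$, and orthonormality of $b_1,\dots,b_k$ yields the key identity $\lVert P_{\mathcal{V}}x(t)\rVert=\lVert\tilde x_1(t)\rVert$ for every $t\ge 0$ and every $x(0)\in\mathbb{R}^n$. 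The equivalence then drops out: if $\dot{\tilde x}_1=A_{11}\tilde x_1$ is GES with constants $\kappa,\eta$, then $\lVert P_{\mathcal{V}}x(t)\rVert=\lVert\tilde x_1(t)\rVert\le\kappa e^{-\eta t}\lVert\tilde x_1(0)\rVert=\kappa e^{-\eta t}\lVert P_{\mathcal{V}}x(0)\rVert$; conversely, given any $\tilde x_1(0)\in\mathbb{R}^k$, apply GES on $\mathcal{V}$ to the initial state $x(0)=T[\,\tilde x_1(0)^T,\,0\,]^T\in\mathcal{V}$, for which $\lVert P_{\mathcal{V}}x(0)\rVert=\lVert\tilde x_1(0)\rVert$, to recover the GES bound on $\lVert\tilde x_1(t)\rVert$. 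I do not expect a real obstacle; the only subtle point is to read $P_{\mathcal{V}}$ as the projection onto $\mathcal{V}$ associated with the splitting $\mathbb{R}^n=\mathcal{V}\oplus\mathcal{W}$ and to use an orthonormal basis of $\mathcal{V}$ — this is exactly what makes the constants $\kappa,\eta$ transfer verbatim rather than up to the condition number of $T$ (in the application $\mathcal{W}=\mathcal{V}^\perp$, so this projection is the orthogonal one and the issue is moot).
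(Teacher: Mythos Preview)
Your argument is correct. For item~\ref{thm:stability_1} you spell out the standard invariance argument that the paper simply cites from \cite{callier2012linear}, so there is no real difference there.

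For item~\ref{thm:stability_2} your route is slightly different from the paper's and, in fact, cleaner. The paper writes $P_{\mathcal{V}}(x)=T[\tilde x_1^T\,|\,0]^T$ and then bounds $\lVert P_{\mathcal{V}}x(t)\rVert\le\lVert T\rVert\,\lVert\tilde x_1(t)\rVert$ and $\lVert\tilde x_1(0)\rVert\le\lVert T^{-1}\rVert\,\lVert P_{\mathcal{V}}x(0)\rVert$; chaining these introduces the condition number $\lVert T\rVert\,\lVert T^{-1}\rVert$ in front of $\kappa$, which the paper then silently drops in the final displayed inequality. Your idea of taking $\{b_1,\dots,b_k\}$ orthonormal yields the exact identity $\lVert P_{\mathcal{V}}x(t)\rVert=\lVert\tilde x_1(t)\rVert$, so the constants $\kappa,\eta$ genuinely transfer verbatim as the lemma claims, and you also make the converse direction explicit. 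Your caveat about reading $P_{\mathcal{V}}$ as the projection along $\mathcal{W}$ is well placed: the paper's proof implicitly does the same, and in the two applications (Theorems~\ref{thm:GES_H1} and~\ref{thm:stability_1st_order}) one has $\mathcal{W}=\mathcal{V}^\perp$, so the oblique and orthogonal projections coincide and nothing is lost.
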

\begin{proof}
	For the proof of point \ref{thm:stability_1}, we defer the reader to the proof of Theorem 19 in \cite{callier2012linear}.
	
	The proof of point \ref{thm:stability_2} directly follows from the block-diagonal structure of matrix $A$ in \eqref{eq:repres_matrix}. Indeed,
	\begin{equation*}
	\dot{x} = \mathcal{A} x \Leftrightarrow \dot{\tilde x} =A \left[ \begin{array}{c}
	\tilde{x}_1 \\
	\hline
	\tilde{x}_2
	\end{array}\right]\Leftrightarrow
	\quad\left\lbrace \begin{aligned}
	\dot{\tilde{x}}_1  &= A_{11}\tilde{x}_1\\
	\dot{\tilde{x}}_2 &= A_{22}\tilde{x}_2,\\  
	\end{aligned}
	\right.
	\end{equation*}
	i.e. $A_{11}$ is the matrix representation of the map $\mathcal{A}(\mathcal{V}|\mathcal{V})$. In other words, studying the stability of $\mathcal{A}$ on $\mathcal{V}$ is equivalent to study the stability of $A_{11}$. 
	Moreover, by construction, $P_{\mathcal{V}}(x) = T[\tilde{x}_1^T \text{ } |	\text{ } 0]^T$. Then,
	\begin{equation}
	\label{eq:theroem_part2_1}
	\lVert P_{\mathcal{V}}(x(t))\rVert\leq \lVert T \rVert \lVert\tilde{x}_1(t) \rVert\leq \lVert T \rVert\kappa e^{-\eta t}\lVert\tilde{x}_1(0) \rVert.
	\end{equation}
	Since $\lVert\tilde{x}_1(0) \rVert \leq \lVert T^{-1}\rVert \lVert P_{\mathcal{V}}(x(0)) \rVert$, inequality \eqref{eq:theroem_part2_1} becomes
	\begin{equation*}
	\label{eq:theroem_part2_2}
	\lVert P_{\mathcal{V}}(x) \rVert\leq \lVert T \rVert\kappa e^{-\eta t}\lVert\tilde{x}_1(0) \rVert\leq \kappa e^{-\eta t}\lVert P_{\mathcal{V}}(x(0)) \rVert.
	\end{equation*}
\end{proof}
\paragraph{Proof of Theorem \ref{thm:GES_H1}.}
\label{app:GES_H1}
Points (\ref{prop:zero_row_sum}) and (\ref{prop:range}) of Proposition \ref{prop_2} show that subspaces $H^1$ and $H_{\perp}^1$ are $\mathbb{Q}$-invariant. Moreover, $\mathbb{R}^N = H^1 \oplus H_{\perp}^1$. It follows that Lemma \ref{lem:stability} can be applied with $\mathcal{V} = H^1$ and $\mathcal{W} = H^1_{\perp}$. In particular, by means of point \ref{thm:stability_1}, we know that there exists a transformation matrix $T\in\mathbb{R}^{N\times N}$ such that the linear map $\mathbb{Q}$ can be represented as in \eqref{eq:repres_matrix}. Denoting with $B^1 = \{b_1, \dots, b_{N-1}\}$ and $B_{\perp}^1=\mathbf{1}_N$ the basis for $H^1$ and $H_{\perp}^1$, respectively, from \eqref{eq:matrix_T}, we have
\begin{equation*}
T = [b_1 | \cdots |b_{N-1}|\mathbf{1}_N].
\end{equation*}
Matrix $Q = T^{-1}\mathbb{Q} T $ is given by 
\begin{equation}
\label{eq:ex_A}
\begin{aligned}
Q &=
\begin{bmatrix}
Q_{11} & \mathbf{0}\\ 
\mathbf{0} & q_{22} 
\end{bmatrix}
\end{aligned}
\end{equation}
where $Q_{11}\in\mathbb{R}^{(N-1)\times (N-1)}$. Moreover, scalar $q_{22} = 0$ since, by construction, it represents the map $\mathbb{Q}(H_{\perp}^1|H_{\perp}^1)$ (see Proposition \ref{prop_2}-(\ref{prop:zero_row_sum})). We notice that the representations of $\mathbf{\widehat{\Delta V}}$ and $\mathbf{\overline{\Delta V}}$ with respect to the basis $\mathcal{B}$ are  $\tilde v_1 = T^{-1}\mathbf{\widehat{\Delta V}} = [\xi_1 ,\dots, \xi_{N-1},0]^T$ and $\tilde v_{\perp} = T^{-1}\mathbf{\overline{\Delta V}} = [0 ,\dots, 0,\xi_N]^T$, respectively.
Now we prove that the origin of
\begin{equation}
\label{eq:origin_ideal}
\dot{\tilde{v}}_1 = -Q_{11}\tilde{v}_1
\end{equation}
is GES. Since $Q$ and $\Qset$ are similar matrices, they have the same
eigenvalues. Therefore, by exploiting points (\ref{prop:diag_and_psd}) and (\ref{prop:zero_eig}) of Proposition
\ref{prop_2}, one has that all the eigenvalues of $Q_{11}$
are strictly positive. This proves that \eqref{eq:origin_ideal} is GES
and, as shown in \cite{callier2012linear}, the convergence rate is $-\lambda$, where $\lambda$ is the minimal eigenvalue of $Q_{11}$.
The remainder of the proof follows directly from point \ref{thm:stability_2} of Lemma \ref{lem:stability}. \qed
    \newpage
    \section{Proof of Theorem \ref{thm:stability_1st_order}}
\label{appendix_2}
We first present two Propositions which provide preliminary results that will be used to prove Theorem \ref{thm:stability_1st_order}.
\begin{prp}
	\label{prop:h1_h1}
	Subspaces $H^1\times H^1$ and $H^1_{\perp}\times H^1_{\perp}$ are $\mathcal{Q}$-invariant.
\end{prp}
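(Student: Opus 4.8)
The plan is a direct block-wise computation exploiting the structure of $\mathcal{Q}$ together with the spectral properties of $\mathbb{Q}$ established in Proposition \ref{prop_2}. First I would write out the action of $\mathcal{Q}$ on a generic vector $[\mathbf{a}^T\ \mathbf{b}^T]^T$, obtaining
$\mathcal{Q}[\mathbf{a}^T\ \mathbf{b}^T]^T = [\,(-\mathbb{Q}\mathbf{b})^T\ \ (\Omega\mathbf{a}-\Omega\mathbf{b})^T\,]^T$. Since $\Omega = \omega_c I$ is a scalar multiple of the identity (see the definition following \eqref{eq:state2_1st_order}), the second block is simply $\omega_c(\mathbf{a}-\mathbf{b})$, which lies in whichever of the subspaces $H^1$ or $H^1_{\perp}$ the components $\mathbf{a},\mathbf{b}$ belong to, both being linear subspaces closed under subtraction and scalar multiplication.

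For the $\mathcal{Q}$-invariance of $H^1\times H^1$, I would take $\mathbf{a},\mathbf{b}\in H^1$ and note that $-\mathbb{Q}\mathbf{b}\in\mathrm{Range}(\mathbb{Q}) = H^1$ by Proposition \ref{prop_2}-(\ref{prop:range}); combined with the observation on the second block, this gives $\mathcal{Q}[\mathbf{a}^T\ \mathbf{b}^T]^T\in H^1\times H^1$. For $H^1_{\perp}\times H^1_{\perp}$, I would take $\mathbf{a}=\alpha\mathbf{1}_N$, $\mathbf{b}=\beta\mathbf{1}_N$ and use $\mathrm{Ker}(\mathbb{Q}) = H^1_{\perp}$ from Proposition \ref{prop_2}-(\ref{prop:zero_row_sum}), so that $-\mathbb{Q}\mathbf{b} = -\beta\,\mathbb{Q}\mathbf{1}_N = \mathbf{0}_N\in H^1_{\perp}$, while the second block is $\omega_c(\alpha-\beta)\mathbf{1}_N\in H^1_{\perp}$. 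Hence $\mathcal{Q}[\mathbf{a}^T\ \mathbf{b}^T]^T\in H^1_{\perp}\times H^1_{\perp}$.

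There is no substantial obstacle here: the statement follows purely from bookkeeping of the block structure and the kernel/range characterization of $\mathbb{Q}$. The only point requiring minor care is that $\Omega$ must be proportional to the identity — with a general diagonal $\Omega$ the term $\Omega\mathbf{a}-\Omega\mathbf{b}$ would not in general remain in $H^1_{\perp}$ — so the hypothesis $\Omega = \omega_c I$ is genuinely used. This decomposition is precisely what allows the subsequent stability analysis of \eqref{eq:subsys_H1_free} on $H^1\times H^1$ in the proof of Theorem \ref{thm:stability_1st_order}.
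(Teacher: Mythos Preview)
Your proposal is correct and follows essentially the same block-wise computation as the paper: both compute $\mathcal{Q}[\mathbf{a}^T\ \mathbf{b}^T]^T = [\,(-\mathbb{Q}\mathbf{b})^T\ \ (\Omega(\mathbf{a}-\mathbf{b}))^T\,]^T$ and then check membership of each block. The only minor differences are that the paper invokes Proposition~\ref{prop_2}-(\ref{prop:Q_invertible}) for the $H^1\times H^1$ case (where your use of (\ref{prop:range}) is arguably more to the point, since only $\mathrm{Range}(\mathbb{Q})\subseteq H^1$ is needed), and that you make explicit the role of $\Omega=\omega_c I$ in keeping the second block inside the subspace, a point the paper leaves implicit.
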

\begin{proof}
	We first show that, for any vector $\mathbf{\hat v} = [\mathbf{\hat{v}_1}^T \mathbf{\hat{v}_2}^T]^T$, it holds $\mathcal{Q}\mathbf{\hat v}\in H^1\times H^1$. Indeed,
	\begin{equation*}
	\qquad\mathcal{Q} \mathbf{\hat v} =
	\begin{bmatrix}
	\mathbf{0}_{N\times N} & -\Qset\\
	\Omega & -\Omega 
	\end{bmatrix}  \begin{bmatrix}
	\mathbf{\hat{v}_1}\\
	\mathbf{\hat{v}_2}
	\end{bmatrix}=\begin{bmatrix}
	-\mathbb{Q}\mathbf{\hat{v}_2}\\
	\Omega(\mathbf{\hat{v}_1}-\mathbf{\hat {v}_2})
	\end{bmatrix},
	\end{equation*}
	and, from Proposition \ref{prop_2}-(\ref{prop:Q_invertible}), the rightmost vector belongs to $H^1\times H^1$.\\
	Similarly, for any vector $\mathbf{\bar v} = [\mathbf{\bar{v}_1}^T \mathbf{\bar{v}_2}^T]^T\in H^1_{\perp}\times H^1_{\perp}$, we have that
	\begin{equation*}
	\qquad\mathcal{Q} \mathbf{\hat v} =
	\begin{bmatrix}
	\mathbf{0}_{N\times N}& -\Qset\\
	\Omega & -\Omega 
	\end{bmatrix}  \begin{bmatrix}
	\mathbf{\bar{v}_1}\\
	\mathbf{\bar{v}_2}
	\end{bmatrix}=\begin{bmatrix}
	\mathbf{0}_N\\
	\Omega(\mathbf{\bar{v}_1}-\mathbf{\bar {v}_2})
	\end{bmatrix}
	\end{equation*}
	and then $\mathcal{Q}\mathbf{\bar v}\in H_{\perp}^1\times H^1_{\perp}$.
\end{proof}
\begin{prp}
	\label{prop:eig_Qcall}
	Matrix $\mathcal{Q}$ has two eigenvalues equal to zero and $-\omega_c$, respectively. All other eigenvalues have strictly negative real part.
\end{prp}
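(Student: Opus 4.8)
The plan is to obtain the characteristic polynomial of $\mathcal{Q}$ in closed form by exploiting its $2\times 2$ block structure, and then to locate its roots using the spectrum of $\mathbb{Q}$ furnished by Proposition \ref{prop_2}.

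First I would write $\mathcal{Q}-\lambda I_{2N} = \begin{bmatrix} -\lambda I_N & -\mathbb{Q}\\ \omega_c I_N & -(\omega_c+\lambda)I_N\end{bmatrix}$ and observe that, since $\Omega=\omega_c I_N$ is a scalar matrix, all four blocks are polynomials in the single matrix $\mathbb{Q}$ and hence commute pairwise. The block-determinant identity for commuting blocks therefore applies (as a polynomial identity in $\lambda$), yielding
\begin{equation*}
\det(\mathcal{Q}-\lambda I_{2N}) = \det\big((-\lambda I_N)(-(\omega_c+\lambda)I_N)-(-\mathbb{Q})(\omega_c I_N)\big) = \det\big((\lambda^2+\omega_c\lambda)I_N + \omega_c\mathbb{Q}\big).
\end{equation*}
Being a polynomial in $\mathbb{Q}$, this determinant factors over the eigenvalues $\mu_1,\dots,\mu_N$ of $\mathbb{Q}$, counted with algebraic multiplicity:
\begin{equation*}
\det(\mathcal{Q}-\lambda I_{2N}) = \prod_{i=1}^{N}\big(\lambda^2+\omega_c\lambda+\omega_c\mu_i\big).
\end{equation*}
So the $2N$ eigenvalues of $\mathcal{Q}$ are exactly the roots of the $N$ quadratics $\lambda^2+\omega_c\lambda+\omega_c\mu_i=0$.

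Next I would invoke Proposition \ref{prop_2}-(\ref{prop:diag_and_psd}) and -(\ref{prop:zero_eig}): under Assumption \ref{ass:D_identity} or \ref{ass:same_topology}, $\mathbb{Q}$ has a simple zero eigenvalue, say $\mu_1=0$, and $\mu_i>0$ for $i=2,\dots,N$. The factor $i=1$ is $\lambda^2+\omega_c\lambda=\lambda(\lambda+\omega_c)$, contributing the eigenvalues $0$ and $-\omega_c$; moreover neither $0$ nor $-\omega_c$ can be a root of any other factor, since evaluating $\lambda^2+\omega_c\lambda+\omega_c\mu_i$ at $\lambda=0$ or at $\lambda=-\omega_c$ gives $\omega_c\mu_i\neq 0$ for $i\geq 2$, so $0$ and $-\omega_c$ are in fact simple eigenvalues of $\mathcal{Q}$. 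For $i\geq 2$, the quadratic $\lambda^2+\omega_c\lambda+\omega_c\mu_i$ has all coefficients strictly positive, hence by Routh--Hurwitz (equivalently, because its two roots have sum $-\omega_c<0$ and product $\omega_c\mu_i>0$) both roots lie strictly in the open left half-plane. This establishes the statement.

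The argument is mostly bookkeeping and I do not anticipate a genuine obstacle; the one point to be careful about is the block-determinant step, whose validity rests precisely on $\Omega$ being a scalar multiple of the identity (so that all blocks commute). If one prefers to avoid that identity, the same factorization follows by noting that each eigenvector $v_i$ of $\mathbb{Q}$ spans, together with the corresponding direction in the second block, a $2$-dimensional $\mathcal{Q}$-invariant subspace on which $\mathcal{Q}$ acts through the $2\times 2$ matrix $\begin{bmatrix}0 & -\mu_i\\ \omega_c & -\omega_c\end{bmatrix}$; when $\mathbb{Q}$ is diagonalizable, which holds here by Proposition \ref{prop_2}-(\ref{prop:diag_and_psd}), these $N$ subspaces are independent and span $\mathbb{R}^{2N}$, so the spectrum of $\mathcal{Q}$ is the union of the spectra of those blocks.
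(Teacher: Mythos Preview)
Your argument is correct and reaches the same quadratic relation $\lambda^2+\omega_c\lambda+\omega_c\mu_i=0$ that the paper uses, but you arrive there by a slightly different mechanism. The paper writes the eigenvector equation $\mathcal{Q}[\mathbf{\Delta V}^T\ \mathbf{V}^T]^T=\lambda[\mathbf{\Delta V}^T\ \mathbf{V}^T]^T$, eliminates $\mathbf{\Delta V}$ from the second block row, and obtains $-\mathbb{Q}\mathbf{V}=\frac{\lambda(\lambda+\omega_c)}{\omega_c}\mathbf{V}$, so that $\hat\lambda=\lambda(\lambda+\omega_c)/\omega_c$ must be an eigenvalue of $-\mathbb{Q}$; it then invokes Proposition~\ref{prop_2} exactly as you do. Your route via the commuting-block determinant identity is a little cleaner in that it delivers the full characteristic polynomial at once, so multiplicities are handled automatically; in particular, your explicit check that neither $0$ nor $-\omega_c$ can be a root of the factors with $\mu_i>0$ is a nice addition that the paper leaves implicit. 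The alternative you sketch at the end---pulling back $\mathcal{Q}$ to the $2\times 2$ block $\begin{bmatrix}0 & -\mu_i\\ \omega_c & -\omega_c\end{bmatrix}$ on each eigenspace of $\mathbb{Q}$---is essentially the paper's eigenvector substitution recast in invariant-subspace language. One small remark: the positivity of the $\mu_i$ (and hence both your Routh--Hurwitz step and the paper's) relies on Assumption~\ref{ass:D_identity} or~\ref{ass:same_topology}, which the statement of the proposition does not spell out; you correctly flag this dependence.
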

\begin{proof}
	By definition, vector $[\mathbf{{\Delta V}}^T \mathbf{{V}}^T]^T\neq\mathbf{0}_{2N}$ is an eigenvector of $\mathcal{Q}$, if there exists $\lambda_i$ such that 
	\begin{equation}
	\label{eq:def_eig}
	\begin{bmatrix}
	\mathbf{0}_{N\times N} & -\Qset\\
	\Omega & -\Omega 
	\end{bmatrix}\begin{bmatrix}
	\mathbf{{\Delta V}}\\
	\mathbf{{V}}                 \end{bmatrix} = \lambda_i \begin{bmatrix}
	\mathbf{{\Delta V}}\\
	\mathbf{{V}}                 \end{bmatrix}.
	\end{equation}
	From \eqref{eq:def_eig}, one gets:
	\begin{subequations}
		\begin{align}
		\label{eq:def_eig_1A}
		-\mathbb{Q}\mathbf{{V}}&= \lambda_i\mathbf{{\Delta V}}\\
		\label{eq:def_eig_1B}
		\omega_c (\mathbf{{\Delta V}}-\mathbf{{V}}) &= \lambda_i {\mathbf{V}}.\end{align}
	\end{subequations}
	By isolating $\mathbf{{\Delta V}}$ in \eqref{eq:def_eig_1B} and substituting it in \eqref{eq:def_eig_1A}, we obtain
	\begin{equation}
	\label{eq:lambda_hat}
	-\mathbb{Q}{\mathbf{ V}} =\underbrace{ \frac{\lambda_i(\lambda_i+\omega_c)}{\omega_c}}_{\widehat \lambda_i}\mathbf{ V}.
	\end{equation}
	where $\hat\lambda_i$ are, by construction, eigenvalues of $-\mathbb{Q}$. From points (\ref{prop:diag_and_psd}) and (\ref{prop:zero_eig}) of Proposition \ref{prop_2}, we have 
	\begin{subequations}
		\begin{align}
		\label{eq:def_eig_hat_lambda_1}
		\hat\lambda_N &= 0\\
		\label{eq:def_eig_hat_lambda_2}
		\hat\lambda_i &=-\gamma_i,\hspace{2mm}\gamma_i>0\hspace{7mm}i = 1,\dots,N-1.\end{align}
	\end{subequations}
	From \eqref{eq:def_eig_hat_lambda_1} and \eqref{eq:lambda_hat}, one has
	\begin{equation*}
	\lambda_i(\lambda_i+\omega_c)=0
	\end{equation*}
	and hence $\mathcal{Q}$ has a single eigenvalue equal to zero and an eigenvalue equal to $-\omega_c$.
	By substituting in \eqref{eq:def_eig_hat_lambda_2} the expression of $\hat\lambda_i$ in \eqref{eq:lambda_hat}, one gets:
	\begin{equation}
	\label{eq:eig_hur}
	\frac{\lambda_i^2}{\omega_c} + \lambda_i + \gamma_i = 0\hspace{5mm}i =1,\dots,N-1.
	\end{equation}
	Since all the coefficients of the polynomial in \eqref{eq:eig_hur} are strictly positive, we can conclude that matrix $\mathcal{Q}$ has $2(N-1)$ eigenvalues with Re$(\lambda_i)<0$.
\end{proof}
\paragraph{Proof of Theorem \ref{thm:stability_1st_order}.}
Similarly to the proof of Theorem \ref{thm:GES_H1}, we can exploit Lemma \ref{lem:stability} with $\mathcal{V}= H^1\times H^1$ and $\mathcal{W} = H^1_{\perp}\times H^1_{\perp}$. In fact, we know that (\textit{i}) subspaces $H^1\times H^1$ and $H^1_{\perp}\times H^1_{\perp}$ are $\mathcal{Q}$-invariant (see Proposition \ref{prop:h1_h1}) and (\textit{ii}) $\Rset^{N}\times\Rset^N = \mathcal{V} \oplus\mathcal{W}$. Hence, there exists a transformation matrix $T\in\Rset^{2N\times 2N}$ such that the linear map $\mathcal{Q}$ has an equivalent block-diagonal representation of the form \eqref{eq:repres_matrix}, i.e. \vspace{3mm}
\begin{equation}
\label{eq:ex_B}
T^{-1}\mathcal{Q} T = Q =
\left[ 
\renewcommand\arraystretch{1.4}
\begin{array}{c|cc}
Q_{11} &  \mathbf{0}  \\
\hline
\mathbf{0}  & Q_{22}\\
\end{array}\right],
\end{equation}
with $Q_{11}\in\mathbb{R}^{2(N-1)\times 2(N-1)}$ and $Q_{22}\in\mathbb{R}^{2\times 2}$.
By construction, matrices $Q_{11}$ and $Q_{22}$ in \eqref{eq:ex_B}
represent the maps $\mathcal{Q}(\mathcal{V}|\mathcal{V})$ and $\mathcal{Q}(\mathcal{W}|\mathcal{W})$, respectively. In particular, in the light on the consideration made for system
\eqref{eq:subsys_perp}, we have that the eigenvalues of $Q_{22}$ are
zero and $-\omega_c$. Moreover, by construction, the eigenvalues of
$Q_{11}$ are the $2(N-1)$ eigenvalues of $\mathcal{Q}$ with strictly negative
real part (see Proposition \ref{prop:eig_Qcall}). \qed
    \newpage
    \section{On the eigenvalues of $\mathbb{Q} = \mathbb{L}D\mathbb{M}$}
\label{app:eig_Q}
In this appendix, we provide an example which shows that, by pre- and  post- multiplying a generic positive definite diagonal matrix by two positive semidefinite Laplacians (associated with graphs having different topologies), one can obtain a matrix with some negative eigenvalues. 

Let us consider the graphs depicted in Figures \ref{fig:G_1} and \ref{fig:G_2}. 
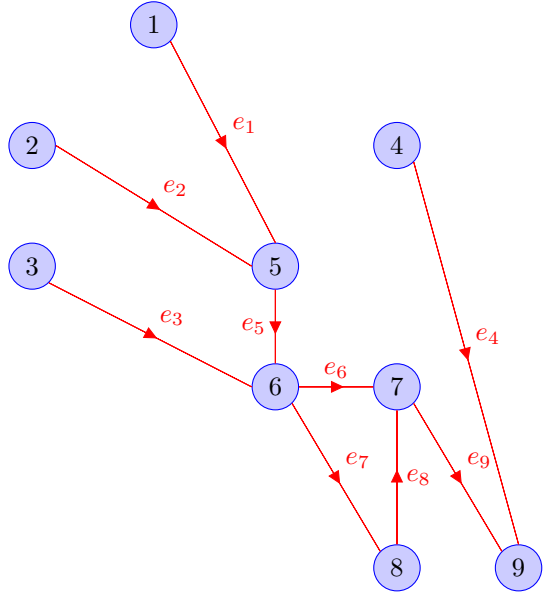
\begin{figure}[H]
	\centering
	\ctikzset{bipoles/length=1.2cm}
	\tikzstyle{every node}=[font=\normalsize]
	\begin{circuitikz}[american currents, scale=0.8]
		%Defining nodes (DGUs and loads)
		\draw (0,0) node(a) [circle, draw=blue, fill=blue!20] {$1$};
		\draw (-2,-2) node(b)  [circle, draw=blue, fill=blue!20] {$2$};
		\draw (-2,-4) node(c)  [circle, draw=blue, fill=blue!20] {$3$};
		\draw (4,-2) node(d)  [circle, draw=blue, fill=blue!20] {$4$};
		\draw (2,-4) node(e)  [circle, draw=blue, fill=blue!20] {$5$};
		\draw (2,-6) node(f)  [circle, draw=blue, fill=blue!20] {$6$};
		\draw (4,-6) node(g)  [circle, draw=blue, fill=blue!20] {$7$};
		\draw (4,-9) node(h)  [circle, draw=blue, fill=blue!20] {$8$};
		\draw (6,-9) node(i)  [circle, draw=blue, fill=blue!20] {$9$};
		%Defining edges
		\draw[red] (a.south east) to [short, i>=$e_1$, color=red] (e.north);
		\draw[red] (b.east) to [short, i>=$e_2$, color=red] (e.west);
		\draw[red] (c.south east) to [short, i>=$e_3$, color=red] (f.west);
		\draw[red] (d.south east) to [short, i>=$e_4$, color=red] (i.north);
		\draw[red] (f.east) to [short, i>=$e_6$, color=red] (g.west);
		\draw[red] (f.south east) to [short, i>=$e_7$, color=red] (h.north west);
		\draw[red] (f.north) to [short, i<=$e_5$, color=red] (e.south);
		\draw[red] (g.south) to [short, i<=$e_8$, color=red] (h.north);
		\draw[red] (g.south east) to [short, i>=$e_{9}$, color=red] (i.north west);
		%		\draw[black] (c.south) to [short, i>=$e_3$, color=black] (f.north);  	
		%		\draw[black] (f.east) to [short, i>=$e_6$, color=black] (g.west);
		%		\draw[black] (f.east) to [short, i>=$e_7$, color=black] (h.west); 
		%		\draw[black] (g.east) to [short, i<=$e_8$, color=black] (i.west);  
		%		\draw[black] (i.east) to [short, i<=$e_4$, color=black] (d.west);  		
	\end{circuitikz}
	\caption{Topology of $\mathcal{G}_1$.}
	\label{fig:G_1}
\end{figure}
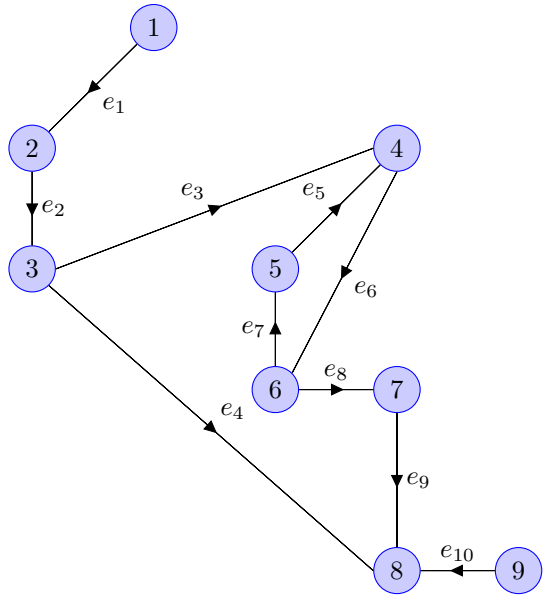
\begin{figure}[H]
	\centering
	\ctikzset{bipoles/length=1.2cm}
	\tikzstyle{every node}=[font=\normalsize]
	
	\begin{circuitikz}[american currents, scale=0.8]
		%Defining nodes (DGUs and loads)
		\draw (0,0) node(a) [circle, draw=blue, fill=blue!20] {$1$};
		\draw (-2,-2) node(b)  [circle, draw=blue, fill=blue!20] {$2$};
		\draw (-2,-4) node(c)  [circle, draw=blue, fill=blue!20] {$3$};
		\draw (4,-2) node(d)  [circle, draw=blue, fill=blue!20] {$4$};
		\draw (2,-4) node(e)  [circle, draw=blue, fill=blue!20] {$5$};
		\draw (2,-6) node(f)  [circle, draw=blue, fill=blue!20] {$6$};
		\draw (4,-6) node(g)  [circle, draw=blue, fill=blue!20] {$7$};
		\draw (4,-9) node(h)  [circle, draw=blue, fill=blue!20] {$8$};
		\draw (6,-9) node(i)  [circle, draw=blue, fill=blue!20] {$9$};
		
		%Defining edges
		\draw[black] (a.south west) to [short, i>=$e_1$, color=black] (b.north east);
		\draw[black] (b.south) to [short, i>=$e_2$, color=black] (c.north);
		\draw[black] (c.east) to [short, i>=$e_3$, color=black] (d.west);
		\draw[black] (c.south east) to [short, i>=$e_4$, color=black] (h.west);
		\draw[black] (d.south west) to [short, i_<=$e_5$, color=black] (e.north east);
		\draw[black] (d.south) to [short, i>=$e_6$, color=black] (f.north east);
		\draw[black] (f.east) to [short, i>=$e_8$, color=black] (g.west);
		\draw[black] (f.north) to [short, i>=$e_7$, color=black] (e.south);
		\draw[black] (g.south) to [short, i>=$e_9$, color=black] (h.north);
		\draw[black] (h.east) to [short, i<=$e_{10}$, color=black] (i.west);
		%		\draw[black] (c.south) to [short, i>=$e_3$, color=black] (f.north);  	
		%		\draw[black] (f.east) to [short, i>=$e_6$, color=black] (g.west);
		%		\draw[black] (f.east) to [short, i>=$e_7$, color=black] (h.west); 
		%		\draw[black] (g.east) to [short, i<=$e_8$, color=black] (i.west);  
		%		\draw[black] (i.east) to [short, i<=$e_4$, color=black] (d.west);  		
	\end{circuitikz}
	\caption{Topology of $\mathcal{G}_2$.}
	\label{fig:G_2}
\end{figure}
According to the assigned edge directions, the incidence matrices associated with $\mathcal{G}_1$ and $\mathcal{G}_2$ have the form
\begin{equation*}
B_1 = \matr{
	-1  &   0  &   0   &  0   &  0  &   0   &  0  &  0   &  0\\
	0  &  -1   &  0    & 0   &  0   &  0    & 0   &  0   &  0\\
	0  &   0   &-1  &   0  &   0  &   0  &   0  &   0  &   0\\
	0  &   0   &  0  &  -1   &  0  &   0  &   0   &  0   &  0\\
	1  &   1   &  0  &   0   & -1  &   0  &   0   &  0   &  0\\
	0  &   0   &  1  &   0   &  1  &  -1  &  -1   &  0   &  0\\
	0  &   0   &  0  &   0   &  0  &   1   &  0   &  1   & -1\\
	0  &   0   &  0  &   0  &   0  &   0  &   1  &  -1  &   0\\
	0  &   0   &  0   &  1   &  0  &  0  &   0   &  0  &   1}
\end{equation*}
and
\begin{equation*}
B_2 = \matr{
	-1&	0&	0&	0&	0&	0&	0&	0&	0&	0\\
	1&	-1&	0&	0&	0&	0&	0&	0&	0&	0\\
	0&	1&	-1&	-1&	0&	0&	0&	0&	0&	0\\
	0&	0&	1&	0&	1&	-1&	0&	0&	0&	0\\
	0&	0&	0&	0&	-1&	0&	1&	0&	0&	0\\
	0&	0&	0&	0&	0&	1&	-1&	-1&	0&	0\\
	0&	0&	0&	0&	0&	0&	0&	1&	-1&	0\\
	0&	0&	0&	1&	0&	0&	0&	0&	1&	1\\
	0&	0&	0&	0&	0&	0&	0&	0&	0&	-1
},
\end{equation*}
respectively. As regards the (positive) weights of the edges of $\mathcal{G}_1$ and $\mathcal{G}_2$, they are collected in the following diagonal matrices
\[W_1 = \mathrm{diag}[0.8842, 0.8676, 0.9167, 0.8456, 0.2113, 0.0038, 0.4139, 0.1918, 0.9815]\]
and
\small
\[W_2 = \mathrm{diag}[0.6074, 0.9785, 0.8275, 0.3907, 0.4405, 0.2719, 0.1663, 0.8310, 0.3885, 0.8292].\]
\normalsize
At this point, we have all the ingredients for computing the Laplacians of $\mathcal{G}_1$ and $\mathcal{G}_2$ (called $\mathbb{L}$ and $\mathbb{M}$, respectively) as
\begin{equation*}
\begin{aligned}
\mathbb{L} &= B_1 W_1 B_1^T = \\
&=\matr{    0.8842    &     0     &    0     &    0  & -0.8842      &   0    &  0   &      0      &   0\\
	0  &  0.8676   &      0     &    0  & -0.8676    &     0   &      0    &     0    &     0\\
	0     &    0  &  0.9167  &       0     &    0  & -0.9167    &     0    &     0   &      0\\
	0     &    0  &       0  & 0.8456    &     0   &      0     &    0     &    0  & -0.8456\\
	-0.8842  & -0.8676  &       0   &     0  &  1.9631  & -0.2113  &       0    &     0    &     0\\
	0    &     0  & -0.9167    &     0 &  -0.2113  &  1.5458  & -0.0038 &  -0.4139  &       0\\
	0    &     0   &      0     &    0   &      0 &  -0.0038  &  1.1771 &  -0.1918 &  -0.9815\\
	0    &     0   &      0     &    0   &      0 &  -0.4139  & -0.1918 &   0.6057 &        0\\
	0    &    0    &     0   &-0.8456    &     0   &      0  & -0.9815  &       0  &  1.8271 }
\end{aligned}
\normalsize
\end{equation*}
and
\begin{equation*}
\begin{aligned}
\mathbb{M} &= B_2 W_2 B_2^T = \\
& =\matr{  
	0.6074 &  -0.6074    &     0     &    0    &     0     &    0   &      0      &   0  &       0\\
	-0.6074  &  1.5859  & -0.9785      &   0   &      0    &     0  &       0      &   0    &     0\\
	0  & -0.9785  &  2.1967 &  -0.8275  &       0 &       0    &     0 &  -0.3907    &     0\\
	0   &      0  & -0.8275 &   1.5399  & -0.4405  & -0.2719  &       0    &     0   &      0\\
	0   &      0  &       0 &  -0.4405  &  0.6068 & -0.1663   &      0     &    0    &     0\\
	0   &      0  &       0 &  -0.2719  & -0.1663 &   1.2691  & -0.8310   &      0   &      0\\
	0   &      0  &       0 &        0  &       0 &  -0.8310  &  1.2195  & -0.3885   &      0\\
	0   &      0  & -0.3907 &        0  &       0 &        0  & -0.3885 &   1.6085 &  -0.8292\\
	0   &      0  &      0  &       0   &      0  &       0   &      0  & -0.8292  &  0.8292}.
\end{aligned}
\normalsize
\end{equation*}
Next, we pick the positive definite diagonal matrix 
\[D = \mathrm{diag}[0.5977, 0.4297, 0.4937, 0.0058, 0.4643, 0.0005, 0.6299, 0.8209, 0.3597]\]
and pre- and post- multiply it by $\mathbb{L}$ and $\mathbb{M}$, respectively, thus obtaining
\begin{equation*}
\begin{aligned}
\mathbb{Q} &=\mathbb{L}D\mathbb{M} = \\
&=\matr{  0.3210 &  -0.3210    &     0  & 0.1808 &  -0.2491  &  0.0683    &     0   &      0   &      0\\
	-0.2264  &  0.5912 &  -0.3647  &  0.1774 &  -0.2444  &  0.0670   &      0   &      0  &       0\\
	0  & -0.4428  &  0.9941 &  -0.3744 &   0.0001 &  -0.0006  &  0.0004  & -0.1768     &    0\\
	0     &    0  & -0.0040  &  0.0075  & -0.0021 &  -0.0013    &     0  &  0.2522 &  -0.2522\\
	-0.0946  & -0.2701 &   0.3647  & -0.4015   & 0.5531  & -0.1517  &  0.0001    &     0     &    0\\
	0  &  0.4428  & -0.8614 &   0.4175 &  -0.0597 &   0.0193   & 0.1284 &  -0.3688  &  0.2817\\
	0   &      0   & 0.0615 &   0.0000 &   0.0000 &  -0.6161   & 0.9653  & -0.2485  & -0.1622\\
	0   &      0  & -0.1943 &   0.0001  &  0.0000 &   0.1001 &  -0.3403  &  0.8467  & -0.4123\\
	0   &      0  &  0.0040 &  -0.0075 &   0.0021 &   0.5150 &  -0.7539  & -0.3048  &  0.5450}.
\end{aligned}
\normalsize
\end{equation*}
Now, if we compute the eigenvalues of $\mathbb{Q}$, we get 
\[
\begin{aligned}
\text{eig}(\mathbb{Q}) =\{&1.3891 + 0.1564\mathrm{i},1.3891 - 0.1564\mathrm{i},0.9210 + 0.0000\mathrm{i},0.5879 + 0.0000\mathrm{i},\\
&0.4509 + 0.0000\mathrm{i}, 0.1057 + 0.0000\mathrm{i}, \mbf{-0.0002 + 0.0039\mathrm{i}},\\
&\mathbf{-0.0002 - 0.0039\mathrm{i}}, 0.0000 + 0.0000\mathrm{i}\}
\end{aligned}\]
that proves the desired result. This example led us to introduce Assumptions \ref{ass:D_identity} and \ref{ass:same_topology}, which define conditions under which the eigenvalues of $\mathbb{Q}$ are always real and nonnegative.
	\newpage
     \section{Electrical parameters }
 \label{sec:AppElectrPar}
 In this appendix, we provide the electrical parameters of the simulation scenario described in Section \ref{sec:simulations}.
 
 %%%%%%%%%%%%%%%%%%%%%%%%%%%%%%%%%%%%%%%%%%%%%%%%%%%%%%%%%%%%%%%%%%%%%%%%%%%%%%%%%%%%%%%%%%%
 %%%%%%%%%%%%%%%%%%%%%%%%%%%%%%%%%%%%%%%%%%%%%%%%%%%%%%%%%%%%%%%%%%%%%%%%%%%%%%%%%%%%%%%%%%%
 
 \begin{table}[h]
 	\centering
 	\begin{tabular}{*{4}{c}}
 		\toprule
 		\multicolumn{4}{c}{\textbf{Converter parameters}} \\
 		\midrule
 		DGU & $R_t$ $(\Omega)$ & $L_{t}$ (mH) & $C_t$ (mF)\\
 		\midrule
 		$\subss{\hat{\Sigma}}{1}$ & 0.2  & 1.8 & 2.2 \\
 		$\subss{\hat{\Sigma}}{2}$& 0.3 & 2 & 1.9 \\
 		$\subss{\hat{\Sigma}}{3}$& 0.1 & 2.2 & 1.7 \\
 		$\subss{\hat{\Sigma}}{4}$& 0.5 & 3 & 2.5 \\
 		$\subss{\hat{\Sigma}}{5}$& 0.4 & 1.2 & 2 \\
 		$\subss{\hat{\Sigma}}{6}$& 0.6 & 2.5 & 3 \\
 		$\subss{\hat{\Sigma}}{7}$& 0.3 & 2 & 2.1 \\
 		\midrule
 		\multicolumn{4}{c}{\textbf{Power line parameters}}\\
 		\toprule
 		\multicolumn{2}{c}{Connected DGUs $(i,j)$} & Resistance $R_{ij}$ $(\Omega)$ & Inductance
 		$L_{ij} (\mu$H$)$ \\
 		\midrule
 		\multicolumn{2}{c}{$(1,2)$} & 0.05 & 2.1 \\
 		\multicolumn{2}{c}{$(1,3)$} & 0.07 & 1.8 \\
 		\multicolumn{2}{c}{$(3,4)$} & 0.06 & 1 \\
 		\multicolumn{2}{c}{$(2,4)$} & 0.04 & 2.3 \\
 		\multicolumn{2}{c}{$(4,5)$} & 0.08 & 1.8 \\
 		\multicolumn{2}{c}{$(1,6)$} & 0.1 & 2.5 \\
 		\multicolumn{2}{c}{$(5,6)$} & 0.08 & 3 \\
 		\multicolumn{2}{c}{$(4,7)$} & 0.09 & 2.3 \\
 		\multicolumn{2}{c}{$(7,5)$} & 0.05 & 2.4 \\
 		\bottomrule
 	\end{tabular}
 	\caption{Electrical parameters.}	
 	\label{tbl:Ch5_el_parameters}
 \end{table}
 
 %%%%%%%%%%%%%%%%%%%%%%%%%%%%%%%%%%%%%%%%%%%%%%%%%%%%%%%%%%%%%%%%%%%%%%%%%%%%%%%%%%%%%%%%
 %%%%%%%%%%%%%%%%%%%%%%%%%%%%%%%%%%%%%%%%%%%%%%%%%%%%%%%%%%%%%%%%%%%%%%%%%%%%%%%%%%%%%%%%%%%

     \clearpage

   \bibliographystyle{IEEEtran}
     \bibliography{PnP_consensus}

\end{document}